\documentclass[preprint,times,1p,11pt]{elsarticle}

\let\theoremstyle\undefined
\usepackage{acronym}
\usepackage[dvipsnames]{xcolor}
\usepackage{amsmath}
\usepackage{amsthm}
\usepackage{amssymb}
\usepackage{mathtools}
\usepackage{amsfonts}
\usepackage{centernot}
\usepackage{algorithm}
\usepackage{algorithmicx}
\usepackage[noend]{algpseudocode}
\usepackage{multicol}
\usepackage{subcaption}
\usepackage{caption}
\usepackage{csquotes}
\usepackage{paralist}
\usepackage{graphicx}
\usepackage{svg}
\usepackage{tikz}
\usepackage{url}
\usetikzlibrary{arrows,automata}
\usetikzlibrary{shapes}
\usetikzlibrary{positioning}
\usetikzlibrary{patterns,patterns.meta}
%\usepackage{natbib}
%\newcommand{\cmtCITE}{\textcolor{red}{CITE}}

%% TO BE REMOVED LATER.
%\usepackage{todonotes}
%\usepackage{changes}
\usepackage[final]{changes}
%\usepackage{hyperref}
%\hypersetup{
%	colorlinks=true,
%	linkcolor=blue,
%	filecolor=magenta,
%	urlcolor=cyan,
%}

%\jairheading{1}{2023}{1-15}{8/2023}{}
%\jairheading{1}{1993}{1-15}{6/91}{9/91}
%\ShortHeadings{Decoy Allocation Problem}
%{Kulkarni, Cohen, \& Fu}
%\firstpageno{1}

    %%%%%%%%%%%%%%%%%%%%%%%%%%%%%%%%%%%%%%%%%%%%%%%%%%%%%%%%%%%%%%%%%%%%%%%%%%%%%%%%
% based on defs.tex by S. Boyd
% modified by Jie Fu
\newif\ifuseboldmathops
\newif\ifuseittextabbrevs
\useboldmathopstrue   % comment out to use mathbb
%\useittextabbrevstrue % comment out to use non-italic text abbrevs like e.g.
%%%%%%%%%%%%%%%%%%%%%%%%%%%%%%%%%%%%%%%%%%%%%%%%%%%%%%%%%%%%%%%%%%%%%%%%%%%%%%%%

% text abbrevs
\ifuseittextabbrevs
	
	\newcommand{\eg}{{\it e.g.}}
	\newcommand{\ie}{{\it i.e.}}

\else
	
	\newcommand{\eg}{e.g.}
	\newcommand{\ie}{i.e.}
	
\fi

% standard math sets
\ifuseboldmathops
\else

	  % symmetric matrices
\fi

% control theory sets
\ifuseboldmathops
\else

\fi

% probability operators
\ifuseboldmathops
\else

\fi

% convex operators
\ifuseboldmathops

\else

	 % proximal operator
	   % domain
	   % affine hull
	     % closure
	  % interior
	     % convex hull
	 % relative interior
	     % boundary
	
	     % trace
	 % diagonal matrix
\fi

% % useful non-bold operators
% \newcommand{\eqbydef}{\mathrel{\stackrel{\Delta}{=}}}
% \newcommand{\eqbyset}{\mathrel{\stackrel{\mathrm{set}}{=}}}
% \newcommand{\argmax}{\mathop{\mathrm{argmax}}}
% \newcommand{\argmin}{\mathop{\mathrm{argmin}}}

% % lin alg stuff
% \newcommand{\Span}{\mathop{\mathrm{span}}}
% \newcommand{\Range}{\mathop{\mathrm{range}}}
% \newcommand{\Null}{\mathop{\mathrm{null}}}
\newcommand{\rank}{\mathop{\mathsf{rank}}}

% Used for optimization  

% \newcommand{\ones}{\mathbf 1}
% \newcommand{\lambdamax}{{\lambda_{\rm max}}}
% \newcommand{\lambdamin}{{\lambda_{\rm min}}}
% \newcommand{\sigmamax}{{\sigma_{\rm max}}}
% \newcommand{\sigmamin}{{\sigma_{\rm min}}}

% linear temporal logic 
% from Baier, Katoen

% \newcommand{\Eventually}{\mbox{$\tilde{{\sf F}} \, $}}
% \newcommand{\Untils}{\mbox{$\, \tilde{{\sf U}}\,$}}
% \newcommand{\Alwaysp}{\mbox{${\sf G}^{{-}1} \, $}}
% \newcommand{\Everp}{\mbox{${\sf F}^{{-}1} \, $}}
% \newcommand{\Nextp}{\mbox{${\sf X}^{{-}1} \, $}}

%\newcommand{\V}{\mbox{$\, \overline{{\sf U}} \,$}}

%\newcommand{\Always}{\Box \, }
 %\newcommand{\Eventually}{\Diamond \, }
% \newcommand{\Next}{\bigcirc \, }

%\newcommand{\Next}{\kern0.5ex\vcenter{\hbox{$\scriptstyle\bigcirc$}}\kern0.5ex}

% \newcommand{\norm}[1]{\lVert#1\rVert}

\newcommand{\abs}[1]{\lvert#1\rvert}
\newcommand{\card}[1]{\left|#1\right|}

\newcommand{\dist}[1]{\mathcal{D}(#1)}

 \newcommand{\supp}{\mbox{Supp}}

\newcommand{\calO}{\mathcal{O}}

\newcommand{\calF}{F}

\newcommand{\win}{\mathsf{Win}}

% acronym

\acrodef{mdp}[MDP]{Markov Decision Process}
% \acrodef{pomdp}[POMDP]{Partially Observable Markov Decision Process}

%%% define the theorem environment

\theoremstyle{definition}
\newtheorem{definition}{Definition}
\newtheorem{example}{Example}
\newtheorem{problem}{Problem}

\newtheorem{remark}{Remark}

\newtheorem{lemma}{Lemma}
\newtheorem{assumption}{Assumption}
\newtheorem{corollary}{Corollary}
\newtheorem{proposition}{Proposition} 
\newtheorem{theorem}{Theorem}
\newtheorem{myfact}{Fact}
\theoremstyle{plain}

\newcommand{\hgame}{\mathcal{H}}
% \renewcommand{\bbS}{\mathbb{S}}
% \renewcommand{\bbF}{\mathbb{F}}

% \newcommand{\alg}{\mathsf{ALG}}
%% new operators for game solution

% symbol for automaton

% \newcommand{\calD}{\mathcal{D}}
% symbol for the game
\newcommand{\game}{\mathcal{G}}

%%% define acronym
% \acrodef{dfa}[DFA]{Deterministic Finite-State Automaton}
% \acrodef{scltl}[scLTL]{syntactically co-safe LTL}
% \acrodef{ltl}[LTL]{Linear Temporal Logic}
% \acrodef{vm}[VM]{Virtual Machine}

\newcommand{\outcomes}{\mathsf{Outcomes}}

% newcommand related to the network example

\newcommand{\vod}{\mathsf{VoD}}

\newcommand{\srAct}{\mathsf{SRActs}}
\newcommand{\dswin}{\mathsf{DSWin}}
\newcommand{\daswin}{\mathsf{DASWin}}

\DeclareMathOperator*{\argmax}{\arg\!\max}

\begin{document}

\begin{frontmatter}
	
	%% Title, authors and addresses
	
	%% use the tnoteref command within \title for footnotes;
	%% use the tnotetext command for theassociated footnote;
	%% use the fnref command within \author or \address for footnotes;
	%% use the fntext command for theassociated footnote;
	%% use the corref command within \author for corresponding author footnotes;
	%% use the cortext command for theassociated footnote;
	%% use the ead command for the email address,
	%% and the form \ead[url] for the home page:
	%% \title{Title\tnoteref{label1}}
	%% \tnotetext[label1]{}
	%% \author{Name\corref{cor1}\fnref{label2}}
	%% \ead{email address}
	%% \ead[url]{home page}
	%% \fntext[label2]{}
	%% \cortext[cor1]{}
	%% \affiliation{organization={},
		%%             addressline={},
		%%             city={},
		%%             postcode={},
		%%             state={},
		%%             country={}}
	%% \fntext[label3]{}
	
	\title{Integrated Resource Allocation and Strategy Synthesis in Safety Games on Graphs with Deception\tnoteref{t1}}
    \tnotetext[t1]{Distribution A: Approved for Public Release; Distribution is Unlimited.}
%	\title{Design of Deception Games on Graphs: Achieving Safety by Hiding the Real and Revealing the Fiction}
%	\title{Joint Allocation of Resources and Synthesis in Deception Games on Graphs: Achieving Safety by Hiding the Real and Revealing the Fiction}
%	\title{Strategic Resource Allocation and Strategy Synthesis in Deception Games on Graphs}
%	\title{Strategic Deception: Integrating Resource Allocation and Synthesis in Graph Games}
%	\title{Graph-Based Deception Games: Balancing Resource Allocation and Synthesis Strategies}
%	\title{Mechanism Design and Strategy Synthesis in Games on Graphs with Incomplete Information}
   
	%% use optional labels to link authors explicitly to addresses:
	%% \author[label1,label2]{}
	%% \affiliation[label1]{organization={},
		%%             addressline={},
		%%             city={},
		%%             postcode={},
		%%             state={},
		%%             country={}}
	%%
	%% \affiliation[label2]{organization={},
		%%             addressline={},
		%%             city={},
		%%             postcode={},
		%%             state={},
		%%             country={}}
	
	\author[a]{Abhishek N. Kulkarni\corref{corr}}\ead{a.kulkarni2@ufl.edu}
	\author[a]{Matthew S. Cohen}\ead{cohen.matthew@ufl.edu}
	\author[b]{Charles A. Kamhoua}\ead{charles.a.kamhoua.civ@army.mil}
	\author[a]{Jie Fu}\ead{fujie@ufl.edu}
	\cortext[corr]{Corresponding author.}

%	\address	
	\affiliation[a]{organization={University of Florida},%Department and Organization
%		addressline={1}, 
		city={Gainesville},
		postcode={32608}, 
		state={FL},
		country={USA}
	}
	\affiliation[b]{organization={DEVCOM Army Research Laboratory}, %Department and Organization
%		addressline={1}, 
		city={Adelphi},
		postcode={20783}, 
		state={MD},
		country={USA}
	}
	
	\begin{abstract}
		Deception plays a crucial role in strategic interactions with incomplete information. Motivated by security applications, we study a class of two-player turn-based deterministic games with one-sided incomplete information, in which player 1 (P1) aims to prevent player 2 (P2) from reaching a set of target states. In addition to actions, P1 can place two kinds of deception resources: ``traps'' and ``fake targets'' to disinform P2 about the transition dynamics and payoff of the game. Traps ``hide the real'' by making trap states appear normal, while fake targets ``reveal the fiction'' by advertising non-target states as targets. We are interested in jointly synthesizing optimal decoy placement and deceptive defense strategies for P1 that exploits P2's misinformation.  We introduce a novel hypergame on graph model and two solution concepts: stealthy deceptive sure winning and stealthy deceptive almost-sure winning. These identify states from which P1 can prevent P2 from reaching the target in a finite number of steps or with probability one without allowing P2 to become aware that it is being deceived. Consequently, determining the optimal decoy placement corresponds to maximizing the size of P1's deceptive winning region. Considering the combinatorial complexity of exploring all decoy allocations, we utilize compositional synthesis concepts to show that the objective function for decoy placement is monotone, non-decreasing, and, in certain cases, sub- or super-modular. This leads to a greedy algorithm for decoy placement, achieving a $(1 - 1/e)$-approximation when the objective function is sub- or super-modular. The proposed hypergame model and solution concepts contribute to understanding the optimal deception resource allocation and deception strategies in various security applications.
	\end{abstract}
	
%	%%Graphical abstract
%	\begin{graphicalabstract}
%		%\includegraphics{grabs}
%	\end{graphicalabstract}
	
	%%Research highlights
%	\begin{highlights}
%		\item Research highlight 1
%		\item Research highlight 2
%	\end{highlights}
	
	\begin{keyword}
%		game-theory \sep games with incomplete information \sep strategic deception \sep cybersecurity \sep mechanism design \sep reactive synthesis \sep compositional synthesis 
		Hypergame theory \sep Security games \sep Strategic deception \sep Mechanism design \sep Synthesis of reactive systems 
  % \sep compositional synthesis 
		%% keywords here, in the form: keyword \sep keyword
		
		%% PACS codes here, in the form: \PACS code \sep code
		
		%% MSC codes here, in the form: \MSC code \sep code
		%% or \MSC[2008] code \sep code (2000 is the default)
	\end{keyword}
\end{frontmatter}

	\section{Introduction}
		\label{sec:intro}
		% Introduction for JAIR
% Add: General introduction to deception: AI -> Game theory -> Incomplete information games. Where are they used? 
% The paper should talk about general applications to game theory, not just security applications. 

% Prompt to scite.ai
%Game theoretic deception plays a key role in understanding and designing AI agents in various domains, including security and defense \cite{}, robotics \cite{}, economics \cite{}, social media \cite{}, and gaming \cite{}. It provides valuable insights into strategic decision-making in interactions involving two or more players. It enables the development of defensive mechanisms against attacks, helps combat the spread of misinformation, and enhances the design of engaging games. A game on graph is a model for sequential decision making in strategic interactions whose rules and dynamics can be represented by a graph \cite{gradel2002automata}. Games on graphs are an important tool in computer science, control theory, and robotics with applications to synthesis, verification, refinement \cite{bloem2018graph,ramadge,??}. However, majority of the existing literature in sequential decision making is focused on studying the games with complete or imperfect information, \ie, when both players have same information about the game or some players have partial observability of the game history. The sequential interactions involving incomplete information, \ie, when players can observe the complete history but do not have the same information about the (i) action space, ..., have received less attention. 

Deception plays a key role in understanding and designing AI agents in various domains such as security and defense \cite{carroll2009game,pawlick2021game,zhu2021gametheoretic,kulkarni2020synthesis}, robotics \cite{shim2013taxonomy,wagner2011acting,arkin2012moral}, economics \cite{gneezy2005deception,gan2019imitative,pawlick2018modeling}, reinforcement learning \cite{zhu2021survey,huang2019deceptive}, and large language models \cite{park2023ai,hagendorff2023deception}. Game-theoretic deception provides valuable insights into strategic decision-making in interactions involving multiple agents. 
In this paper, we consider a joint deception resource allocation and deceptive strategy synthesis problem for a class of games on graphs with incomplete information. 
Games on graphs are widely used models for sequential decision-making in computer science, control theory, and robotics, with applications to synthesis \cite{bloem2018graph,ramadge1987supervisory,kressgazit2011correct} and verification \cite{baier2008principles}. Specifically, we consider a subclass of games on graphs called reachability games that represent a sequential interaction between two players, namely, a defender (P1) and an attacker (P2).
The attacker's objective is to reach a set of target states, while that of the defender is to prevent the attacker from reaching a target state. 
Employing the solutions of zero-sum reachability games \cite{deAlfaro2007concurrent}, we can identify a set of states from which P1 has no strategy to prevent P2 from visiting a true target. 
To protect targets when the game starts from a P1's losing position, P1 can allocate deception resources to disinform the attacker and further synthesize a deceptive strategy that exploits the attacker's misinformation to prevent it from reaching the target states.
We consider two classes of deception resources that serve the functions of \emph{hiding the real} and \emph{reveal the fiction} \cite{heckman2015cyber}.
Hiding the real refers to the defender simulating a trap to function like a real state while revealing the fiction corresponds to camouflaging a state to look like a target state for the attacker. 
Given this setup, we are interested in the following problem: \emph{How to optimally allocate the decoys so that the defender can influence the attacker into taking (or not taking) certain actions that maximize the defender's deceptive winning region?} 
Informally, the deceptive winning region is the set of game states from which the defender has a deceptive strategy to 
mislead the attacker to reach a fake target or a trap, and hence prevent the attacker from reaching a true target state.

% As a motivating problem, consider 
% This work extends our previous study on decoy placement games \cite{kulkarni2020decoy} over an attack-defend game graph, studied in the field of cyberdeception \cite{wang2018cyber}. 
This work extends our previous study\footnote{With the exception of Theorem~\ref{thm:dswinX-is-superadditive}, majority of the results presented in this paper are either entirely novel or extend the results from \cite{kulkarni2020decoy}. We also include new experiment results demonstrating the simultaneous placement of both traps and fake targets.} on decoy placement games \cite{kulkarni2020decoy} over an attack-defend game graph, studied in the field of cyberdeception \cite{wang2018cyber}.
In this problem, the objective of the attacker (P2) is to attain control over one of the target machines present in the network, while the defender (P1) strives to prevent such an outcome. 
To accomplish this, the attacker exploits vulnerabilities and enacts multi-stage lateral movements to escalate its privileges while the defender adjusts the network configuration in response. 
In \cite{kulkarni2020decoy}, we only considered the trap allocation. 
The current paper considers both traps and fake targets. In the cybersecurity context, traps can be realized by honey-patches \cite{araujo2014patches}, in which a vulnerability is patched but remains seen as a valid attack action to the attacker. 
Fake targets can be realized by high-interaction honeypots; for example, a decoy imitating a genuine data server may divert the attacker's attention and safeguard the real server. However, implementing the defense countermeasures can be costly. This paper provides a game-theoretic solution to design optimal defense strategies and deception resource allocations in attack-defend games on graphs. Nevertheless, the results derived in this paper also provide important insights into the rational behavior of agents in a sub-class of games on graphs with one-sided incomplete information, in which one of the players mislabels a subset of states.

%I.1. Effect of decoy insights. structure and objective. 
%I.2. control of P2's perception and its strategy. 
%I.3. Concept of stealthy deception under sure and almost-sure conditions. 
%\ak{Explain key challenges posed by this problem. How to model P1's ability to reason about P2's perception? Coupling of strategy and placement.}

\paragraph*{Contributions} The contributions of this paper are manifold. 

First, we introduce a class of \emph{hypergames on graph} to model the game with one-sided incomplete information resulting from P2's unawareness about the subset of states in the reachability game allocated by P1 as either ``traps'' or ``fake targets''. In principle, the traps alter the transition dynamics of the game without affecting P2's perception, whereas the fake targets manipulate P2's perception of the goal states in the game. Thus, by deciding the location of decoys, P1 can influence P2's perception and, therefore, its behavior. The hypergames on graph model integrates the two subjective views of the same interaction into a single graphical model, facilitating the analysis of agents' rational behavior within the game with one-sided incomplete information.

Second, we introduce two solution concepts to study the rational behavior of the players in the hypergame on graph called \emph{stealthy deceptive sure winning} and \emph{stealthy deceptive almost-sure winning}. We present procedures to synthesize each player's strategies under each of these concepts given a game on graph and a decoy placement. Intuitively, a deceptive strategy leverages P2's misperception to prevent the game from reaching a target state. At the same time, the stealthiness of the strategy prevents P2 from becoming aware of its misperception until a decoy state is reached. 

% \jf{``we introduce two solution concepts to study the rational strategies of the players in the hypergame on graph called \emph{stealthy deceptive sure winning} and \emph{stealthy deceptive almost-sure winning},'' semantically confusing. what is the solution concept if it is not equilibrium? revise the sentence.}

Third, we analyze the effect of traps and fake targets on P2's behavior when players follow either greedy deterministic strategies or randomized strategies. With greedy deterministic strategies, we show that \emph{fake targets could be more advantageous than traps} (Theorem~\ref{thm:traps-fakes-subset-relation}). Whereas, with randomized strategies, we find that \emph{neither the fake targets nor the traps provide a greater benefit over the other} (Theorem~\ref{thm:daswin-equal-for-fakes-traps}). Using these insights, we show a reduction from the problem of synthesizing stealthy deceptive sure (resp., almost-sure) winning strategy to that of synthesizing sure (resp., almost-sure) winning strategy (Theorems~\ref{thm:hgame-swin-is-dswin} and \ref{thm:daswin-reduction}). Moreover, we observe that the benefit of using deception is greater when players use greedy deterministic strategies than when they use randomized strategies (Theorem~\ref{thm:daswin-dswin-comparison}). This is a surprising result since, for several classes of games on graphs, randomized strategies are either equally or more powerful than the deterministic ones \cite{deAlfaro2007concurrent,chatterjee2006algorithms}.

%These insights enable us to design an algorithm to synthesize a stealthy deceptive sure/almost-sure winning strategy given a fixed placement decoys. 

%II.1. Cyclic dependence + Combinatorial problem. Solution by compositionality. 
%II.2. Decoupling of placement problem. Insights of fakes powerful than traps and otherwise. 
%II.3. Conditions for sub and super modularity. Algorithm. 
%However, determining a placement of decoys that yields the maximum size of stealthy deceptive sure/almost-sure winning region is a combinatorial problem. To derive a tractable placement algorithm, we show three properties: (1) We can decouple the placement of traps and fake targets because fake targets are at least as advantageous as traps; (2) Drawing inspiration from ideas in compositional synthesis \cite{filiot_antichains_2011,kulkarni_compositional_2018}, we derive the sufficient conditions when the objective function (\ie, the size of stealthy deceptive sure/almost-sure winning region) is submodular or supermodular; (3) Based on this, we can construct an under-approximation of the deceptive sure-winning regions incrementally (in polynomial time), instead of having to solve a combinatorially large number of hypergames for all possible decoy configurations.

Finally, we note that the task of determining an optimal placement of decoys that maximizes the size of the stealthy deceptive sure/almost-sure winning region poses a challenging combinatorial problem. To address this challenge and develop an algorithm with practical feasibility, we establish three key properties. Firstly, we demonstrate that the placement of traps and fake targets can be treated independently, as fake targets offer at least the same advantages as traps. Secondly, drawing insights from concepts in compositional synthesis \cite{filiot2011antichains,kulkarni2018compositional}, we establish sufficient conditions under which the objective function (i.e., the size of the stealthy deceptive sure/almost-sure winning region) exhibits submodularity or supermodularity property. Leveraging these findings, we propose a greedy algorithm (Algorithm~\ref{alg:greedymax}) to incrementally place the decoys. The algorithm is $(1-1/e)$-optimal when the objective function is sub- or super-modular. This approach alleviates the need to exhaustively solve a large number of hypergames for all possible configurations of decoys.

%A subset of results, pertaining to the placement of decoys and synthesis of a deceptive strategy for P1 when only traps are placed, was presented in \cite{kulkarni2020decoy}. Most the results presented in this paper, expect Theorem~\ref{thm:dswinX-is-superadditive}, are either completely new or extend the results known for only trap placement to that for jointly placing traps and fake targets. 
% A subset of results, particularly in situations involving the exclusive placement of traps, were discussed in \cite{kulkarni2020decoy}. With the exception of Theorem~\ref{thm:dswinX-is-superadditive}, majority of the results presented in this paper are either entirely novel or build upon the results from \cite{kulkarni2020decoy}. We also include new experiment results demonstrating the simultaneous placement of both traps and fake targets.

%\jf{I wonder the last paragraph should be removed. or should be included in the end of this introduction. }
%\ak{I see that you have added a reference to GameSec paper before. I'm pushing this para as footnote there.}

%\ak{Add a paragraph on broad impact and insights: Overall, the results derived in this paper provide insights into how the modifications to game on graph affect the winning regions. In addition, they also provide insights into how and when the misperception helps the deceiver and when it may not. These are important insights to designing cyberdefense, etc.}

\paragraph*{Related Work}

Security games \cite{sinha2018stackelberg,kiekintveld2009computing} represent an important category of models for investigating strategic interactions in the context of both physical infrastructure security \cite{tambe2011security,tsai2010urban,paruchuri2008playing} as well as cybersecurity \cite{durkota2015optimal,pawlick2019gametheoretic,gan2022defense}. Various works in AI such as \cite{yang2013improving} have studied the optimal allocation of defense resources, with the objective of synthesizing a patrolling or inspection strategies for defenders. Recently, there has been a growing interest to investigating the allocation of deception resources in security games where players strategically employ deception to mitigate information disadvantages often encountered in these games. In \cite{thakoor2019cyber}, the authors formulate a security game (Stackelberg game) to allocate limited deception resources in a cybernetwork to mask network configurations from the attacker. This class of deception manipulates the adversary's perception of the payoff matrix and thus causes the adversary to take (or not to take) certain actions that aid the objective of the defender. In \cite{durkota2015optimal}, the authors formulate a Markov decision process to assess the effectiveness of a fixed honeypot allocation in an attack graph, which captures multi-stage lateral movement attacks in a cybernetwork and dependencies between vulnerabilities \cite{jha2002two,ou2006scalable}. In \cite{anwar2020honeypot}, the authors analyze the honeypot allocation problem for attack graphs using normal-form games, where the defender allocates honeypots that change the payoff matrix of players. The optimal allocation strategy is determined using the minimax theorem. The attack graph is closely related to our game on graph model, which generalizes the attack graph to \emph{attack-defend game graphs} \cite{jiang2009optimal,aslanyan2016quantitative} by incorporating the defender counter-actions in active defense.

The problem of optimal allocation of deception resources has received considerable attention in the domain of cybersecurity. We refer the readers to \cite{pawlick2019gametheoretic,zhu2021survey} for comprehensive surveys on game-theoretic approaches to cyber deception. In \cite{pibil2012game,kiekintveld2015gametheoretic}, the authors propose a game-theoretic method to place honeypots in a network so as to maximize the probability that the attacker attacks a honeypot and not a real system. In their game formulation, the defender decides where to insert honeypots in a network,  and the attacker chooses one server to attack and receives different payoffs when attacking a real system (positive reward) or a honeypot (zero reward).  The game is imperfect information as the real systems and honeypots are indistinguishable for the attacker. By the solution of imperfect information games, the defender's honeypot placement strategy is solved to minimize the attacker's rewards.

%Security games \cite{sinha2018stackelberg,kiekintveld2009computing} are another class of important models for resource allocation in adversarial environments. In \cite{thakoor2019cyber}, the authors formulate a security game (Stackelberg game) to allocate limited deception resources in a cybernetwork to mask network configurations from the attacker. This class of deception manipulates the adversary's perception of the payoffs and thus causes the adversary to take (or not to take) certain actions that aid the objective of the defender. In \cite{durkota2015optimal}, the authors formulate a Markov decision process to assess the effectiveness of a fixed honeypot allocation in an attack graph, which captures multi-stage lateral movement attacks in a cybernetwork and dependencies between vulnerabilities \cite{jha2002two,ou2006scalable}. In \cite{anwar2020honeypot}, the authors analyze the honeypot allocation problem for attack graphs using normal-form games, where the defender allocates honeypots that change the payoff matrix of players. The optimal allocation strategy is determined using the minimax theorem. The attack graph is closely related to our game on graph model, which generalizes the attack graph to \emph{attack-defend game graphs} \cite{jiang2009optimal,aslanyan2016quantitative} by incorporating the defender counter-actions in active defense.

There are several key distinctions between our work and the prior work. First, our work investigates a qualitative approach to decoy placement instead of a quantitative one, which often involves solving an optimization problem over a well-defined reward/cost function. In the qualitative approach, we represent P2's goal as a Boolean reachability objective and our solution approaches build upon the solution of $\omega$-regular games on graphs \cite{deAlfaro2007concurrent}.  %\jf{add citation}% This alleviates the need to accurately model the attacker's reward function a priori. 

Secondly, in contrast to numerous studies that adopt a Bayesian games framework for analyzing security games \cite{kiekintveld2010methods,kiekintveld2010robust}, we employ a hypergame model \cite{bennett1977toward} to represent P2's misinformation about the decoy locations and its lack of awareness about its own misperception. Bayesian games \cite{harsanyi1967games} are widely recognized as the standard model of games with incomplete information in game theory, primarily because of their ability to model any kind of incomplete information. However, Sasaki and Kijima \cite{sasaki2016hierarchical} have pointed out the limitations of Bayesian games in situations where one or more players are unaware of their misperceptions. In contrast, hypergames can effectively represent games with one-sided incomplete information, wherein different players play according to their subjective view of the game. Furthermore, much of the existing research on security games utilizing hypergames predominantly focuses on normal-form or repeated games models \cite{gharesifard2012evolution,gharesifard2014stealthy,kovach2015hypergame,wang1989solution,bennett1986hypergame}. In contrast, our approach centers on hypergames defined within the context of games on graphs, which provide a framework for modeling infinite games \cite{mcnaughton1993infinite,thomas1994finite}.
 
Third, we solve a stealthy strategy for the defender, which ensures that the defender's actions will not inform the attacker that deceptive tactics are being used. This is particularly important in situations where the attacker's behavior, once it becomes aware of the deception, could be unpredictable \cite{jajodia2016cyber}.  
%Lastly, we borrow the idea of compositional reasoning from formal methods to find approximately optimal solutions for the decoy allocation problem for this class of hypergames.
% \jf{this may be a place to include one sentence that highlight the new contribution compared to your gamesec paper.}

The paper is structured as follows. In Section~\ref{sec:prelim}, we discuss the preliminaries of the reachability game on graph model to capture the interaction between P1 and P2 in which P2 does not know the locations of decoy resources. Section~\ref{sec:problem} formalizes our problem statement. In Section~\ref{sec:main}, we present the main results of this paper. We conclude the paper with two experiments in Section~\ref{sec:experiment} and a conclusion in Section~\ref{sec:conclusion}.

	\section{Preliminaries}
		\label{sec:prelim}
		We study an adversarial interaction between player 1 (P1) and player 2 (P2) represented as a zero-sum game on graph \cite{mcnaughton1993infinite,thomas1994finite}. In this two-player zero-sum \textit{game on graph}, P1 plays to prevent P2 from visiting a set of goal states. 
%\deleted{For example, in cybersecurity, these states might represent the situation where P2 has gained access to some critical infrastructure in the network. }
% \jf{i wonder if Gradel's chapter is the best citation for zero-sum games on graph. that was a survey and tutorial book. check out and see if you have other ones.}
% \ak{Replaced with Principles of Model Checking book.}
% \jf{I don't think PMC is the rignt one either. please check related papers to find out the classical reference.}

\begin{definition}[Reachability Game]
	\label{def:game}
	A two-player turn-based zero-sum reachability game with P2's reachability objective is a tuple: 
	\[
		\game =  \langle S, A, T, s_0, F \rangle,
	\]
	where
	\begin{itemize}
		\item $S = S_1 \cup S_2$ is a finite set of  states partitioned into two sets $S_1$ and $S_2$. At a state in $S_1$, P1 chooses an action. At a state  in $S_2$, P2 selects an action;
		
		\item $A = A_1 \cup A_2$ is the set of actions. $A_1$ (resp., $A_2$) is the set of actions for P1 (resp., P2); 
		
		\item $T : (S_1 \times A_1)\cup (S_2 \times A_2) \rightarrow S$ is a \emph{deterministic} transition function that maps a state-action pair to a next state. 
		%\ak{(remove if unused) Because the transition function is deterministic, we also denote $T\subseteq S\times A\times S$ as a set of state-action-next-state tuples.}
		
		\item $s_0 \in S$ is the initial state.
		
		\item $F\subseteq S$ is a set of states describing P2's reachability objective. All states in $F$ are sink states. We refer to $F$ as P2's goal states.
	\end{itemize}
\end{definition}

A \emph{path} in $\game$ is a (finite/infinite) ordered sequence of states $\rho = s_0s_1\ldots \in S^\omega$ such that, for any $i \geq 0$, there exists an action $a \in A$ such that $s_{i + 1} = T(s_{i}, a)$. A path is winning for P2 if an integer $i \ge 0$ exists such that $s_i \in F$. Otherwise, the path $\rho$ is winning for P1. By definition, a winning path $\rho$ for P1 ensures that, for all $i \ge 0$, we have $s_i\in S \setminus F$. A reachability game with P1's reachability objective is defined analogously.% (\ie, $F$ represents P1's reachability objective).

%A \emph{safety game} is a dual of a reachability game. Given the reachability game of P2, $\game = \langle S, A, T, s_0, F\rangle$, its dual is a safety game of P1: $\game'=\langle S, A, T, s_0, S \setminus F\rangle$, where P1's goal is to visit only the states in $S \setminus F$. 

A memoryless randomized strategy of player $i$, $i = 1,2$, is a map $\pi_i: S_i \rightarrow \dist{A_i}$. A player $i$ is said to \emph{follow a strategy $\pi_i$} if, for any state $s \in S_i$, the player selects an action in the support of $\pi_i(s)$. The set of all memoryless randomized strategies of player-$i$ is denoted by $\Pi_i$. A memoryless randomized strategy is deterministic if $\pi_i(s)$ is a Dirac delta function for all states $s \in S$. Given a state $s \in S$, a pair of strategies $\pi_1 \in \Pi_1$ and $\pi_2 \in \Pi_2$ determines a set of paths $\rho = s_0 s_1 \ldots \in S^\omega$ such that $s_0 = s$ and, for every $i \geq 0$, we have $s_{i+1} = T(s_i, a)$ with $a \in \supp(\pi_1(s_i))$ whenever $s_i \in S_1$, otherwise, $a \in \supp(\pi_2(s_i))$. We denote the set of paths $\rho$ determined by $s, \pi_1$ and $\pi_2$ by $\outcomes(s, \pi_1, \pi_2)$. Clearly, when $\pi_1$ and $\pi_2$ are deterministic, the set $\outcomes(s, \pi_1, \pi_2)$ is singleton.

%A deterministic, memoryless strategy of player $i$, $i = 1,2$, is a map $\pi_i: S_i \rightarrow A_i$. A player $i$ is said to \emph{follow a strategy $\pi_i$} if, for any state $s \in S_i$, the player selects an action $\pi_i(s)$. The set of all deterministic, memoryless strategies of player $i$ is denoted by $\Pi_i$. Given a state $s \in S$, the pair of strategies $\pi_1 \in \Pi_1$ and $\pi_2 \in \Pi_2$ uniquely determine a path $\rho = s_0 s_1 \ldots \in S^\omega$ such that $s_0 = s$ and, for every $i \geq 0$, we have $s_{i+1} = T(s_i, a)$ with $a \in \pi_1(s_i)$ whenever $s_i \in S_1$, otherwise, $a \in \pi_2(s_i)$. We denote the path $\rho$ determined by $s, \pi_1$ and $\pi_2$ by $\outcomes(s, \pi_1, \pi_2)$.  

A memoryless, randomized (resp., deterministic) strategy $\pi_2 \in \Pi_2$ is said to be \emph{almost-sure (resp., sure) winning for P2} at a state $s \in S$ if and only if, for any P1 strategy $\pi_1 \in \Pi_1$, every path $\rho \in \outcomes(s, \pi_1, \pi_2)$ visits $F$ with probability one  (resp., in finitely many steps). A strategy $\pi_1 \in \Pi_1$ is said to be \emph{almost-sure (resp., sure) winning for P1} if and only if, for every P2's strategy $\pi_2 \in \Pi_2$, no path in $\outcomes(s, \pi_1, \pi_2)$ visits $F$. The set of states from which P1 has an almost-sure (resp., sure) winning strategy is called P1's \emph{almost-sure (resp., sure) winning region}. P2's sure and almost-sure winning region is defined analogously.
%Their almost-sure winning regions in $\game$ are denoted as $\aswin_1(G, F), \aswin_2(G, F)$, respectively. 
The following facts are known about reachability games with complete information \cite{gradel2002automata,mcnaughton1993infinite}.

\begin{myfact}
	\label{fact:sw-is-asw}
	The sure and almost-sure winning region of any player in reachability games with complete information is equal. %That is, $\win_1(\game, F) = \aswin_1(\game, F)$ and $\win_2(\game, F) = \aswin_2(\game, F)$. 
\end{myfact}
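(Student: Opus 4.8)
The plan is to show the two inclusions. Since a sure winning strategy forces a visit to $F$ in finitely many steps along every outcome, it certainly forces a visit with probability one; hence P2's sure winning region is contained in P2's almost-sure winning region, and dually any P1 strategy that blocks $F$ on every outcome does so regardless of the probabilistic interpretation, so the inclusion in the other direction for P1 is immediate as well. The real content is the reverse inclusion: every state from which P2 wins almost-surely is already a state from which P2 wins surely (equivalently, P1's almost-sure winning region is contained in P1's sure winning region). For this I would invoke the standard attractor computation for reachability games: define $\attr_2^0(F)=F$ and $\attr_2^{k+1}(F)=\attr_2^k(F)\cup\{s\in S_2: \exists a\in A_2,\ T(s,a)\in\attr_2^k(F)\}\cup\{s\in S_1: \forall a\in A_1,\ T(s,a)\in\attr_2^k(F)\}$, and let $\attr_2(F)=\bigcup_k\attr_2^k(F)$. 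This set is the sure winning region for P2 (P2 plays the rank-decreasing action; since ranks are bounded by $|S|$, $F$ is reached within $|S|$ steps), and its complement $W_1=S\setminus\attr_2(F)$ is P1's sure winning region, witnessed by the memoryless strategy that at each P1-state in $W_1$ picks an action staying in $W_1$ (such an action exists by definition of the attractor, since otherwise the state would have been absorbed).

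The key step is then: $W_1$ is also P1's almost-sure winning region, i.e., no state outside $W_1$ is almost-sure winning for P1. Take any $s\in\attr_2(F)$, and fix any P1 strategy $\pi_1$ (randomized, memoryless). Let P2 follow the rank-decreasing sure winning strategy described above. Along any outcome consistent with these strategies, whenever the play is at a state of rank $k>0$: if it is a P2-state, P2's chosen successor has rank $<k$ with probability one; if it is a P1-state, \emph{every} available action leads to a state of rank $\le k$, and in fact by the attractor definition at least one successor has rank $<k$ — wait, that is only guaranteed to be forced at P2-states. At a P1-state of rank $k$ in $\attr_2(F)$, the definition only says \emph{all} successors lie in $\attr_2(F)$ is not what we have; rather, a P1-state enters $\attr_2^{k}$ exactly when \emph{all} its successors are in $\attr_2^{k-1}(F)$, so at such a state every successor has rank $\le k-1<k$. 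Hence at every state of positive rank, under this P2 strategy every consistent move strictly decreases the rank with probability one, so the rank reaches $0$ — i.e., $F$ is hit — within at most $|S|$ steps along every outcome. Therefore no path in $\outcomes(s,\pi_1,\pi_2)$ avoids $F$, so $\pi_1$ is not almost-sure winning at $s$; since $\pi_1$ was arbitrary, $s\notin$ P1's almost-sure winning region. This shows P1's almost-sure winning region $\subseteq W_1$, and combined with the trivial inclusion $W_1\subseteq$ P1's almost-sure winning region we get equality; complementing (the winning regions partition $S$ by determinacy of reachability games, Fact on attractors) gives equality of P2's sure and almost-sure winning regions as well.

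The main obstacle is purely expository rather than mathematical: one must be careful that the argument does not secretly assume P2 uses a deterministic strategy when the claim is about P1's \emph{randomized} adversaries, and that the rank function is well-defined and bounded so that ``reaches rank $0$'' genuinely means ``in finitely many steps, deterministically'' and not merely ``with probability one.'' The point worth emphasizing is that because the transition function $T$ is deterministic and P2's winning witness is the deterministic rank-decreasing strategy, the only randomness in an outcome comes from $\pi_1$; since every P1 choice from a positive-rank attractor state still lands in a strictly lower rank, no amount of P1 randomization can escape, so the ``probability one'' visit is in fact a guaranteed visit within $|S|$ steps. Thus almost-sure winning collapses to sure winning for reachability objectives, which is exactly the asserted fact. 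This is the well-known phenomenon that for reachability (and safety) objectives on finite deterministic game graphs the qualitative hierarchy collapses; see \cite{gradel2002automata,mcnaughton1993infinite}.
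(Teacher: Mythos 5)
The paper states Fact~\ref{fact:sw-is-asw} without proof, citing \cite{gradel2002automata,mcnaughton1993infinite}, so there is no in-paper argument to compare against; your proposal supplies the standard attractor proof, and it is correct. The core step is sound and is exactly the right one: with ranks induced by the attractor computation (the same level sets the paper later builds in Algorithm~\ref{alg:zielonka}), a P1-state of rank $k>0$ has \emph{all} successors of rank at most $k-1$, while a P2-state of rank $k>0$ has \emph{some} successor of rank at most $k-1$; hence against P2's deterministic rank-decreasing strategy every outcome from a state of $\attr_2(F)$ reaches $F$ within $\card{S}$ steps no matter how P1 randomizes, which simultaneously shows P2 wins surely there and that P1 cannot win almost-surely there. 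Two small cleanups would tighten the write-up. First, the mid-paragraph self-correction (``wait, that is only guaranteed\dots'') should simply be replaced by the correct statement you end up with. Second, the closing step ``complementing \dots by determinacy'' is the one loose joint: determinacy under the almost-sure criterion is essentially part of what is being established here, so you should not appeal to it; instead, finish directly with the witness you already constructed, namely that from any state of $S\setminus\attr_2(F)$ the P1 strategy confined to that set produces outcomes that never visit $F$, so no P2 strategy can be almost-sure winning from such a state, and together with the trivial inclusion (sure implies almost-sure) this yields the equality for P2 as well. With those edits the proof is complete, elementary, and consistent with the rank machinery the paper relies on in Proposition~\ref{prop:attractor-property} and Remark~\ref{rmk:greedy-swin-strategy}.
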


We denote P1 and P2's sure winning regions in $\game$ by $\win_1(G, F)$ and $\win_2(G, F)$, respectively.

\begin{myfact}
	\label{fact:determinacy}
	Reachability games with complete information enjoy memoryless determinacy. That is, from any state $s \in S$, exactly one of the players has a memoryless sure/almost-sure winning strategy.
\end{myfact}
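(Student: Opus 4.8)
The plan is to prove the dichotomy by an explicit \emph{attractor} construction, which simultaneously yields the partition of $S$ into the two winning regions and the memoryless strategies witnessing membership in each. First I would define P2's attractor to $F$ as the least fixed point of the monotone operator on subsets of $S$ obtained by iterating
\[
  \attr^{k+1} = \attr^k \cup \{s \in S_2 : \exists a \in A_2,\ T(s,a) \in \attr^k\} \cup \{s \in S_1 : \forall a \in A_1,\ T(s,a) \in \attr^k\}
\]
from $\attr^0 = F$. Since $S$ is finite, the increasing chain $\attr^0 \subseteq \attr^1 \subseteq \cdots$ stabilizes after at most $\card{S}$ steps; call the limit $\attr_2(F)$, and for $s \in \attr_2(F)$ let its \emph{rank} $r(s)$ be the least $k$ with $s \in \attr^k$ (so $r(s) = 0$ iff $s \in F$).

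Next I would extract the two strategies. For P2, define a memoryless strategy $\pi_2$ that, at a state $s \in S_2$ of rank $k \ge 1$, plays some action $a$ with $T(s,a) \in \attr^{k-1}$, which exists by the definition of $\attr^k$. A straightforward induction on $r(s)$ shows that against \emph{any} P1 strategy, every play from $s \in \attr_2(F)$ reaches $F$ within $r(s) \le \card{S}$ steps: at P2-states the rank strictly drops by the choice of $\pi_2$, and at P1-states \emph{every} successor already lies at strictly smaller rank, again by the definition of the operator. Hence $\pi_2$ is sure winning for P2 on $\attr_2(F)$. For P1, I would show that $Z := S \setminus \attr_2(F)$ is a \emph{trap} for P2: if $s \in S_2 \cap Z$ then no action can lead into $\attr_2(F)$ (otherwise $s$ would have been absorbed into the attractor), so every successor of $s$ stays in $Z$; and if $s \in S_1 \cap Z$ then not all actions lead into $\attr_2(F)$ (again by the fixed-point characterization), so P1 has an action keeping the play in $Z$. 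Fixing such an action at each P1-state in $Z$ gives a memoryless strategy $\pi_1$ under which every play starting in $Z$ remains in $Z \subseteq S \setminus F$ forever, i.e. $\pi_1$ is sure winning for P1 on $Z$.

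Finally I would assemble the statement. By construction $\attr_2(F)$ and $Z$ partition $S$, and each is contained in the respective player's sure winning region by the previous step; they cannot overlap, because playing two purported winning strategies of opposite players against each other at a common state produces a single play (determinism) that both visits and avoids $F$, a contradiction. Therefore from every $s \in S$ exactly one player has a memoryless sure winning strategy, which via Fact~\ref{fact:sw-is-asw} also settles the almost-sure case, proving memoryless determinacy.

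The substantive part of the argument is just the pair of inductions in the second paragraph: the rank-decrease argument that $\pi_2$ forces $F$ in boundedly many steps, and the invariance argument that $Z$ is closed both under $\pi_1$ and under all of P2's choices. The only care needed is to keep the two cases (P1-states vs.\ P2-states) straight in the operator, in the attractor strategy, and in the trap argument; finiteness of $S$ is what makes the least fixed point reachable in finitely many iterations and the winning plays of bounded length, so I do not anticipate any obstacle beyond this bookkeeping.
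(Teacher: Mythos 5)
Your proof is correct: the attractor/rank construction, the rank-decreasing induction for P2's strategy, and the trap argument for P1 on the complement are exactly the standard argument for memoryless determinacy of turn-based reachability games. The paper itself does not prove this Fact — it cites it as known from the literature — but your construction coincides with the machinery the paper does present: the level sets $Z_0 \subseteq \cdots \subseteq Z_K$ of Algorithm~\ref{alg:zielonka} are your attractor stages, and the rank-reducing P2 strategy and the P1 strategy staying in $S \setminus Z_K$ described after Fact~\ref{fact:determinacy} are precisely your $\pi_2$ and $\pi_1$. One minor remark: rather than invoking Fact~\ref{fact:sw-is-asw} for the almost-sure half, you could note directly that your sure winning strategies already witness almost-sure winning and that the same play-against-each-other contradiction rules out the opponent having an almost-sure winning strategy, which keeps the argument self-contained.
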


Fact~\ref{fact:determinacy} implies that memoryless strategies are sufficient to analyze reachability games \cite{gradel2002automata}. Additionally, it is a known fact in game theory that deterministic strategies are sufficient for analyzing reachability games under the sure winning criterion, while randomized strategies are necessary under the almost-sure winning criterion \cite{deAlfaro2007concurrent}.  Consequently, this paper focuses on memoryless, deterministic strategies when discussing the sure winning condition and on memoryless, randomized strategies when addressing the almost-sure winning condition.

The sure/almost-sure winning region of P2 in $\game$ can be computed by using Algorithm~\ref{alg:zielonka}. The algorithm constructs a sequence of sets, called level-sets, $Z_0, Z_1, \ldots, Z_K$ such that, from any state in $Z_{k} \setminus Z_{k-1}$, $k > 0$, P2 has a strategy to visit $Z_0 := F$ in no more than $k$ steps. For any state $s \in Z_{K}$, we define its \emph{rank} to be the minimum number of steps in which P2 can ensure a visit to $F$ regardless of P1's strategy, denoted by $\rank_\game(s)$. Thus, $\rank_\game(s) = 0$ when $s \in F$, $\rank_\game(s) = \min \{k \mid s \in Z_k\}$ when $s \in Z_K \setminus F$, and $\rank_\game(s)=\infty$ when $s \notin Z_K$. The following properties of the level-sets constructed by Algorithm~\ref{alg:zielonka} are well-known \cite{deAlfaro2007concurrent}.

\begin{proposition}
	\label{prop:attractor-property}
	The following statements are true about the level-sets $Z_0, Z_1, \ldots, Z_K$ constructed by Algorithm~\ref{alg:zielonka}.
	\begin{enumerate}
		\item $Z_0 \subseteq Z_1 \subseteq Z_2 \ldots \subseteq Z_K$.
		\item For any sets $F_1 \subseteq F_2 \subseteq S$, we have $\win_2(\game, F_1) \subseteq \win_2(\game, F_2)$.
		\item For any sets $F_1, F_2 \subseteq S$, we have $\win_2(\game, F_1\cup F_2) = \win_2(\game, \win_2(\game, F_1) \cup \win_2(\game, F_2))$.
	\end{enumerate}
\end{proposition}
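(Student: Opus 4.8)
The plan is to prove the three items in order, all directly from the construction of Algorithm~\ref{alg:zielonka} (the attractor computation) together with Fact~\ref{fact:sw-is-asw}, which lets us work with the sure-winning region and reason with deterministic strategies throughout. For item~(1), I would recall that the level sets are defined by the recursion $Z_0 = F$ and $Z_{k+1} = Z_k \cup \pre(Z_k)$, where $\pre(\cdot)$ collects the P2-states having some action into the set and the P1-states all of whose actions lead into the set. The inclusion $Z_k \subseteq Z_{k+1}$ is then immediate by induction on $k$, since $Z_{k+1}$ is literally $Z_k$ unioned with something; the base case is the trivial $Z_0 \subseteq Z_0 \cup \pre(Z_0)$.

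For item~(2), monotonicity of $\win_2(\game, \cdot)$ in the target set, I would argue that from any state $s \in \win_2(\game, F_1)$ P2 has a (deterministic, memoryless) strategy forcing a visit to $F_1$; since $F_1 \subseteq F_2$, the very same strategy forces a visit to $F_2$, so $s \in \win_2(\game, F_2)$. One can phrase this equivalently via the level sets: the attractor recursion is monotone in its seed, so by induction on $k$ the $k$-th level set for $F_1$ is contained in the $k$-th level set for $F_2$, hence the limits are nested. Either phrasing is short; I would pick the strategy-transfer argument as the cleanest.

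Item~(3) is the substantive one and I expect the two containments to need slightly different ideas. Write $W_i = \win_2(\game, F_i)$ for $i = 1, 2$. The easy direction is $\win_2(\game, W_1 \cup W_2) \supseteq \win_2(\game, F_1 \cup F_2)$: since $F_i \subseteq W_i$ we have $F_1 \cup F_2 \subseteq W_1 \cup W_2$, so this is just item~(2). For the reverse containment $\win_2(\game, W_1 \cup W_2) \subseteq \win_2(\game, F_1 \cup F_2)$, the idea is strategy composition: from $s \in \win_2(\game, W_1 \cup W_2)$, P2 first follows an attractor strategy to reach $W_1 \cup W_2$ in finitely many steps; upon arrival at some $s' \in W_1$ (the case $s' \in W_2$ is symmetric), P2 switches to its attractor strategy for $F_1$, which forces a visit to $F_1 \subseteq F_1 \cup F_2$ in finitely many more steps. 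Concatenating, every resulting play visits $F_1 \cup F_2$, so $s \in \win_2(\game, F_1 \cup F_2)$. The one wrinkle to be careful about — the main obstacle, such as it is — is that this two-phase strategy is not memoryless as written, so I would either invoke Fact~\ref{fact:determinacy} to assert that a memoryless winning strategy then also exists (since membership in the sure-winning region is what we are establishing, not the memorylessness of a particular witness), or note that the composed strategy's "mode'' is determined by whether the current state already lies in $W_1 \cup W_2$, which is positional information. I would flag this explicitly so the reader sees why no genuine memory is needed.
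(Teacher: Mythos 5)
The paper itself gives no proof of this proposition: it is stated as ``well-known'' with a citation to de Alfaro et al., so there is no in-paper argument to match yours against. Your proof is correct and is the standard one. Items (1) and (2) are exactly as routine as you say. For item (3), your decomposition (the $\subseteq$ inclusion from monotonicity via $F_1\cup F_2\subseteq W_1\cup W_2$, the $\supseteq$ inclusion by strategy composition through $W_1\cup W_2$) is the classical argument, and you correctly identify the only delicate point: the two-phase strategy must be turned into a memoryless one, since the paper defines $\win_2$ via memoryless strategies and even quantifies P1's winning condition over memoryless opponents. Of your two fixes, the ``positional mode'' observation is the right one and is worth spelling out one step further: define $\pi_2(s)$ to be an $F_1$-attractor action if $s\in W_1$, an $F_2$-attractor action if $s\in W_2\setminus W_1$, and a $(W_1\cup W_2)$-attractor action otherwise; this is well-defined and winning because the attractor strategy for $F_i$ never leaves $W_i$ (P1 states in $W_i$ have all successors in $W_i$, and the rank strictly decreases at every step inside $W_i$), so the play cannot oscillate between modes. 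The determinacy shortcut is slightly looser in this paper's formalism, precisely because a with-memory witness is not an element of $\Pi_2$, so the positional construction is the cleaner closure of the argument.
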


Given the level-sets constructed by Algorithm~\ref{alg:zielonka}, a memoryless sure winning strategy of P2 can be constructed as follows: Given a P2 state $s \in Z_K \setminus F$, let $D_s = \{a \in A_2 \mid s' = T(s, a) \land \rank_\game(s') < \rank_\game(s)\}$ be the set of actions $a \in A_2$ for which the next state $s' = T(s, a)$ has a strictly smaller rank than $s$. Then, any deterministic strategy $\pi_2: S \rightarrow A$ such that $\pi_2(s) \in D_s$ is a memoryless sure winning strategy for P2. Due to Fact~\ref{fact:determinacy}, the winning region of P1 is $S \setminus Z_K$. A deterministic memoryless strategy $\pi_1 : S \rightarrow A_1$ is sure winning for P1 at a P1 state $s \in S_1$ if $\pi_1(s) \in \{a \in A_1 \mid s' = T(s, a) \land s' \in S \setminus Z_K\}$.

\begin{remark}
	\label{rmk:greedy-swin-strategy}
	A deterministic, memoryless sure winning strategy of P2 that selects an action leading to a lower rank at all states is referred to as a \emph{greedy} strategy. Note that not all sure winning strategies are necessarily greedy. However, in any game where the set $Z_K \setminus F$ is non-empty, there is at least one state where the sure winning strategy includes a strictly rank-reducing action, as the final states have a rank of 0. 
%	\deleted{In this paper, we primarily derive results under the assumption of a greedy strategy.} 
%	\deleted{Nonetheless, these results naturally extend to non-greedy sure winning strategies.}
\end{remark}

Given the level-sets constructed by Algorithm~\ref{alg:zielonka}, a memoryless \emph{almost-sure winning strategy} of P2 can be constructed as follows \cite{deAlfaro2007concurrent}: Given a P2 state $s \in Z_K \setminus F$, let $D_s = \{a \in A_2 \mid s' = T(s, a) \land s' \in Z_K\}$ be the set of actions $a \in A_2$ for which the next state $s' = T(s, a)$ is within the set $Z_K$. Then, any strategy $\pi_2 \in \Pi_2$ such that $\supp(\pi_2(s)) = D_s$ is a memoryless almost-sure winning strategy for P2. Similarly, given any P1 state $s \in S \setminus Z_K$, any strategy $\pi_1 \in \Pi_1$ such that $\supp(\pi_1(s)) = \{a \in A_1 \mid s' = T(s, a) \land s' \in S \setminus Z_K\}$ is a memoryless almost-sure winning strategy of P2. %\jf{maybe you can add a reference ?}

%For any $s \in Z_K \setminus F$, $\pi_2(s)$ selects $a \in A_2$ with a strictly positive probability if and only if the next state $s' = T(s, a)$ is an element in the set $Z_k$. 

\begin{algorithm}[tb]
	\begin{algorithmic}[1]
		\item[\textbf{Inputs:}] $\game = \langle S, A, T, s_0, F \rangle$: Base game
		\item[\textbf{Outputs:}] $\win_2(G, F)$: P2's sure winning region in game $\game$ with P2's reachability objective, $F$.
		\State $Z_0 \gets \calF, k \gets 0$
		\Repeat
			\State $\mathsf{Pre}_1(Z_k) \gets \{v \in V_1 \mid \forall a \in A_1: \Delta(v, a) \in Z_k \}$
			\State $\mathsf{Pre}_2(Z_k) \gets \{v \in V_2 \mid \exists a \in A_2: \Delta(v, a) \in Z_k \}$
			\State $Z_{k+1} = Z_k \cup \mathsf{Pre}_1(Z_k) \cup \mathsf{Pre}_2(Z_k)$
			\State $k \gets k + 1$
		\Until{$Z_k \neq Z_{k-1}$}
		\State \Return $Z_k$
	\end{algorithmic}
	\caption{Zielonka's recursive algorithm \cite{mcnaughton1993infinite,zielonka1998infinite} to compute sure winning region of P2 in $\game$}
	\label{alg:zielonka}
\end{algorithm}

	\section{Problem Formulation}
		\label{sec:problem}
		Recall from our motivating problem that the traps and fake targets enable two types of deceptions, namely, \emph{hiding the real} and \emph{revealing the fiction}, respectively. In principle, the traps alter P2's perception of the true structure of the game, whereas the fake targets manipulate P2's perception of the goal states in the game. Hereafter, we use the terminology of this motivating problem to discuss key ideas in this paper. 
%\ak{Motivating problem moved to intro.}
%\jf{ok}

%\deleted{The presence of decoys introduces incomplete information between P1 and P2.} 
In this paper, we study the class of interactions between P1 and P2 characterized by the following information structure.

\begin{assumption}[Information Structure]
	\label{assume:information-structure}
	P1 knows the true game, \ie, the locations and types of all decoys. P2 is unaware of the presence of decoys. P1 knows about P2's unawareness. 
\end{assumption}

In a game with incomplete information satisfying Assumption~\ref{assume:information-structure}, the players perceive their interaction differently. P1 has complete information about the location and type of the decoys and, therefore, knows the true game. 

\begin{definition}[True Game]
	\label{def:true-game}
	\label{def:p1-perceptual-game}
	Given a base game $\game = \langle S, A, T, s_0, F \rangle$, let $X$ and $Y$ be two subsets of $S \setminus F$ such that $X \cap Y = \emptyset$. The deterministic two-player turn-based game representing the \emph{true} interaction between P1 and P2 when the states in $X$ are allocated as traps and those in $Y$ are allocated as fake targets is the tuple,
	\[
	\game_{X,Y}^1 = \langle S, A, T_{X,Y}, s_0, F \rangle,
	\]
	where 
	\begin{itemize}
		\item $S$,  $A$, $s_0$ and $F $ are defined as in Definition~\ref{def:game};
		
		\item $T_{X, Y}$ is a deterministic transition function. Given any state $s \in S$ and any action $a \in A$,
		\begin{align*}
			T_{X, Y}(s, a) = 
			\begin{cases*}
				T(s, a)	& \text{if } $s \notin X \cup Y$  \\
				s & \text{otherwise}  
			\end{cases*}
		\end{align*}
	\end{itemize}
\end{definition}

Note that the states in $\game$ which are allocated as decoys are `sink' states in $\game_{X, Y}^1$. 
%\deleted{Given that P1's perceptual game is the true game, we write $\game_{X, Y}^1 \coloneqq \game_{X, Y}$.} 
Hereafter, we reserve the symbols $X, Y$ to represent traps and fake targets.

On the other hand, P2 is unaware of the presence of decoys. Therefore, in its subjective view of the game, P2 does not mark the states in $X \cup Y$ as sink states; instead, it considers the states in $Y$ to be goal states.

\begin{definition}[P2's Perceptual Game]
	\label{def:p2-perceptual-game}
	Given a base game $\game = \langle S, A, T, s_0, F \rangle$, a set $X$ of traps and a set $Y$ of fake targets, P2's perceptual game is the tuple 
	\[
	\game^2_{X,Y} = \langle S, A, T, s_0, F \cup Y \rangle,
	\]
	where 
	\begin{itemize}
		\item $S$, $A$, $T$, $s_0$ have the same meanings as Definition~\ref{def:game};
		
		\item $F\cup Y$ is a set of goal states as perceived by P2. 
	\end{itemize}
\end{definition}

\begin{remark}
	\label{rmk:p2-game-is-base-game}
	When P1 places no fake targets, \ie, $Y = \emptyset$, we have $\game_{X, Y}^2 = \game$. 
\end{remark}
%\jf{you already mentioned this remark after Def. 3}

Given the information structure in Assumption~\ref{assume:information-structure}, we consider the following problem:
\begin{problem}
	\label{prob:decoy-placement}
	Let $\game = \langle S, A, T, s_0, F \rangle$ be a reachability game. Determine the subsets $X, Y \subseteq S \setminus F$ of traps and fake targets that maximize the number of states from which P1 has (i) a sure winning strategy, (ii) an almost-sure winning strategy, to prevent P2 from reaching $F$, taking into account P2's incomplete information and subject to the constraints that $\card{X} \leq M$, $\card{Y} \leq N$ and $X \cap Y = \emptyset$.
\end{problem}

We introduce a running example to illustrate the key insights derived in this paper. 

\begin{example}[Running Example]
	\label{ex:running-ex-1}
	Consider the game depicted in Figure~\ref{fig:running-ex}. The game consists of $12$ states, where circular states represent P1 states and square states represent P2 states. The final states $s_0$ and $s_1$ are indicated by a double boundary. In this game, P2 aims to reach either state $s_0$ or $s_1$.
	
	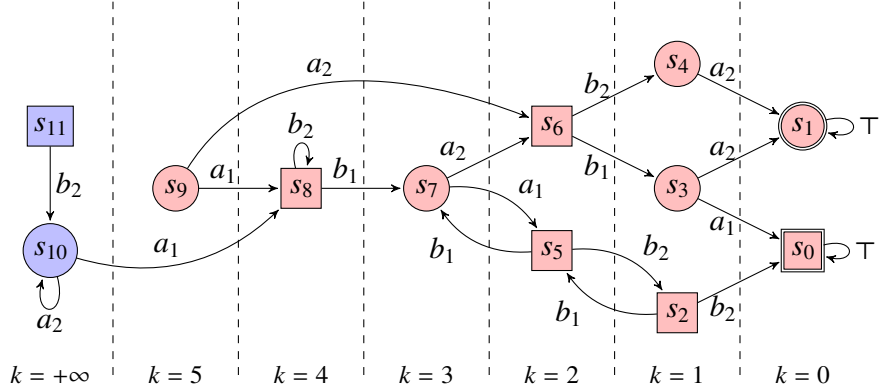
\begin{figure}[t]
		\centering 
		% Tower arrangement (ranks are highlighted) 
% Horizontal
\begin{tikzpicture}[->,>=stealth',shorten >=1pt,auto,node distance=2.5cm, scale = 0.55,transform shape]
	
	\node[state,rectangle,accepting,fill=red!25] (0) at (0, 0) {\huge $s_{0}$};
	\node[state,accepting,fill=red!25] (1) at (0, 3) {\huge $s_{1}$};
	\node[state,rectangle,fill=red!25] (2) at (-3, -1.5) {\huge $s_{2}$};
	\node[state,fill=red!25] (3) at (-3, 1.5) {\huge $s_{3}$};
	\node[state,fill=red!25] (4) at (-3, 4.5) {\huge $s_{4}$};
	\node[state,rectangle,fill=red!25] (5) at (-6, 0) {\huge $s_{5}$};
	\node[state,rectangle,fill=red!25] (6) at (-6, 3) {\huge $s_{6}$};
	\node[state,fill=red!25] (7) at (-9, 1.5) {\huge $s_{7}$};
	\node[state,rectangle,fill=red!25] (8) at (-12, 1.5) {\huge $s_{8}$};
	\node[state,fill=red!25] (9) at (-15, 1.5) {\huge $s_{9}$};
	\node[state,fill=blue!25] (10) at (-18, 0) {\huge $s_{10}$};
	\node[state,rectangle,fill=blue!25] (11) at (-18, 3) {\huge $s_{11}$};

%	\draw[dotted] (-1.5, -1.5) -- (-1.5,6);
	\path[draw=black, tips=false, dashed] (-1.5, -3) -- (-1.5,6);
	\path[draw=black, tips=false, dashed] (-4.5, -3) -- (-4.5,6);
	\path[draw=black, tips=false, dashed] (-7.5, -3) -- (-7.5,6);
	\path[draw=black, tips=false, dashed] (-10.5, -3) -- (-10.5,6);
	\path[draw=black, tips=false, dashed] (-13.5, -3) -- (-13.5,6);
	\path[draw=black, tips=false, dashed] (-16.5, -3) -- (-16.5,6);
	
	\node at (0, -3) {\LARGE $k=0$};
	\node at (-3, -3) {\LARGE $k=1$};
	\node at (-6, -3) {\LARGE $k=2$};
	\node at (-9, -3) {\LARGE $k=3$};
	\node at (-12, -3) {\LARGE $k=4$};
	\node at (-15, -3) {\LARGE $k=5$};
	\node at (-18, -3) {\LARGE $k=+\infty$};
	
	\path (0) edge[loop right]   node {\huge $\top$} (0)
	      (1) edge[loop right]   node {\huge $\top$} (1)
	      (2) edge[]   node[pos=0.3, below] {\huge $b_2$} (0)
	      (2) edge[bend left]   node[pos=0.7] {\huge $b_1$} (5)
	      (3) edge[]   node[pos=0.3, below] {\huge $a_1$} (0)
	      (3) edge[]   node[pos=0.3, above] {\huge $a_2$} (1)
	      (4) edge[]   node[pos=0.3, above] {\huge $a_2$} (1)
	      (5) edge[bend left]   node[pos=0.65] {\huge $b_2$} (2)
	      (5) edge[bend left]   node[pos=0.7] {\huge $b_1$} (7)
	      (6) edge[]   node[pos=0.3,below] {\huge $b_1$} (3)
	      (6) edge[]   node[pos=0.3,above] {\huge $b_2$} (4)
	      (7) edge[bend left]   node[pos=0.65] {\huge $a_1$} (5)
	      (7) edge[]   node[pos=0.3] {\huge $a_2$} (6)
	      (8) edge[]   node[pos=0.3] {\huge $b_1$} (7)
	      (8) edge[loop above]   node {\huge $b_2$} (8)
	      (9) edge[bend left, out=50]   node {\huge $a_2$} (6)
	      (9) edge[]   node[pos=0.3] {\huge $a_1$} (8)
	      (10) edge[bend right]   node {\huge $a_1$} (8)
	      (10) edge[loop below]   node {\huge $a_2$} (10)
	      (11) edge[]   node {\huge $b_2$} (10)
	;
\end{tikzpicture}
		\caption{Base game considered in the running example.} 
		\label{fig:running-ex} 
	\end{figure}
	
	To determine the winning regions of the players, Algorithm~\ref{alg:zielonka} is applied to $\game$ with the set of final states $F = \{s_0, s_1\}$. The colors assigned to the states in the figure correspond to the result of this algorithm. Blue-colored states represent the sure/almost-sure winning region of P1, while red-colored states represent the corresponding region of P2. Additionally, the rank of each state is indicated by its column placement. A state belonging to a column with rank $k=n$ possesses a rank equal to $n$. For instance, the states $s_5, s_6$ have a rank of $2$, while the state $s_9$ has a rank of $5$. The states $s_{10}, s_{11}$ that constitute P1's sure winning region have rank $+\infty$.

\end{example}

	\section{Main Results}
		\label{sec:main}
		Our solution to Problem~\ref{prob:decoy-placement} is based upon two fundamental problems studied in formal methods and hypergame theory, namely, deceptive planning \cite{pawlick2021game,gharesifard2012evolution,gharesifard2014stealthy,kulkarni2020decoy,kulkarni2021deceptive} and compositional synthesis \cite{baier2011compositional,filiot2011antichains,alur2016compositional,kulkarni2018compositional}. In Section~\ref{subsec:hypergames}, we leverage hypergame theory \cite{bennett1977toward} to introduce the notion of a stealthy deceptive winning strategy. This strategy enables P1 to deceive P2 into visiting a decoy state while ensuring that P2 remains unaware of the deception. Synthesizing such a stealthy deceptive strategy requires P1 to anticipate P2's strategy, given its misperception. In Section~\ref{subsec:p2-best-response}, we examine the impact of decoys on P2's perceptual game and strategy. Subsequently, in Sections~\ref{subsec:dswin} and \ref{subsec:placement}, we introduce an algorithm to synthesize a stealthy deceptive sure winning strategy and present a compositional synthesis approach to identify an approximately optimal allocation of decoys. Section~\ref{subsec:daswin} extends the synthesis algorithm to construct a stealthy deceptive almost-sure winning strategy.

	\subsection{Hypergames for Deception}
		\label{subsec:hypergames}
		A hypergame \cite{bennett1977toward} is a model used to capture strategic interactions when players have incomplete information. Intuitively, a hypergame is a game of games, and each game is associated with a player’s subjective view of its interaction with other players based on its own information and information about others’ subjective views. Hypergames are defined inductively based on the level of perception of individual players. A level-$0$ (L0) hypergame is a game with complete, symmetric information, where the perceptual games of both players are identical to the true game. In a level-$1$ (L1) hypergame, at least one of the players, say P2, misperceives the true game, but neither is aware of it. In this case, both players believe their perceptual game to be the true game and play according to their perceptual games, which are level-$0$ hypergames. In a level-$2$ (L2) hypergame, one of the players becomes aware of the misperception and is able to reason about its opponent's perceptual game. Recognizing that an L2-hypergame can represent the information structure given in Assumption~\ref{assume:information-structure}, we use an L2-hypergame\footnote{In general, it is possible to define hypergames of an arbitrary level. We refer the interested readers to \cite{wang1989solution} for an elaborate discussion on the higher levels of hypergames.} to model Problem~\ref{prob:decoy-placement}.

\begin{definition}[Level-1 and Level-2 Hypergame] 
	\label{def:generalized-hypergame}
	%	Let $X$ be a set of states allocated with traps and $Y$ be a set of states allocated with fake targets. 
	Given the true  game known to P1 $\game^1_{X,Y}$  and P2's perceptual game $\game^2_{X,Y}$, the level-1 (L1) hypergame is defined as a tuple $\hgame_1(X, Y) \coloneqq \langle \game^1_{X,Y}, \game^2_{X,Y} \rangle$. The level-2 (L2) hypergame between P1 and P2 when Assumption~\ref{assume:information-structure} holds is the tuple, $$\hgame_2(X, Y) = \langle \hgame_1(X, Y), \game^2_{X,Y} \rangle.$$
\end{definition}

In L2-hypergame, P1 is aware of P2's misperception, but P2 remains unaware that it lacks information. Consequently, P2 computes its strategy by solving its perceptual game $\game^2_{X, Y}$. P1 decides its strategy by solving the L1-hypergame $\hgame_1(X, Y)$, which allows P1 to incorporate P2's strategy as computed in $\game^2_{X, Y}$ into its decision-making. % On the other hand, P1, who is now aware of $\game^2_{X, Y}$ in addition to $\game^1_{X, Y}$, can anticipate P2's best response at a given state by solving the game $\game^2_{X, Y}$. Therefore, P1 solves L1-hypergame $\hgame_1(X, Y)$ to compute its strategy in L2-hypergame. 

When there is a mismatch between P2's strategy at a state in its perceptual game and the same state in the true game, P1 might gain an opportunity to \emph{deceive} P2 into following a strategy that benefits P1. We call such a P1 strategy that encourages P2 to deviate from its best response in the true game $\game_{X, Y}^1$ as a deceptive strategy (formalized in Definition~\ref{def:deceptive-strategy}). However, in addition to being deceptive, P1's strategy must also be \emph{stealthy} in many practical scenarios. That is, the strategy should prevent P2 from learning that it is being deceived. For instance, if the actions chosen by P1's deceptive strategy contradict what P2 considers to be a rational strategy of P1, then P2 would become aware of its misperception. After this point, P2's behavior may become complex \cite{horak2017manipulating,wickstrom2000hawthorne,vrij2006insight,mitkidis2023morality} and modeling it is out of the scope of this paper. Hence, we focus on strategies that are both stealthy and deceptive in this paper.

For a P1's deceptive strategy to be stealthy, it must be \emph{subjectively rationalizable for P2}. Intuitively, a player's strategy is subjectively rationalizable for P2 if it is a sure/almost-sure winning strategy for that player in P2's perceptual game. Subsequently, we introduce two solution concepts: stealthy deceptive sure winning strategy and stealthy deceptive almost-sure winning strategy. %These strategies incorporate deception while simultaneously preventing P2 from gaining awareness of its use.

\begin{definition}[Subjectively Rationalizable Action]
	\label{def:sracts}
	Let $\game^2_{X, Y}$ be P2's perceptual game. A P2 action $a \in A_2$ is said to be \emph{subjectively rationalizable for P2} at a state $s \in S_2 \cap \win_2(\game_{X, Y}^2, F \cup Y)$ under sure (resp., almost-sure) winning condition in $\game^2_{X, Y}$ if there exists a P2's sure (resp., almost-sure) strategy $\pi \in \Pi_2$ such that $a \in \supp(\pi(s))$. At every state $s \notin \win_2(\game_{X, Y}^2, F \cup Y)$, \ie, at all P1 states and P2 states not in $\win_2(\game_{X, Y}^2, F \cup Y)$, every enabled action at $s$ (regardless P1's or P2's action) is subjectively rationalizable for P2. 
\end{definition}

%\jf{the definition 5 does not define P1's subjective rationalizable actions. did you add clarification?}
%\jf{added one more line for clarification.}
% \jf{Here the definition only talks about P2's SR actions, what about P1's SR actions? using index to distinguish players maybe useful.}
%\ak{The definition is for both P1 and P2 states. I added a clarification.}

\begin{definition}[Subjectively Rationalizable Strategy]
	Given any state $s \in S$, let $\srAct(s)$ be the set of all subjectively rationalizable actions at a state $s$. A strategy $\pi \in \Pi_1 $ or $\pi\in  \Pi_2$ is said to be subjectively rationalizable for P2 if, for every $s \in S$, $\pi(s)$ is a distribution over a non-empty subset of $\srAct(s)$. 
\end{definition}

Using subjectively rationalizable strategies, we formalize the idea of a stealthy deceptive strategy for P1 under sure and almost-sure winning conditions.

\begin{definition}[Stealthy Deceptive Sure/Almost-sure Winning Strategy]
	\label{def:deceptive-strategy}
	A P1 strategy $\pi_1$ at a state $s \in S$ is said to be \emph{stealthy deceptively sure (resp., almost-sure) winning} in $\hgame_2(X, Y)$ if $\pi_1$ is subjectively rationalizable for P2 and, for any subjectively rationalizable strategy $\pi_2$ of P2, every path in $\outcomes(s, \pi_1, \pi_2)$ visits $X \cup Y$ in finitely many steps (resp., with probability one). 
\end{definition}

A state from which P1 has a stealthy deceptive sure winning strategy is called P1's \emph{deceptive sure winning state}. The set of all such states is called P1's \emph{deceptive sure winning region} and is denoted by $\dswin_1(X, Y)$. The deceptive almost-sure winning state and region are defined analogously. We denote the deceptive almost-sure winning region by $\daswin_1(X, Y)$. Given that P2 is incapable of deception under Assumption~\ref{assume:information-structure}, the notion of a deceptive winning region for P2 is undefined.

	\subsection{Effect of Decoys on P2's Subjectively Rationalizable Strategy}
		\label{subsec:p2-best-response}
		In this subsection, we discuss the effect of decoys on P2's winning region and the subjectively rationalizable strategies in its perceptual game.

As discussed in Remark~\ref{rmk:p2-game-is-base-game}, traps do not impact P2's perception. Consequently, traps do not influence the winning regions of the players in P2's perceptual game, nor do they affect P2's subjectively rationalizable strategy. Therefore, in this subsection, we focus on the case when $Y$ is a non-empty subset of $S \setminus F$ and P2's objective in $\game_{X, Y}^2$ is to reach $F \cup Y$.

We first introduce a lemma that captures the effect of making a subset of states in $\win_2(\game, F) \setminus F$ to be P2's goal states. The lemma will aid us in proving Proposition~\ref{prop:effect-on-p2}, which summarizes the effect of decoys on the size of winning regions of P1 and P2 as perceived by P2. The lemma is general and holds for any reachability game.

\begin{lemma}
	\label{lma:rank-reduction}
	Let $\game = \langle S, A, T, s_0, F \rangle$ be a game as per Definition~\ref{def:game}. Given any $Y \subseteq \win_2(\game, F) \setminus F$, let $\game_{\emptyset, Y}^2 = \langle S, A, T_{\emptyset, Y}, s_0, F \cup Y \rangle$ be a game in which a subset of P2's winning region is marked as final states in addition to $F$. Then, the rank of any state $s \in \win_2(\game, F)$ in $\game_{\emptyset, Y}^2$ is less than or equal to its rank in $\game$.
\end{lemma}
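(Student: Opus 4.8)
The plan is to compare the two attractor computations of Algorithm~\ref{alg:zielonka} level set by level set. Write $Z_0 \subseteq Z_1 \subseteq \cdots$ for the level-sets produced on $\game$ with target $F$ (so $Z_0 = F$), and $Z_0' \subseteq Z_1' \subseteq \cdots$ for those produced on $\game_{\emptyset,Y}^2$ with target $F \cup Y$ (so $Z_0' = F \cup Y$). I would prove by induction on $k$ that $Z_k \subseteq Z_k'$ for every $k \ge 0$. Since $\rank_\game(s)$ and $\rank_{\game_{\emptyset,Y}^2}(s)$ are, by definition, the least indices at which $s$ enters the respective level-sets, the inclusion $Z_k \subseteq Z_k'$ immediately gives $\rank_{\game_{\emptyset,Y}^2}(s) \le \rank_\game(s)$ whenever the latter is finite, i.e., for every $s \in \win_2(\game,F)$ --- which is the assertion of the lemma. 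The only structural facts needed are that $T_{\emptyset,Y}$ coincides with $T$ at every state outside $Y$, and that each state of $Y$ is an absorbing goal state of $\game_{\emptyset,Y}^2$ (hence has rank $0$ there).

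The base case is immediate, since $Z_0 = F \subseteq F \cup Y = Z_0'$. For the inductive step, fix $k \ge 1$ and $s \in Z_k$, assuming the inclusion at level $k-1$. If $s \in Z_{k-1}$ then $s \in Z_{k-1}' \subseteq Z_k'$ by the induction hypothesis and monotonicity (Proposition~\ref{prop:attractor-property}(1)), and if $s \in Y$ then $s \in Z_0' \subseteq Z_k'$; so assume $s \in Z_k \setminus Z_{k-1}$ and $s \notin Y$. Then $s \notin F$ (as $F = Z_0 \subseteq Z_{k-1}$), and $T_{\emptyset,Y}(s,\cdot) = T(s,\cdot)$ because $s \notin Y$. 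The construction of $Z_k$ in Algorithm~\ref{alg:zielonka} places $s$ in $\mathsf{Pre}_1(Z_{k-1})$ if $s \in S_1$ and in $\mathsf{Pre}_2(Z_{k-1})$ if $s \in S_2$; in either case the relevant successors $T(s,a)$ all lie in $Z_{k-1}$, so by the induction hypothesis they lie in $Z_{k-1}'$ as well. Since $T_{\emptyset,Y}(s,a) = T(s,a)$ for these actions, the same membership now witnesses $s \in \mathsf{Pre}_1(Z_{k-1}')$ (respectively $s \in \mathsf{Pre}_2(Z_{k-1}')$) in $\game_{\emptyset,Y}^2$, hence $s \in Z_k'$. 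This closes the induction.

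The single point that needs care is that a play realizing $\rank_\game(s) = k$ may traverse states of $Y$ before reaching $F$, so one cannot naively transport P2's strategy along the same transitions without comment. This is exactly what the $s \in Y$ sub-case handles: reaching $Y$ only helps P2, since states of $Y$ are goal states of $\game_{\emptyset,Y}^2$. An equivalent operational phrasing of the argument is that P2 replays its rank-$k$ winning strategy from $\game$ at every state outside $Y$; a resulting play of $\game_{\emptyset,Y}^2$ either enters $Y$ within $k$ steps, and is already winning, or stays outside $Y$, in which case --- the transitions being unchanged there --- it is also a legal play of $\game$ and so reaches $F$ within $k$ steps. I do not anticipate any genuine obstacle beyond careful bookkeeping of this case.
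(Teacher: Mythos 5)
Your proof is correct. The paper argues directly on plays: it fixes a P2 sure winning strategy realizing $\rank_\game(s)$ and observes that any resulting play of $\game_{\emptyset,Y}^2$ either enters $Y$ (already a goal state there) or avoids $Y$ and, the transitions being unchanged off $Y$, reaches $F$ exactly as in $\game$ --- so in both cases $F\cup Y$ is reached within $\rank_\game(s)$ steps. This is precisely the ``operational phrasing'' you sketch in your closing paragraph, so the underlying insight (transitions agree outside $Y$, and hitting $Y$ only helps P2) is the same. What you do differently is to carry out the bookkeeping at the level of Algorithm~\ref{alg:zielonka} itself, proving $Z_k \subseteq Z_k'$ by induction on $k$ and reading the rank inequality off the definition of rank as the least index of membership. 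Your version buys a fully mechanical argument that never has to reason informally about ``the number of steps a strategy needs'' and handles the $Y$-traversal issue by the trivial sub-case $s\in Y\subseteq Z_0'$; the paper's version buys brevity and stays closer to the strategy-level intuition that is reused later (e.g.\ in Proposition~\ref{prop:effect-on-p2}). Both are sound, and your case analysis in the inductive step (splitting on $s\in Z_{k-1}$, $s\in Y$, and the new states outside $Y$, where $T_{\emptyset,Y}$ coincides with $T$) is complete.
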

\begin{proof}
	Recall that the rank $\rank_\game(s)$ of a state $s \in \win_2(\game, F)$ in game $\game$ is the smallest number of steps in which P2 can ensure a visit to $F$, regardless of the deterministic strategy followed by P1. By definition, in game $\game$, every path in $\outcomes_\game(s, \pi_1, \pi_2)$ from any state $s \in \win_2(\game, F)$ is ensured to visit $F$ for any valid P1 strategy $\pi_1$ and any sure winning strategy $\pi_2$ of P2. Since, in game $\game_{\emptyset, Y}^2$, the presence of fake targets does not affect the transitions from any state except those in $Y$ and all states in $F \cup Y$ are sink states, two possibilities arise for any path $\rho \in \outcomes_{\game_{\emptyset, Y}^2}(s, \pi_1, \pi_2)$: either $\rho$ visits $Y$ before visiting $F$, or $\rho$ visits $F$ without visiting $Y$. In both cases, the number of steps required to visit $F \cup Y$ is at most $\rank_\game(s)$. The rank of $s$ in $\game_{\emptyset, Y}^2$ is strictly smaller than $\rank_\game(s)$ when P2 has a sure winning strategy from $s$ to visit $Y$.
\end{proof}

Since the presence of traps does not affect P2's perception, it does not affect the ranks of the states. Hence, Lemma~\ref{lma:rank-reduction} extends naturally to games containing both types of decoys.

\begin{corollary}
	\label{cor:rank-reduction}
	For any state $s \in \win_2(\game, F)$, its rank in $\game_{X, Y}^2$ is less than or equal to its rank in $\game$.
\end{corollary}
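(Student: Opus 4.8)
The plan is to reduce the statement directly to Lemma~\ref{lma:rank-reduction} by exploiting the fact that the trap set $X$ is invisible to P2. Recall from Definition~\ref{def:p2-perceptual-game} that P2's perceptual game $\game_{X,Y}^2 = \langle S, A, T, s_0, F \cup Y \rangle$ does not depend on $X$ at all: it uses the original transition function $T$ and the goal set $F \cup Y$. Hence running Algorithm~\ref{alg:zielonka} on $\game_{X,Y}^2$ produces exactly the same level-sets $Z_0, Z_1, \ldots, Z_K$ as running it on $\game_{\emptyset, Y}^2 = \langle S, A, T_{\emptyset, Y}, s_0, F \cup Y \rangle$: the states of $Y$ are sink states in both games (being goal states), so the only discrepancy between the transition functions $T$ and $T_{\emptyset, Y}$ occurs at states that are never left in either game and therefore do not influence the attractor computation. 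Consequently $\rank_{\game_{X,Y}^2}(s) = \rank_{\game_{\emptyset, Y}^2}(s)$ for every $s \in S$.

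With this identity in hand, I would invoke Lemma~\ref{lma:rank-reduction} applied to the set $Y$: for every $s \in \win_2(\game, F)$ it gives $\rank_{\game_{\emptyset, Y}^2}(s) \le \rank_\game(s)$. Chaining the two facts yields $\rank_{\game_{X,Y}^2}(s) \le \rank_\game(s)$, which is precisely the claim. Thus the corollary is essentially the bookkeeping observation that placing traps leaves P2's subjective rank function untouched, so the rank decrease established for fake targets alone carries over verbatim.

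The only point requiring a moment of care is the mild mismatch in hypotheses: Lemma~\ref{lma:rank-reduction} is stated for $Y \subseteq \win_2(\game, F) \setminus F$, whereas Problem~\ref{prob:decoy-placement} only assumes $Y \subseteq S \setminus F$. If $Y$ contains fake targets lying outside P2's perceived winning region, I would handle this by noting that enlarging the goal set can only shrink ranks — formally via the monotonicity of $\win_2$ in its target set (Proposition~\ref{prop:attractor-property}, item~2) together with the attractor construction — so the inequality still holds; alternatively one may simply apply the lemma to $Y \cap \win_2(\game, F)$, since the remaining fake targets are unreachable from within $\win_2(\game, F)$ without first hitting $F$ and hence do not alter the relevant ranks. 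Beyond this, there is no substantive obstacle.
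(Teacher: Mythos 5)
Your argument is correct and takes essentially the same route as the paper, whose proof is exactly the observation that traps do not affect P2's perceptual game and hence leave the ranks unchanged, so Lemma~\ref{lma:rank-reduction} carries over verbatim. One minor caveat: in your closing remark, the claim that fake targets outside $\win_2(\game, F)$ are unreachable from within that region without first hitting $F$ is not true in general (P2 may take non-winning actions that exit the region), but your primary fallback via monotonicity of the attractor in the target set (Proposition~\ref{prop:attractor-property}) already covers that case, so the proof stands.
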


We now introduce a proposition to summarize the effect of decoys on the size of winning regions of P1 and P2 as perceived by P2.

\begin{proposition}
	\label{prop:effect-on-p2}
	The following statements about $\game_{X, Y}^2$ are true.
	\begin{enumerate}[(a)]
		\item If $Y \subseteq \win_1(\game, F)$, then $\win_1(\game_{X, Y}^2, F \cup Y) \subseteq \win_1(\game, F)$ and $\win_2(\game_{X, Y}^2, F \cup Y) \supseteq \win_2(\game, F)$.
		\item If $Y \subseteq \win_2(\game, F) \setminus F$, then $\win_1(\game_{X, Y}^2, F \cup Y) = \win_1(\game, F)$ and $\win_2(\game_{X, Y}^2, F \cup Y) = \win_2(\game, F)$.
	\end{enumerate}
\end{proposition}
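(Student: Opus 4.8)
The plan is to reduce both parts to the attractor calculus recorded in Proposition~\ref{prop:attractor-property}, together with determinacy (Fact~\ref{fact:determinacy}). The starting observation is that the traps $X$ are irrelevant to P2's perceptual game: its goal set is $F\cup Y$ and its dynamics is $T$, with the states of $Y$ behaving as sinks once they are declared goal states, exactly as in the game $\game_{\emptyset,Y}^2$ of Lemma~\ref{lma:rank-reduction} and consistently with Remark~\ref{rmk:p2-game-is-base-game}. Since declaring the states of $Y$ to be goal (hence sink) states does not change which states can force a visit to $F\cup Y$ — any path reaching $Y$ has already met the objective — we get $\win_i(\game_{X,Y}^2, F\cup Y) = \win_i(\game, F\cup Y)$ for $i=1,2$. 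So it suffices to compare $\win_i(\game, F\cup Y)$ with $\win_i(\game, F)$.

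For part (a), since $F\subseteq F\cup Y$, Proposition~\ref{prop:attractor-property}(2) gives $\win_2(\game, F)\subseteq \win_2(\game, F\cup Y)$, which is the claimed inclusion for P2. The inclusion for P1 then follows by complementation using Fact~\ref{fact:determinacy}:
\[
\win_1(\game_{X,Y}^2, F\cup Y) = S\setminus\win_2(\game, F\cup Y) \subseteq S\setminus\win_2(\game, F) = \win_1(\game, F).
\]
(Note the hypothesis $Y\subseteq\win_1(\game,F)$ is not actually needed for these inclusions; it only pins down the regime of interest and makes them typically strict.)

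For part (b) I would prove the single equality $\win_2(\game, F\cup Y) = \win_2(\game, F)$, and again obtain the P1 equality by complementation. The inclusion $\supseteq$ is Proposition~\ref{prop:attractor-property}(2), as above (it can also be read off directly from Corollary~\ref{cor:rank-reduction}, since ranks of states in $\win_2(\game,F)$ only decrease). For $\subseteq$, instantiate Proposition~\ref{prop:attractor-property}(3) with $F_1 = F$ and $F_2 = Y$:
\[
\win_2(\game, F\cup Y) = \win_2\bigl(\game,\, \win_2(\game, F)\cup \win_2(\game, Y)\bigr).
\]
Taking $F_1 = F_2 = F$ in item (3) yields the idempotence $\win_2(\game, \win_2(\game, F)) = \win_2(\game, F)$, and since $Y\subseteq\win_2(\game, F)\setminus F\subseteq \win_2(\game, F)$ by hypothesis, monotonicity (item (2)) gives $\win_2(\game, Y)\subseteq \win_2(\game, \win_2(\game, F)) = \win_2(\game, F)$. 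Hence $\win_2(\game, F)\cup\win_2(\game, Y) = \win_2(\game, F)$, and substituting back gives $\win_2(\game, F\cup Y) = \win_2(\game, \win_2(\game, F)) = \win_2(\game, F)$, as required. One could instead argue operationally — compose a P2 strategy forcing a visit to $F\cup Y$ with memoryless winning strategies from the landing states in $Y\subseteq\win_2(\game,F)$ to force a visit to $F$ — but that needs a finite-memory bookkeeping argument, whereas Proposition~\ref{prop:attractor-property}(3) packages it cleanly.

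I do not anticipate a genuine obstacle here. The only care required is bookkeeping of which game and which goal set each $\win$ operator refers to, plus the purely definitional point in the first paragraph that making the states of $Y$ absorbing in $\game_{X,Y}^2$ leaves the winning regions for the objective $F\cup Y$ unchanged from the base game with the enlarged goal set.
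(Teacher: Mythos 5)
Your proof is correct, and for part (b) it takes a genuinely different route from the paper's. The preliminary reduction---that $\game_{X,Y}^2$ is just $\game$ with the enlarged goal set $F\cup Y$, and that making the states of $Y$ absorbing is harmless for a reachability objective---together with your part (a) via monotonicity plus determinacy matches the paper's content in substance; the paper routes the monotonicity through Corollary~\ref{cor:rank-reduction} (ranks of states in $\win_2(\game,F)$ can only decrease), but that is the same fact, and your side remark that the hypothesis $Y\subseteq\win_1(\game,F)$ is not actually needed for the inclusions in (a) is also accurate. The real divergence is the inclusion $\win_2(\game_{X,Y}^2,F\cup Y)\subseteq\win_2(\game,F)$ in part (b): the paper argues by contradiction, taking a P2 strategy that forces $F\cup Y$ and composing it, upon reaching $Y$, with a winning strategy for $F$ from $Y\subseteq\win_2(\game,F)$---precisely the operational strategy-composition argument you mention only as an alternative. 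You instead stay entirely inside the attractor calculus: instantiate Proposition~\ref{prop:attractor-property}(3) with $F_1=F$, $F_2=Y$, derive the idempotence $\win_2(\game,\win_2(\game,F))=\win_2(\game,F)$ from the same item, and use monotonicity to collapse $\win_2(\game,Y)\subseteq\win_2(\game,F)$. Your version avoids the mild bookkeeping of switching strategies mid-play, at the cost of leaning on item (3) of Proposition~\ref{prop:attractor-property}, which the paper cites as known rather than proves; the paper's version is self-contained at the strategy level and makes the deceptive-planning intuition (P2 could simply continue from a fake target toward a real one) explicit. Either argument is acceptable, and the two complement each other.
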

\begin{proof}
	\textbf{(a).} Consider the statement $\win_2(\game_{X, Y}^2, F \cup Y) \supseteq \win_2(\game, F)$. Let $s \in \win_2(\game, F)$. By Corollary~\ref{cor:rank-reduction}, the rank of $s$ in $\game_{X, Y}^2$ is smaller than its rank in $\game$. Since any state with a finite rank in $\game_{X, Y}^2$ is a winning state for P2, $s \in \win_2(\game_{X, Y}^2, F \cup Y)$. The statement $\win_1(\game_{X, Y}^2, F \cup Y) \subseteq \win_1(\game, F)$ follows from $\win_2(\game_{X, Y}^2, F \cup Y) \supseteq \win_2(\game, F)$ using Fact~\ref{fact:determinacy}.
	
	\textbf{(b)} Consider the statement $\win_2(\game_{X, Y}^2, F \cup Y) = \win_2(\game, F)$. %We will begin by establishing the inclusion $\win_2(\game, F) \subseteq \win_2(\game_{X, Y}^2, F \cup Y)$.

	\textbf{($\supseteq$).} Given any state $s \in \win_2(\game, F)$, by Corollary~\ref{cor:rank-reduction}, its rank in $\game_{X, Y}^2$ is finite. Thus, we have $\win_2(\game_{X, Y}^2, F \cup Y) \supseteq \win_2(\game, F)$.

	\textbf{($\subseteq$).} By way of contradiction, suppose there exists a state $s \in \win_2(\game_{X, Y}^2, F \cup Y)$ such that $s \notin \win_2(\game, F)$. This means that P2 has a greedy deterministic strategy, say $\pi_2$, to enforce a visit to $F \cup Y$ in $\game_{X, Y}^2$. But following $\pi_2$ in $\game$ does not induce a visit to $F$. Now, if $\pi_2$ induces a visit to $F$ in $\game_{X, Y}^2$, then it must also be a sure winning strategy for P2 in $\game$, as the presence of fake targets only affects the outgoing transitions from $Y$. Therefore, it must be the case that the following $\pi_2$ induces a visit to $Y$ in $\game_{X, Y}^2$. Since $Y \subseteq \win_2(\game, F)$, in game $\game$, P2 has a greedy deterministic strategy to enforce a visit to $F$ from any state in $Y$. Thus, by following $\pi_2$ until visiting $Y$ and then following any greedy sure winning strategy in game $\game$ to visit $F$ from $Y$, P2 can enforce a visit to $F$ from state $s$---a contradiction.

	The statement $\win_1(\game_{X, Y}^2, F \cup Y) = \win_1(\game, F)$ follows from $\win_2(\game_{X, Y}^2, F \cup Y) = \win_2(\game, F)$ using Fact~\ref{fact:determinacy}.
\end{proof}

Intuitively, Proposition~\ref{prop:effect-on-p2}(a) states that when fake targets are placed within P1's sure winning region in $\game$, P2 misperceives some states that are truly winning for P1 to be winning for itself. This is because P2 misperceives fake targets $Y$ as goal states.

Proposition~\ref{prop:effect-on-p2}(b) is particularly noteworthy, as it reveals that placing fake targets within P2's sure/almost-sure winning region in game $\game$ has no impact on the sure/almost-sure winning regions of the players in P2's perceptual game. This observation is intuitively supported by Corollary~\ref{cor:rank-reduction}, which states that the rank of a state in $\win_2(\game, F) \setminus F$ cannot increase when a subset of states from this set are assigned as fake targets. Additionally, there cannot exist a state outside $\win_2(\game, F)$ from which P2 can enforce a visit to $F \cup Y$ in $\game_{X, Y}^2$. Because, if such a state existed, then it should have been included in $\win_2(\game, F)$ since from all states in $Y$ in game $\game$, P2 has a strategy to enforce a visit to $F$.

However, the inclusion of fake targets in $\win_2(\game, F)$ results in modifying the set of strategies that are subjectively rationalizable for P2 when players use greedy sure winning strategies. This is due to the alteration of state ranks, which influence the set of subjectively rationalizable actions under the sure winning condition available at each state. The following example illustrates this phenomenon.

\begin{example}
	\label{ex:running-ex-2}
	Figure~\ref{fig:running-ex-fake-p2-perception} shows the perceptual games of P1 and P2 when a fake target is placed at the state $s_7$.
	Figure~\ref{fig:running-sub-a} shows P1's perceptual game, in which $s_7$ is marked as a sink state (see honeypot symbol). The sure winning region of P1 in this game contains the states $\{s_7, s_8, s_9, s_{10}, s_{11}\}$ (shown in blue), and that of P2 contains $\{s_0, s_1, s_2, s_3, s_4, s_5, s_6\}$ (shown in red). Figure~\ref{fig:running-sub-b} shows P2's perceptual game, where P2 misperceives $s_7$ as a target. Consequently, the sure winning region of P1 is $\{s_{10}, s_{11}\}$ (shown in blue) and that of P2 contains the states $\{s_0, \ldots, s_9\}$ (shown in red).

	Observe how the fake target $s_7$ affects the ranks of the states $s_0, \ldots, s_9$. When players use greedy sure winning strategies, $s_7$'s rank changes from $3$ in the base game (see Figure~\ref{fig:running-ex}) to $0$ in P2's perceptual game. Similarly, the states $s_5$ and $s_8$, from which P2 has a strategy to visit $s_7$ in one step, attain rank $1$ in P2's perceptual game.

	The changes to the ranks of the states affect P2's subjectively rationalizable strategy in its perceptual game. For instance, consider the action $b_2$ at state $s_5$. In the base game, $b_2$ is subjectively rationalizable for P2 because it is rank-reducing. However, in P2's perceptual game, the action $b_2$ is not rank-reducing. Therefore, it is not subjectively rationalizable. In fact, the action $b_1$, which was not rationalizable in the base game, becomes subjectively rationalizable for P2 in its perceptual game.

	\begin{figure*}[t]
		\centering
		\begin{subfigure}{0.45\textwidth}
			\centering
			\input{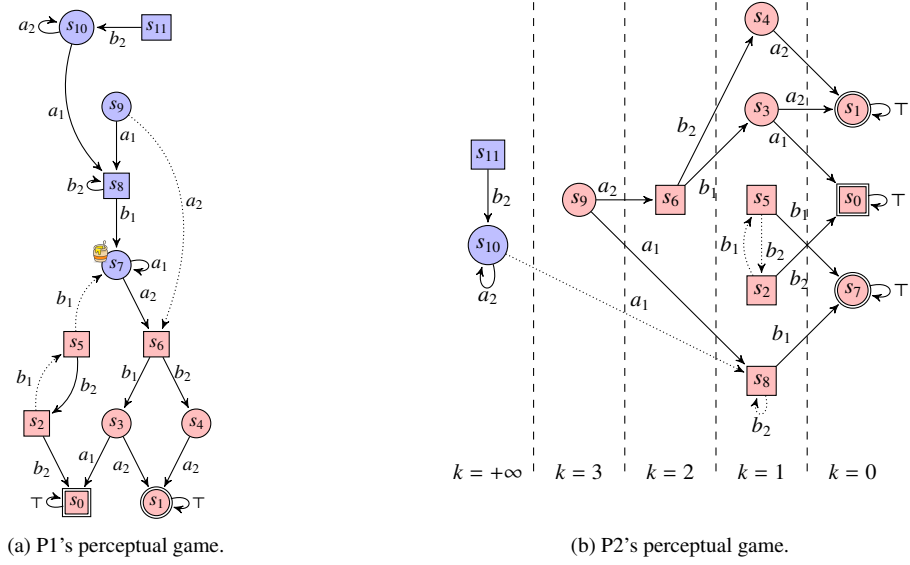}
			\caption{P1's perceptual game. }
			\label{fig:running-sub-a}
		\end{subfigure}
		\hfill
		\begin{subfigure}{0.45\textwidth}
			\centering
			% Tower arrangement (ranks are highlighted) 
% Horizontal
\begin{tikzpicture}[->,>=stealth',shorten >=1pt,auto,node distance=2.5cm, scale = 0.4,transform shape]
	
	\node[state,fill=red!25,accepting] (7) at (0, -1.5) {\huge $s_{7}$};
	\node[state,rectangle,accepting,fill=red!25] (0) at (0, 1.5) {\huge $s_{0}$};
	\node[state,accepting,fill=red!25] (1) at (0, 4.5) {\huge $s_{1}$};
	
	\node[state,rectangle,fill=red!25] (8) at (-3, -4.5) {\huge $s_{8}$};
	\node[state,rectangle,fill=red!25] (2) at (-3, -1.5) {\huge $s_{2}$};
	\node[state,rectangle,fill=red!25] (5) at (-3, 1.5) {\huge $s_{5}$};
	\node[state,fill=red!25] (3) at (-3, 4.5) {\huge $s_{3}$};
	\node[state,fill=red!25] (4) at (-3, 7.5) {\huge $s_{4}$};
	
	\node[state,fill=red!25] (9) at (-9, 1.5) {\huge $s_{9}$};
	\node[state,rectangle,fill=red!25] (6) at (-6, 1.5) {\huge $s_{6}$};
	
	\node[state,fill=blue!25] (10) at (-12, 0) {\huge $s_{10}$};
	\node[state,rectangle,fill=blue!25] (11) at (-12, 3) {\huge $s_{11}$};
	
	%	\draw[dotted] (-1.5, -1.5) -- (-1.5,6);
	\path[draw=black, tips=false, dashed] (-1.5, -7.5) -- (-1.5,8.5);
	\path[draw=black, tips=false, dashed] (-4.5, -7.5) -- (-4.5,8.5);
	\path[draw=black, tips=false, dashed] (-7.5, -7.5) -- (-7.5,8.5);
	\path[draw=black, tips=false, dashed] (-10.5, -7.5) -- (-10.5,8.5);
%	\path[draw=black, tips=false, dashed] (-13.5, -7.5) -- (-13.5,8.5);
	%	\path[draw=black, tips=false, dashed] (-16.5, -7.5) -- (-16.5,8.5);
	
	\node at (0, -7.5) {\huge $k=0$};
	\node at (-3, -7.5) {\huge $k=1$};
	\node at (-6, -7.5) {\huge $k=2$};
	\node at (-9, -7.5) {\huge $k=3$};
	%	\node at (-12, -7.5) {\huge $k=4$};
	%	\node at (-15, -7.5) {\huge $k=5$};
	\node at (-12, -7.5) {\huge $k=+\infty$};
	
	\path (0) edge[loop right]   node {\huge $\top$} (0)
	(1) edge[loop right]   node {\huge $\top$} (1)
	(2) edge[]   node[pos=0.15, below right] {\huge $b_2$} (0)
	(2) edge[bend left,densely dotted]   node[pos=0.5] {\huge $b_1$} (5)
	(3) edge[]   node[pos=0.3, left] {\huge $a_1$} (0)
	(3) edge[]   node[pos=0.3, above] {\huge $a_2$} (1)
	(4) edge[]   node[pos=0.3, left] {\huge $a_2$} (1)
	(5) edge[densely dotted]   node[pos=0.6] {\huge $b_2$} (2)
	(5) edge[]   node[pos=0.15, above right] {\huge $b_1$} (7)
	(6) edge[]   node[pos=0.15,below right] {\huge $b_1$} (3)
	(6) edge[]   node[pos=0.3, above left] {\huge $b_2$} (4)
	(7) edge[loop right]   node {\huge $\top$} (7)
	%	(7) edge[]   node[pos=0.3] {\huge $a_2$} (6)
	(8) edge[]   node[pos=0.3] {\huge $b_1$} (7)
	(8) edge[loop below,densely dotted]   node {\huge $b_2$} (8)
	(9) edge[]   node[pos=0.2] {\huge $a_2$} (6)
	(9) edge[]   node[pos=0.3] {\huge $a_1$} (8)
	(10) edge[densely dotted]   node {\huge $a_1$} (8)
	(10) edge[loop below]   node {\huge $a_2$} (10)
	(11) edge[]   node {\huge $b_2$} (10)
	;
%	\path (0) edge[loop right]   node {\huge $\top$} (0)
%	(1) edge[loop right]   node {\huge $\top$} (1)
%	(2) edge[]   node[pos=0.3, below] {\huge $b_2$} (0)
%	(2) edge[bend left,densely dotted]   node[pos=0.7] {\huge $b_1$} (5)
%	(3) edge[]   node[pos=0.3, left] {\huge $a_1$} (0)
%	(3) edge[]   node[pos=0.3, above] {\huge $a_2$} (1)
%	(4) edge[]   node[pos=0.3, left] {\huge $a_2$} (1)
%	(5) edge[densely dotted]   node[pos=0.65] {\huge $b_2$} (2)
%	(5) edge[]   node[pos=0.7] {\huge $b_1$} (7)
%	(6) edge[]   node[pos=0.3,below] {\huge $b_1$} (3)
%	(6) edge[]   node[pos=0.3,above] {\huge $b_2$} (4)
%	(7) edge[loop right]   node {\huge $\top$} (7)
%	%	(7) edge[]   node[pos=0.3] {\huge $a_2$} (6)
%	(8) edge[]   node[pos=0.3] {\huge $b_1$} (7)
%	(8) edge[loop below,densely dotted]   node {\huge $b_2$} (8)
%	(9) edge[]   node[pos=0.2] {\huge $a_2$} (6)
%	(9) edge[]   node[pos=0.3] {\huge $a_1$} (8)
%	(10) edge[densely dotted]   node {\huge $a_1$} (8)
%	(10) edge[loop below]   node {\huge $a_2$} (10)
%	(11) edge[]   node {\huge $b_2$} (10)
%	;

\end{tikzpicture}
			\caption{P2's perceptual game.}
			\label{fig:running-sub-b}
		\end{subfigure}
		\caption{Perceptual games when the state $s_7$ is a fake target. In both sub-figures, the blue-colored states are winning for P1, and the red-colored states are winning for P2. Dotted transitions depict actions that are not subjectively rationalizable for P2 when players use greedy deterministic strategies.}
		\label{fig:running-ex-fake-p2-perception}
	\end{figure*}

\end{example}

	\subsection{Synthesis of P1's Deceptive Sure Winning Strategy}
		\label{subsec:dswin}
		In this subsection, we introduce a new \emph{hypergame on graph} model to synthesize a stealthy deceptive sure winning strategy for P1. Two key observations facilitate our definition of the hypergame on graph: 
\begin{enumerate}[(i)]
	\item \label{obs:i} When the game starts at a P1's sure/almost-sure winning state in $\game_{X, Y}^1$, P1 can prevent the game from reaching $F$ without the use of decoys.
	\item When the game starts from a P2's sure/almost-sure winning state in $\game_{X, Y}^1$, the only way for P1 to prevent the game from visiting $F$ is by forcing a visit to a decoy state.
\end{enumerate}
As a result, P1's safety objective to prevent a visit to $F$ reduces to a reachability objective to visit   $X \cup Y$.

\begin{lemma}
	\label{lma:effect-on-p1-decoys-in-win2}
	Any P1 strategy $\pi_1$ at a state  $s \in \win_2(\game, F)$ that prevents a visit to $F$ in the true game $\game_{X, Y}^1$ must ensure a visit to a state in $X \cup Y$.
\end{lemma}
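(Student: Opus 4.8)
The plan is to argue by contradiction, exploiting the one structural difference between $\game$ and $\game_{X,Y}^1$: the transition functions $T$ and $T_{X,Y}$ coincide everywhere except that in $\game_{X,Y}^1$ every state of $X\cup Y$ has been turned into a self-looping sink. Consequently, any play that never enters $X\cup Y$ is at the same time a legal play of $\game$ and of $\game_{X,Y}^1$, and it is consistent with exactly the same strategies in both games. This ``play transfer'' is the engine of the argument, and the hypothesis $s\in\win_2(\game,F)$ is what supplies the lever.

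Concretely, fix a P1 strategy $\pi_1$ that prevents a visit to $F$ in $\game_{X,Y}^1$ from $s$. Since $s\in\win_2(\game,F)$, P2 has a sure winning strategy $\pi_2^{\star}$ from $s$ in the base game $\game$ (memoryless, by Fact~\ref{fact:determinacy}): against every P1 strategy, every outcome from $s$ visits $F$ in finitely many steps. Take any $\rho\in\outcomes_{\game_{X,Y}^1}(s,\pi_1,\pi_2^{\star})$. Because $\pi_1$ prevents $F$ in $\game_{X,Y}^1$, the play $\rho$ never visits $F$. The claim to establish is that $\rho$ must visit $X\cup Y$ — this is the sense in which $\pi_1$ is forced to ``ensure'' a decoy visit when the opponent plays a (rationalizable) sure winning strategy. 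Suppose not: then $\rho$ stays in $S\setminus(F\cup X\cup Y)$ forever, where $T_{X,Y}=T$, so $\rho$ is a legal play of $\game$ from $s$, with each action still chosen from the support of $\pi_1$ at P1 states and of $\pi_2^{\star}$ at P2 states; hence $\rho\in\outcomes_{\game}(s,\pi_1,\pi_2^{\star})$. But $\pi_2^{\star}$ is sure winning for P2 in $\game$ from $s$, so $\rho$ must visit $F$ — a contradiction. Therefore $\rho$ visits $X\cup Y$. An identical argument handles the almost-sure reading, with $\pi_2^{\star}$ an almost-sure winning strategy of P2 and ``with probability one'' in place of ``in finitely many steps''.

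The step I expect to be the real obstacle is not any computation but the careful handling of the quantifiers and the sink states. The statement must be read against P2's subjectively rationalizable (sure/almost-sure winning) play rather than against arbitrary play — a careless P2 could loiter in $S\setminus(F\cup X\cup Y)$ forever and never be driven into a decoy — so the role of $s\in\win_2(\game,F)$ is exactly to produce a P2 strategy that punishes any attempt by P1 to keep the game out of $F$ without routing it through $X\cup Y$. The other thing to verify carefully is that the prefix of $\rho$ up to (but not including) its first visit to $X\cup Y$ genuinely uses only $T$-transitions, and that $F$ is a sink in both games, so that the play transfer is valid in both directions; once this bookkeeping is in place, the conclusion is immediate from the definition of a sure winning strategy and Fact~\ref{fact:determinacy}.
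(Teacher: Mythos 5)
Your proposal is correct and follows essentially the same route as the paper's proof: argue by contradiction, invoke P2's (sure/almost-sure) winning strategy in the base game guaranteed by $s \in \win_2(\game, F)$, and observe that any play avoiding $X \cup Y$ uses only transitions where $T_{X,Y}$ and $T$ coincide, so it transfers to $\game$ and must visit $F$ — contradicting the assumption on $\pi_1$. Your explicit play-transfer bookkeeping and the remark that the claim is read against P2's rationalizable (winning) play simply make precise what the paper's proof does implicitly.
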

\begin{proof}
	We will focus on the case where players utilize randomized strategies, given that deterministic strategies are a special case of randomized strategies.

	By way of contradiction, suppose that there exists a strategy $\pi_1$ for P1 to prevent the game $\game_{X, Y}^1$ from reaching $F$ starting from state $s$ while ensuring that no state in $X \cup Y$ is visited. In other words, the game remains indefinitely within the set $\win_2(\game, F) \setminus (F \cup X \cup Y)$. However, by definition, for every state $s \in \win_2(\game, F)$, P2 possesses a strategy $\pi_2$ that guarantees a visit to $F$ from $s$ in the original game $\game$, regardless of P1's strategy. Therefore, if P1 follows $\pi_1$ and P2 follows $\pi_2$ in the game $\game_{X, Y}^1$, the resulting path must indefinitely remain within the set $\win_2(\game, F) \setminus (F \cup X \cup Y)$ while also visiting $F$---a contradiction. Consequently, the only way for P1 to prevent the game from reaching $F$ is by visiting the set $X \cup Y$, which contains sink states.
\end{proof}

Following Observation~(\ref{obs:i}) and Lemma~\ref{lma:effect-on-p1-decoys-in-win2}, we define our hypergame on graph model as a reachability game, in which the players only follow strategies that are subjectively rationalizable for P2 and P1's objective is to reach a decoy state. When players use greedy sure winning strategies, the set of subjectively rationalizable actions at a P2 state in $\win_2(\game, F) \setminus F$ is given by 
\begin{align}
	\label{eq:sract-det}
	\srAct(s) = \{a \in A_2 \mid s' = T(s, a) \land \mathsf{rank}_{\game_{X, Y}^2}(s') < \mathsf{rank}_{\game_{X, Y}^2}(s)\}.
\end{align}

\begin{definition}[Hypergame on Graph]
	\label{def:hgame-labeling-misperception}
	Given the game $\game$, the sets of decoys $X, Y \subseteq \win_2(\game, F)$, and a function $\srAct$ that maps every state in $\game$ to a set of subjectively rationalizable actions for P2, the hypergame on graph representing the L1-hypergame $\hgame_1$ is the tuple,
	\[
	\widehat \hgame_1(X, Y)  = \langle \win_2(\game, F), A, \widehat T_{X, Y}, s_0, X \cup Y \rangle,
	\]
	where
	\begin{itemize}
		\item $\win_2(\game, F)$ is set of states.

		\item $\widehat T_{X, Y}: S \times A  \rightarrow S$ is a \emph{deterministic} transition function such that, for any state $s \in \win_2(\game, F)$, $\widehat T_{X, Y}(s, a) = T(s, a)$ if and only if $a \in \srAct(s)$. Otherwise, $\widehat{T}_X(s, a)$ is undefined.

		\item $s_0 \in \win_2(\game, F)$ is an initial state.
		
		\item $X \cup Y \subseteq \win_2(\game, F) \setminus F$ is the set of states representing P1's reachability objective.
	\end{itemize}
\end{definition}

It is noted that the set $X \cup Y$ in $\widehat{\hgame}_1(X, Y)$ defines P1's reachability objective, not P2's objective.

\begin{theorem}
	\label{thm:hgame-swin-is-dswin}
	Every sure winning strategy of P1 in $\widehat{\hgame}(X, Y)$ is a stealthy deceptive sure winning strategy for P1 in the L2-hypergame, $\hgame_2(X, Y)$.
\end{theorem}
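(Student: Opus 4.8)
The plan is to unpack Definition~\ref{def:deceptive-strategy} and verify its two clauses for an arbitrary P1 strategy $\pi_1$ that is sure winning in $\widehat{\hgame}_1(X,Y)$, at an arbitrary state $s$ in P1's sure winning region $W$ of $\widehat{\hgame}_1(X,Y)$; this simultaneously shows $W \subseteq \dswin_1(X,Y)$ and that each such $\pi_1$ witnesses membership. Throughout I would invoke Proposition~\ref{prop:effect-on-p2}(b), so that $\win_2(\game^2_{X,Y}, F\cup Y) = \win_2(\game, F)$, i.e. the state space of $\widehat{\hgame}_1(X,Y)$ coincides with P2's perceived winning region, together with the defining identity $\widehat T_{X,Y}(s',a) = T(s',a)$ whenever $a \in \srAct(s')$, and with the fact that $T_{X,Y}$ agrees with $T$ off $X\cup Y$.

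\textbf{Clause (i): $\pi_1$ is subjectively rationalizable.} This is the easy half. By construction of $\widehat T_{X,Y}$, any strategy in $\widehat{\hgame}_1(X,Y)$ is supported only on actions for which $\widehat T_{X,Y}$ is defined, hence only on actions in $\srAct(s')$ at each $s'\in\win_2(\game,F)$. To regard $\pi_1$ as a strategy on the full game $\hgame_2(X,Y)$ one extends it arbitrarily on $S\setminus\win_2(\game,F)$; by the remark above this set lies outside P2's perceived winning region, where by Definition~\ref{def:sracts} every enabled action is subjectively rationalizable, so the extension is harmless and $\pi_1$ is subjectively rationalizable at every state.

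\textbf{Clause (ii): sure reachability of $X\cup Y$.} Fix $s\in W$ and any subjectively rationalizable P2 strategy $\pi_2$, and let $\rho = s_0 s_1 \cdots \in \outcomes_{\game^1_{X,Y}}(s,\pi_1,\pi_2)$ with $s_0 = s$. The crux is the invariant, proved by induction on $i$: \emph{as long as $s_0,\ldots,s_i\notin X\cup Y$, we have $s_i\in W$ and the transition $s_i\mapsto s_{i+1}$ along $\rho$ is a legal transition of $\widehat{\hgame}_1(X,Y)$, namely $s_{i+1}=\widehat T_{X,Y}(s_i,a)$ for the action $a$ realized at step $i$.} The base case is $s_0=s\in W\subseteq\win_2(\game,F)$. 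For the step, assume $s_i\in W$ and $s_i\notin X\cup Y$. If $s_i$ is a P1 state, the realized $a\in\supp(\pi_1(s_i))\subseteq\srAct(s_i)$, so $s_{i+1}=T_{X,Y}(s_i,a)=T(s_i,a)=\widehat T_{X,Y}(s_i,a)$ (first equality since $s_i\notin X\cup Y$), and following the sure winning strategy $\pi_1$ from $W$ keeps the play in $W$. If $s_i$ is a P2 state, then $a\in\supp(\pi_2(s_i))\subseteq\srAct(s_i)$ by rationalizability of $\pi_2$, so again $s_{i+1}=T_{X,Y}(s_i,a)=\widehat T_{X,Y}(s_i,a)$; and since $s_i$ is a P2 state of $\widehat{\hgame}_1(X,Y)$ lying in P1's sure winning region $W$, every legal $\widehat{\hgame}_1$-successor of $s_i$ is again in $W$, whence $s_{i+1}\in W$. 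Granting the invariant, the prefix of $\rho$ before its first visit to $X\cup Y$ (or all of $\rho$, if $X\cup Y$ is never visited) is a genuine play of $\widehat{\hgame}_1(X,Y)$ from $s$, consistent with $\pi_1$ on P1 moves and with the legal deterministic P2 strategy that picks exactly the realized actions on P2 moves. Because $\pi_1$ is sure winning for P1 in $\widehat{\hgame}_1(X,Y)$ from $s\in W$, this play reaches the target $X\cup Y$ in at most $\card{\win_2(\game,F)}$ steps; hence $\rho$ visits $X\cup Y$ in finitely many steps. As $\pi_2$ and $\rho$ were arbitrary, $\pi_1$ is a stealthy deceptive sure winning strategy at $s$.

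\textbf{Main obstacle.} The delicate part is the invariant, hinging on two points: first, that $T_{X,Y}$ and $\widehat T_{X,Y}$ genuinely agree on the moves that occur, which works precisely because $s_i\notin X\cup Y$ makes $T_{X,Y}(s_i,\cdot)=T(s_i,\cdot)$ while $a\in\srAct(s_i)$ makes $\widehat T_{X,Y}(s_i,a)=T(s_i,a)$; and second, that the induced play cannot leak out of $\win_2(\game,F)$ nor stall at an $F$-state (which is a dead end of $\widehat{\hgame}_1$, hence not in $W$) before reaching $X\cup Y$ — this is subsumed by the standard fact that a sure winning strategy confines the plays it induces to the winning region until the target is met, so no separate case analysis on $F$ is needed. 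A minor bookkeeping point is reducing a randomized $\pi_2$ (and, if one permits it, a randomized $\pi_1$) to the deterministic realizations traced out by the individual paths in $\outcomes(s,\pi_1,\pi_2)$, so that ``sure winning against every strategy of $\widehat{\hgame}_1(X,Y)$'' can be applied path by path.
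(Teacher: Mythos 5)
Your proof is correct and takes essentially the same route as the paper: stealthiness is immediate because $\widehat T_{X,Y}$ by construction only admits actions in $\srAct(\cdot)$, and the visit to $X\cup Y$ is transferred from the sure-winning property in $\widehat{\hgame}_1(X,Y)$ to plays of the true game against subjectively rationalizable P2 strategies. The paper's own proof is a terse version that cites Lemma~\ref{lma:effect-on-p1-decoys-in-win2} for this second half, whereas you make the underlying play-correspondence explicit via your invariant (plays stay in $W$ and follow legal $\widehat{\hgame}_1$-transitions until $X\cup Y$ is hit); this is a more detailed rendering of the same argument, not a different one.
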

\begin{proof}
	Every action available to P1 and P2 in $\widehat{\hgame}(X, Y)$ is greedy and subjectively rationalizable for P2 by construction. Therefore, every sure winning strategy of P1 in $\widehat{\hgame}(X, Y)$ is greedy and subjectively rationalizable for P2. By Lemma~\ref{lma:effect-on-p1-decoys-in-win2}, the strategy is stealthy deceptive sure winning for P1 in $\hgame_2(X, Y)$.
\end{proof}

\begin{example}
	In Figure~\ref{fig:running-ex-dswin}, we present the hypergame on a graph that captures the interaction between P1 and P2 as described in Example~\ref{ex:running-ex-1}. The hypergame includes states $s_0 \ldots s_9$, representing P2's sure winning region in the base game. Dotted transitions represent P2's actions that are not subjectively rationalizable for P2 in its perceptual game and are excluded from the hypergame on graph. The result of applying Algorithm~\ref{alg:zielonka} to the hypergame on graph is shown by coloring the states of the hypergame on graph. Cyan-colored states indicate that P1 has a sure winning strategy to reach $s_7$ from those states, representing P1's stealthy deceptive sure winning region. Red-colored states indicate P2's sure winning region, from which P1 has no deceptive strategy to prevent a visit to $s_0$ or $s_1$.

	For example, P1's sure winning strategy at $s_9$ is to select action $a_1$, which leads to the P2 state $s_8$. From there, the only subjectively rationalizable action for P2 is $b_1$, which leads the game to visit the fake target. It is important to note that action $a_2$ at state $s_9$ is stealthy since it is subjectively rationalizable for P2 but not deceptive sure winning for P1, as it would lead to state $s_6$ from which P1 does not possess a strategy to prevent the game from reaching either $s_0$ or $s_1$.
	
	Now, consider states $s_2$ and $s_5$. State $s_5$ is a stealthy deceptively sure winning state for P1 because the only greedy strategy available to P2 at $s_5$ selects action $b_1$, which leads to the fake target $s_7$. Note that a strategy that selects $b_2$ at $s_5$ is not greedy because $s_5$ and $s_2$ have ranks equal to $1$. Similarly, state $s_2$ is not stealthy deceptively sure winning state for P1 because the only greedy strategy at $s_2$ is to select action $b_2$ that leads to a true final state $s_0$, which P1 aims to prevent.
 
    \begin{figure}
        \centering
        % Tower arrangement (ranks are highlighted) 
% Horizontal
% Tower arrangement (ranks are highlighted) 
% Horizontal
\begin{tikzpicture}[->,>=stealth',shorten >=1pt,auto,node distance=2.5cm, scale = 0.5,transform shape]
	
	\node[state,rectangle,accepting,fill=red!25] (0) at (0, 0) {\huge $s_{0}$};
	\node[state,accepting,fill=red!25] (1) at (0, 3) {\huge $s_{1}$};
	\node[state,rectangle,fill=red!25] (2) at (-3, -1.5) {\huge $s_{2}$};
	\node[state,fill=red!25] (3) at (-3, 1.5) {\huge $s_{3}$};
	\node[state,fill=red!25] (4) at (-3, 4.5) {\huge $s_{4}$};
	\node[state,rectangle,fill=cyan!35] (5) at (-6, 0) {\huge $s_{5}$};
	\node[state,rectangle,fill=red!25] (6) at (-6, 3) {\huge $s_{6}$};
	\node[state,fill=cyan!35,accepting] (7) at (-9, 1.5) {\huge $s_{7}$};
	\node[state,rectangle,fill=cyan!35] (8) at (-12, 1.5) {\huge $s_{8}$};
	\node[state,fill=cyan!35] (9) at (-15, 1.5) {\huge $s_{9}$};
%	\node[state,fill=blue!25] (10) at (-18, 0) {\huge $s_{10}$};
%	\node[state,rectangle,fill=blue!25] (11) at (-18, 3) {\huge $s_{11}$};
	
	%	\draw[dotted] (-1.5, -1.5) -- (-1.5,6);
	%	\path[draw=black, tips=false, dashed] (-1.5, -3) -- (-1.5,6);
	%	\path[draw=black, tips=false, dashed] (-4.5, -3) -- (-4.5,6);
	%	\path[draw=black, tips=false, dashed] (-7.5, -3) -- (-7.5,6);
	%	\path[draw=black, tips=false, dashed] (-10.5, -3) -- (-10.5,6);
	%	\path[draw=black, tips=false, dashed] (-13.5, -3) -- (-13.5,6);
	%	\path[draw=black, tips=false, dashed] (-16.5, -3) -- (-16.5,6);
	
	%	\node at (0, -3) {\huge $k=0$};
	%	\node at (-3, -3) {\huge $k=1$};
	%	\node at (-6, -3) {\huge $k=2$};
	%	\node at (-9, -3) {\huge $k=3$};
	%	\node at (-12, -3) {\huge $k=4$};
	%	\node at (-15, -3) {\huge $k=5$};
	%	\node at (-18, -3) {\huge $k=+\infty$};
	
	\path (0) edge[loop right]   node {\huge $\top$} (0)
	(1) edge[loop right]   node {\huge $\top$} (1)
	(2) edge[]   node[pos=0.3, below] {\huge $b_2$} (0)
	(2) edge[bend left, densely dotted]   node[pos=0.7] {\huge $b_1$} (5)
	(3) edge[]   node[pos=0.3, below] {\huge $a_1$} (0)
	(3) edge[]   node[pos=0.3, above] {\huge $a_2$} (1)
	(4) edge[]   node[pos=0.3, above] {\huge $a_2$} (1)
	(5) edge[bend left,densely dotted]   node[pos=0.65] {\huge $b_2$} (2)
	(5) edge[bend left]   node[pos=0.7] {\huge $b_1$} (7)
	(6) edge[]   node[pos=0.3,below] {\huge $b_1$} (3)
	(6) edge[]   node[pos=0.3,above] {\huge $b_2$} (4)
	(7) edge[loop right]   node {\huge $a_1$} (8)
	%	(7) edge[]   node[pos=0.3] {\huge $a_2$} (6)
	(8) edge[]   node[pos=0.3] {\huge $b_1$} (7)
	(8) edge[loop above,densely dotted]   node {\huge $b_2$} (8)
	(9) edge[bend left, out=45]   node {\huge $a_2$} (6)
	(9) edge[]   node[pos=0.3] {\huge $a_1$} (8)
%	(10) edge[bend right]   node {\huge $a_1$} (8)
%	(10) edge[loop below]   node {\huge $a_2$} (10)
%	(11) edge[]   node {\huge $b_2$} (10)
	;
\end{tikzpicture}
        \caption{Hypergame on graph constructed based on P1 and P2's perceptual games shown in Figure~\ref{fig:running-ex-fake-p2-perception}. Dotted lines depict P2's subjectively rationalizable actions. The cyan-colored states are stealthy deceptive sure winning states for P1, whereas the red-colored states are sure winning for P2.}
        \label{fig:running-ex-dswin}
    \end{figure}
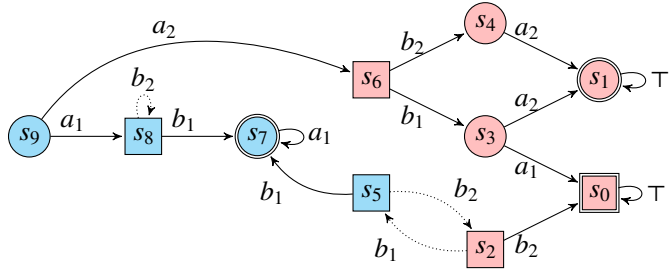
\end{example}

	\subsection{Compositional Synthesis for Decoy Placement}
		\label{subsec:placement}
		Given a placement of traps and fake targets, Theorem~\ref{thm:hgame-swin-is-dswin} provides a way to compute P1's deceptive sure winning region given a fixed decoy allocation $X, Y$. Next, we formulate a combinatorial optimization problem in which P1 aims to maximize the size of its stealthy deceptive sure winning region by allocating traps and fake targets. %{propose a polynomial-time algorithm to find a near-optimal placement of decoys that solve Problem~\ref{prob:decoy-placement}. The problem can be rewritten equivalently as following optimization problem:}

\begin{equation}
	\label{eq:problem}
	\begin{aligned}
		&X^\ast, Y^\ast = \argmax_{X, Y \subseteq \win_2(\game, F) \setminus F} \left| \dswin_1(X, Y) \right|  \\
		&\quad \mbox{subject to: } \card{X} \leq M, \card{Y} \leq N, X \cap Y = \emptyset.
	\end{aligned}
\end{equation}

In Eq.~\eqref{eq:problem}, every distinct choice of  $X, Y$ defines a hypergame, $\widehat\hgame_1(X, Y)$, which must be solved to determine the size of $\dswin_1(X, Y)$. A na\"ive approach to solving Eq.~\eqref{eq:problem} is to compute $\dswin_1(X, Y)$ for each valid placement of $X, Y$ and then select a set $X \cup Y$ for which $\abs{\dswin_1(X, Y)}$ is the largest. However, this approach is not scalable because number of hypergames to solve is ${{|\win_2(\game, F) \setminus F|}\choose{M+N}}{{M+N}\choose{M}}$, which grows rapidly with the size of game and number of decoys to place. To address this issue, we introduce a compositional approach to decoy placement in which we show that, when certain conditions hold, the decoy allocation problem can be formulated as a constrained supermodular maximization problem, for which a $(1 - \frac{1}{e})$-approximation can be computed in polynomial time using a greedy algorithm \cite{bai2018greed}.

The key insight behind our algorithm is that \emph{fake targets could be more advantageous than traps}. This enables us to decouple the placement of traps and fake targets. %, allowing us to leverage the findings from our previous work \cite{kulkarni2020decoy}. In that work, we studied Problem~\ref{prob:decoy-placement} when traps alone are placed (i.e., $Y = \emptyset$).

\begin{theorem}
	\label{thm:traps-fakes-subset-relation}
	For any subset $Z \subseteq \win_2(\game, F) \setminus F$, we have $\dswin_1(Z, \emptyset) \subseteq \dswin_1(\emptyset, Z)$. 
%\deleted{placing fake targets at $Z$ is at least as advantageous as placing traps at $Z$. That is,}
\end{theorem}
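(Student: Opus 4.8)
The plan is to phrase both sides in terms of the hypergame on graph. By Theorem~\ref{thm:hgame-swin-is-dswin} together with the construction of $\widehat{\hgame}_1$, a state lies in $\dswin_1(X,Y)$ exactly when P1 has a sure winning strategy for the reachability objective $X\cup Y$ in $\widehat{\hgame}_1(X,Y)$, so it suffices to show $\win_1(\widehat{\hgame}_1(Z,\emptyset),Z)\subseteq\win_1(\widehat{\hgame}_1(\emptyset,Z),Z)$. These two reachability games share the state space $\win_2(\game,F)$, the target $Z$, and the (unrestricted) P1 moves; they differ only in which P2 moves are retained. By Remark~\ref{rmk:p2-game-is-base-game}, in $\widehat{\hgame}_1(Z,\emptyset)$ P2 keeps exactly the actions strictly decreasing $r:=\rank_{\game}$, whereas in $\widehat{\hgame}_1(\emptyset,Z)$ P2 keeps exactly the actions strictly decreasing $r':=\rank_{\game_{\emptyset,Z}^2}$. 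I will use that $r'(s)\le r(s)$ on $\win_2(\game,F)$ (Corollary~\ref{cor:rank-reduction}), and that ranks obey $\rank(s)=1+\min_a\rank(T(s,a))$ at P2 states $s$ outside the target and $\rank(s)=1+\max_a\rank(T(s,a))$ at P1 states $s$ outside the target.

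The technical core is the auxiliary claim that $U:=Z\cup\{t\in\win_2(\game,F)\mid r'(t)<r(t)\}$ is contained in $\win_1(\widehat{\hgame}_1(\emptyset,Z),Z)$, proved by strong induction on $r'(t)$. The base case $r'(t)=0$ forces $t\in Z$. For $r'(t)=m>0$, a member $t$ of $U$ has $t\notin F\cup Z$ and $r'(t)<r(t)$. If $t$ is a P2 state, it has at least one move, and every move $a$ P2 keeps satisfies $r'(T(t,a))<r'(t)$; since $r(T(t,a))\ge r(t)-1\ge r'(t)>r'(T(t,a))$, either $T(t,a)\in Z$ or $r'(T(t,a))<r(T(t,a))$, so $T(t,a)\in U$ and the induction hypothesis places it, hence $t$, in $\win_1(\widehat{\hgame}_1(\emptyset,Z),Z)$. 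If $t$ is a P1 state, suppose no P1 move stays in $U$; then every move $a$ has $T(t,a)\notin Z$ and $r'(T(t,a))=r(T(t,a))\le r'(t)-1$, whence $r(t)=1+\max_a r(T(t,a))\le r'(t)<r(t)$, a contradiction; so some move $a$ has $T(t,a)\in U$ with $r'(T(t,a))<r'(t)$, and P1 replays it.

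With the claim in hand, I would induct over the P1 attractor level sets $W_0=Z\subseteq W_1\subseteq\cdots$ for the reachability objective $Z$ in $\widehat{\hgame}_1(Z,\emptyset)$ and show each $W_k\subseteq\win_1(\widehat{\hgame}_1(\emptyset,Z),Z)$. A P1 state $s\in W_k\setminus W_{k-1}$ has a move into $W_{k-1}$, which P1 replays in $\widehat{\hgame}_1(\emptyset,Z)$, so $s$ is winning by the induction hypothesis. A P2 state $s\in W_k\setminus W_{k-1}$ has at least one move in $\widehat{\hgame}_1(\emptyset,Z)$, and for each such move $a$ one has $r'(T(s,a))<r'(s)$; if $a$ also decreases $r$ then $T(s,a)\in W_{k-1}$ and is winning by the induction hypothesis, and otherwise $r(T(s,a))\ge r(s)\ge r'(s)>r'(T(s,a))$ gives $T(s,a)\in U$, hence winning by the claim. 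Since every move P2 can make at $s$ leads into $\win_1(\widehat{\hgame}_1(\emptyset,Z),Z)$, so does $s$. Passing to the fixed point yields $\dswin_1(Z,\emptyset)\subseteq\dswin_1(\emptyset,Z)$.

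The main obstacle is the P1-state case of the auxiliary claim: this is precisely where the slogan ``fake targets beat traps'' is cashed out, and the subtlety is that one cannot just follow a single designated action---it is the averaging identity $\rank(s)=1+\max_a\rank(T(s,a))$ at P1 states, together with $r'\le r$, that forces at least one P1 successor to remain in the region $U$ on which P1 controls a route to $Z$. The conceptual heart, which lets the two inductions mesh, is the observation that any P2 action which becomes subjectively rationalizable in the fake-target game but was not so in the base game necessarily leads to a state that is strictly ``closer'' to $Z$ in the sense $r'<r$ (or into $Z$ itself); the rest is bookkeeping with the rank recurrences.
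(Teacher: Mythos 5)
Your proposal is correct. It shares the paper's overall skeleton---reduce both sides to P1 sure-winning for the target $Z$ in the two hypergames on graph, then induct over the attractor levels of $\dswin_1(Z,\emptyset)$, replaying P1's own action in the P1-state case since every P1 action is subjectively rationalizable in both games---but it diverges at the crux, the P2-state case. The paper proves the stronger level-wise inclusion $\dswin_1^{k+1}(Z,\emptyset)\subseteq\dswin_1^{k+1}(\emptyset,Z)$ and disposes of P2 states in two lines by invoking Corollary~\ref{cor:rank-reduction}, implicitly tying the attractor level toward $Z$ to the rank toward the (perceived) goal set; you instead settle for the weaker (and sufficient) conclusion that each level sits inside the full region $\dswin_1(\emptyset,Z)$, and you pay for the P2-state case with a separate lemma: the rank-discrepancy region $U=Z\cup\{t\in\win_2(\game,F)\mid \rank_{\game^2_{\emptyset,Z}}(t)<\rank_{\game}(t)\}$ is P1 sure-winning for $Z$ in $\widehat{\hgame}_1(\emptyset,Z)$, proved by strong induction on the perceived rank using the $\min$/$\max$ rank recurrences at P2/P1 states. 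This is exactly the point the paper glosses over---a P2 action that becomes subjectively rationalizable only because of the fake targets need not reduce the true rank, so it can exit the level-$k$ attractor of the trap hypergame, and your lemma shows such actions necessarily land in $U$ (or in $Z$), where P1 still controls a route to $Z$. The trade-off: the paper's argument is shorter and yields the level-wise containment, but your two-stage argument makes the "newly rationalizable actions point toward the decoys" mechanism explicit and is the more airtight justification of the P2-state step; the bookkeeping (successors of hypergame states staying in $\win_2(\game,F)$ via Proposition~\ref{prop:effect-on-p2}(b), exclusion of $F$-states, existence of at least one retained P2 move) is routine and you handle it correctly.
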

%\ak{revised the proof. Used proof by induction.}
\begin{proof}
	Recall that the stealthy deceptive sure winning region in the true game is determined by computing P1's sure winning region to reach the decoys in the hypergame. Therefore, the winning regions $\dswin_1(Z, \emptyset)$ and $\dswin_1(\emptyset, Z)$ have an attractor structure. Given any $X, Y \subseteq \win_2(\game, F) \setminus F$, let $\dswin_1^i(X, Y)$ denote the $i$-th level of attractor of the sure winning region $\dswin_1(X, Y)$ in hypergame $\widehat{\hgame}_1(X, Y)$. 
	
	We will prove by induction that, for any $n \geq 0$, 
	\begin{align} \label{eq:induction-hypothesis}
		\dswin_1^n(Z, \emptyset) \subseteq \dswin_1^n(\emptyset, Z).
	\end{align}
	
	\textbf{(Base Case).} The statement is true for $n = 0$ because $\dswin_1^0(Z, \emptyset) = \dswin_1^0(\emptyset, Z) = Z$.
	
	\textbf{(Induction Step).} Let $k \geq 0$ be an integer. Suppose that Eq.~\eqref{eq:induction-hypothesis} holds for $n = k$. To show that every state $s \in \dswin_1^{k+1}(Z, \emptyset)$ is an element of $\dswin_1^{k+1}(\emptyset, Z)$, we consider two cases. 
	
	First, when $s$ is a P1 state, P1 has an action in game $\game_{Z, \emptyset}^2$ at state $s$ to visit $\dswin_1^{k}(Z, \emptyset)$ in one step. Since all P1 actions at a state in $\win_2(\game, F)$ are subjectively rationalizable for P2, due to the induction hypothesis, using the same action at $s$ would lead the game $\game_{\emptyset, Z}^2$ to visit $\dswin_1^{k}(\emptyset, Z)$ in one step. Hence, every P1 state in $\dswin_1^{k+1}(Z, \emptyset)$ is an element in $\dswin_1^{k+1}(\emptyset, Z)$. 
	
	Next, consider the case when $s$ is a P2 state. Since $s \in \dswin_1^{k+1}(Z, \emptyset)$, in game $\game_{Z, \emptyset}^2$, P1 can ensure the game to visit $Z$ in at most $(k+1)$-steps.  Now, consider the state $s$ in game $\game_{\emptyset, Z}^2$. Since $\game_{Z, \emptyset}^2 = \game$, the rank of $s$ in $\game$ (and thus $\game_{Z, \emptyset}^2 $) must be smaller than or equal to $k+1$ in game $\game_{\emptyset, Z}^2$ due to Corollary~\ref{cor:rank-reduction}. That is, $s \in \dswin_1^{k+1}(\emptyset, Z)$.  
\end{proof}

Theorem~\ref{thm:traps-fakes-subset-relation} shows that any greedy algorithm to place traps and fake targets to solve Problem~\ref{prob:decoy-placement} must place fake targets before placing the traps.

In our previous work \cite{kulkarni2020decoy}, we have studied Problem~\ref{prob:decoy-placement} when only traps are placed, \ie, $Y = \emptyset$. Hence, we first investigate how to place the fake targets to maximize the deceptive sure-winning region for P1, given only fake targets. Then, we propose an algorithm to solve Problem~\ref{prob:decoy-placement} under sure winning condition by sequentially placing the fake targets and traps. 

The concept of compositionality is important in developing a greedy algorithm for Problem~\ref{prob:decoy-placement}. It enables us to incrementally place fake targets one by one, thereby constructing $\dswin_1(\emptyset, Y)$ in an incremental manner. The following proposition states that $\dswin_1(\emptyset, Y)$ supports compositionality.

\begin{proposition}
	\label{prop:or-composition}
	Consider three placements of fake targets given by $Y_1 = \{s_1\}$, $Y_2 = \{s_2\}$, and $Y = Y_1 \cup Y_2$. Let $\dswin_1(\emptyset, Y_1)$ and $\dswin_1(\emptyset, Y_2)$ be P1's deceptive sure-winning regions in the hypergames $\widehat\hgame_1(\emptyset, Y_1)$ and $\widehat\hgame_1(\emptyset, Y_2)$, respectively. Then, P1's deceptive sure-winning region $\dswin_1(\emptyset, Y)$ in the hypergame $\widehat\hgame_1(\emptyset, Y)$ is equal to the sure-winning region for P1 in the following game:
	\begin{align*}
		\widehat\hgame_1(\emptyset, Y) = \langle \win_2(\game_{\emptyset, Y}^2, F), A, \widehat T_{\emptyset, Y},  s_0, 
		\dswin_1(\emptyset, Y_1) \cup \dswin_1(\emptyset, Y_2) \rangle,
	\end{align*}
	
	where P1's goal is to reach the target set $\dswin_1(\emptyset, Y_1) \cup \dswin_1(\emptyset, Y_2)$ and P2's goal is to prevent P1 from reaching the target set.
\end{proposition}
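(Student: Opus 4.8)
The plan is to reduce the statement to the standard compositionality of reachability attractors within one fixed graph, after reconciling the three hypergames that appear. Write $H \coloneqq \widehat{\hgame}_1(\emptyset, Y)$, $H_i \coloneqq \widehat{\hgame}_1(\emptyset, Y_i)$, $W \coloneqq \dswin_1(\emptyset, Y)$ and $W_i \coloneqq \dswin_1(\emptyset, Y_i)$. By Proposition~\ref{prop:effect-on-p2}(b) the three hypergames share the state set $\win_2(\game, F)$, and in each of them P1's objective is reachability, so every deceptive sure-winning region is a P1-reachability attractor in the relevant graph and inherits the properties of Proposition~\ref{prop:attractor-property} (read with P1 as the reacher). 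Since the game displayed in the statement is $H$ with its target replaced by $W_1 \cup W_2$, the goal is $\win_1(H, Y) = \win_1(H, W_1 \cup W_2)$. One inclusion is immediate: $s_i \in W_i$ gives $Y = \{s_1, s_2\} \subseteq W_1 \cup W_2$, and attractor monotonicity in the target (the analogue of Proposition~\ref{prop:attractor-property}(2) for $H$) gives $\win_1(H, Y) \subseteq \win_1(H, W_1 \cup W_2)$. For the reverse it suffices to prove $W_1 \cup W_2 \subseteq W$; then monotonicity yields $\win_1(H, W_1 \cup W_2) \subseteq \win_1(H, W)$, and idempotence $\win_1(H, \win_1(H, Y)) = \win_1(H, Y)$ (the analogue of Proposition~\ref{prop:attractor-property}(3) with $F_1 = F_2 = Y$) finishes it.

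The substance is therefore $W_1 \cup W_2 \subseteq W$: enlarging the fake-target set never shrinks the deceptive sure-winning region. This is \emph{not} a consequence of attractor monotonicity, because $H$ and $H_i$ differ not only in their targets but also in their transition functions: by Corollary~\ref{cor:rank-reduction}, adding a fake target only lowers the ranks in P2's perceptual game, and this can both create and destroy P2 actions that are subjectively rationalizable. I would bridge this with an auxiliary claim. Let $R \coloneqq \{ s \in \win_2(\game, F) : \rank_{\game_{\emptyset, Y}^2}(s) < \rank_{\game_{\emptyset, Y_1}^2}(s) \}$ be the set of states whose perceived rank strictly decreases once $s_2$ is marked as a fake target; the claim is $R \subseteq W$. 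The argument is that $R$ is ``funnelled toward $s_2$'' in $H$: at a P2 state of $R$, the only actions available in $H$ are the $\game_{\emptyset,Y}^2$-rank-reducing ones, which move exactly one rank closer to $F \cup Y$ and, using Corollary~\ref{cor:rank-reduction}, remain inside $R$; at a P1 state of $R$, the action maximizing the successor's $\game_{\emptyset,Y_1}^2$-rank is $\game_{\emptyset,Y}^2$-rank-reducing and likewise remains in $R$. Hence P1 has a strategy in $H$ that keeps the play in $R$ with strictly decreasing $\game_{\emptyset,Y}^2$-rank, so the play reaches a $\game_{\emptyset,Y}^2$-rank-$0$ state of $R$; since the states of $F$ and the state $s_1$ also have $\game_{\emptyset, Y_1}^2$-rank $0$ and so lie outside $R$, that state must be $s_2$, and the play reaches $s_2 \in Y$. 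The symmetric statement $\{ s : \rank_{\game_{\emptyset, Y}^2}(s) < \rank_{\game_{\emptyset, Y_2}^2}(s) \} \subseteq W$ holds by the same argument.

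With the claim in hand I would prove $W_1 \subseteq W$ (and symmetrically $W_2 \subseteq W$) by induction on the attractor level $\dswin_1^n(\emptyset, Y_1)$ of $H_1$. The base case $\dswin_1^0(\emptyset, Y_1) = \{s_1\} \subseteq Y \subseteq W$ is clear. In the step: a P1 state $s \in \dswin_1^{n+1}(\emptyset, Y_1)$ has an action into $\dswin_1^{\le n}(\emptyset, Y_1) \subseteq W$ which is also available in $H$ (every P1 action is subjectively rationalizable), so $s \in W$; and a P2 state $s \in \dswin_1^{n+1}(\emptyset, Y_1)$ has, for each action $a$ available to it in $H$, either $a$ available in $H_1$ as well, so that the successor lies in $\dswin_1^{\le n}(\emptyset, Y_1) \subseteq W$ by the induction hypothesis, or $a$ not $\game_{\emptyset,Y_1}^2$-rank-reducing, in which case Corollary~\ref{cor:rank-reduction} puts the successor in $R \subseteq W$; either way all $H$-successors of $s$ lie in $W$, so $s \in W$. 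This yields $W_1 \cup W_2 \subseteq W$ and hence the proposition.

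I expect the bridging claim $R \subseteq W$ to be the main obstacle. The tempting shortcut is to invoke attractor compositionality (Proposition~\ref{prop:attractor-property}(3)) directly, but that property lives inside a single fixed game, whereas here the fake-target set controls the transition function through the rank changes of Corollary~\ref{cor:rank-reduction}; identifying the ``funnel'' set $R$ (on which P1 already forces a visit to the newly added fake target) is exactly what makes the three hypergames comparable. Everything else is routine reachability-attractor bookkeeping.
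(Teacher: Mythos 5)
Your proof is correct, but it is not the paper's argument, and the difference is instructive. The paper disposes of this proposition in two lines: it asserts that the underlying graphs of $\widehat\hgame_1(\emptyset,Y_1)$, $\widehat\hgame_1(\emptyset,Y_2)$ and $\widehat\hgame_1(\emptyset,Y)$ are identical, differing only in P1's reachability target, and then applies Proposition~\ref{prop:attractor-property} to get $\dswin_1(\emptyset,Y)=\dswin_1(\emptyset,\dswin_1(\emptyset,Y_1)\cup\dswin_1(\emptyset,Y_2))$. You explicitly reject that shortcut, on the grounds that $\widehat T_{\emptyset,Y}$ is built from the rank-reducing actions of Eq.~\eqref{eq:sract-det}, and adding a fake target changes the ranks (Corollary~\ref{cor:rank-reduction}) and hence which P2 actions survive --- exactly the phenomenon the paper's own running example exhibits at $s_5$ when $s_7$ becomes a fake target. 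That concern is well founded: under the sure-winning semantics the three hypergames do not in general share a transition function, so the single-game compositionality of Proposition~\ref{prop:attractor-property} does not by itself deliver the nontrivial inclusion, which is precisely your $\dswin_1(\emptyset,Y_1)\cup\dswin_1(\emptyset,Y_2)\subseteq\dswin_1(\emptyset,Y)$. Your bridging argument carries that load: the funnel set $R$ of states whose perceived rank strictly drops when $s_2$ is added, the chains $\rank_{\game^2_{\emptyset,Y}}(s')<\rank_{\game^2_{\emptyset,Y}}(s)\le\rank_{\game^2_{\emptyset,Y_1}}(s)\le\rank_{\game^2_{\emptyset,Y_1}}(s')$ at P2 states, the max-rank successor choice at P1 states, and the attractor-level induction all check out, so $R\subseteq\dswin_1(\emptyset,Y)$ and the inclusion follows; the remaining monotonicity/idempotence steps inside the fixed game $\widehat\hgame_1(\emptyset,Y)$ are routine, as you say. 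What the paper's route buys is brevity; what yours buys is an argument that remains valid when the subjectively rationalizable action sets shift, which is the situation actually created by adding a fake target. Two small touch-ups: Corollary~\ref{cor:rank-reduction} literally compares $\game$ with $\game^2_{X,Y}$, so your comparison of $\game^2_{\emptyset,Y_1}$ with $\game^2_{\emptyset,Y}$ should be justified by re-running Lemma~\ref{lma:rank-reduction} with $\game^2_{\emptyset,Y_1}$ as the base game (legitimate because $s_2\in\win_2(\game^2_{\emptyset,Y_1},F\cup Y_1)$ by Proposition~\ref{prop:effect-on-p2}(b)); and ``exactly one rank closer'' is only needed, and only true in the form, ``strictly rank-decreasing.''
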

 
\begin{proof}
	It is observed that the underlying graphs of the three deceptive reachability games, namely $\widehat\hgame_1(\emptyset, Y_1)$, $\widehat\hgame_1(\emptyset, Y_2)$, and $\widehat\hgame_1(\emptyset, Y)$, are identical. They only differ in terms of the reachability objectives of P1. Applying Proposition~\ref{prop:attractor-property}, we have $$\dswin_1(\emptyset, Y) = \dswin_1(\emptyset, \dswin_1(\emptyset, Y_1) \cup \dswin_1(\emptyset, Y_2)),$$ which concludes the proof. 
\end{proof}

\begin{corollary}
	\label{cor:subset}
	Given a set of fake targets, $Y \subseteq \win_2(\game, F) \setminus F$ and a state $s \in \win_2(\game, F) \setminus F$, we have 
	\[
	\dswin_1(\emptyset, {Y}) \cup \dswin_1(\emptyset, {\{s\}}) \subseteq \dswin_1(\emptyset, {Y \cup \{s\}})
	\]
\end{corollary}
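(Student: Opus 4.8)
The plan is to obtain the corollary as an essentially immediate consequence of the compositionality identity in Proposition~\ref{prop:or-composition}, combined with the elementary attractor fact that the reachability objective is always contained in the winning region it induces (Proposition~\ref{prop:attractor-property}, item~1, i.e.\ $Z_0 \subseteq Z_K$). So I would not introduce any new machinery; the work is entirely in lining up the already-established pieces.

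Concretely, I would proceed in two steps. First, I would apply Proposition~\ref{prop:or-composition} with $Y_1 := Y$ and $Y_2 := \{s\}$ — the argument there is written for singletons, but it only uses that the three hypergames share the same underlying graph and differ solely in P1's reachability target, followed by Proposition~\ref{prop:attractor-property}(3); none of this requires $Y$ to be a singleton. This gives
\[
	\dswin_1(\emptyset, Y \cup \{s\}) \;=\; \win_2\!\Big(\widehat\hgame_1(\emptyset, Y\cup\{s\}),\; \dswin_1(\emptyset, Y) \cup \dswin_1(\emptyset, \{s\})\Big),
\]
where the right-hand side is P1's sure winning region in the hypergame on graph $\widehat\hgame_1(\emptyset, Y\cup\{s\})$ when its reachability target is set to $\dswin_1(\emptyset, Y) \cup \dswin_1(\emptyset, \{s\})$. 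Second, since in any reachability game the target set is contained in the attractor it generates, we have
\[
	\dswin_1(\emptyset, Y) \cup \dswin_1(\emptyset, \{s\}) \;\subseteq\; \win_2\!\Big(\widehat\hgame_1(\emptyset, Y\cup\{s\}),\; \dswin_1(\emptyset, Y) \cup \dswin_1(\emptyset, \{s\})\Big),
\]
and chaining this inclusion with the previous equality yields $\dswin_1(\emptyset, Y) \cup \dswin_1(\emptyset, \{s\}) \subseteq \dswin_1(\emptyset, Y \cup \{s\})$, which is exactly the claim.

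The only point requiring care — and the main (mild) obstacle — is the invocation of Proposition~\ref{prop:or-composition} with a non-singleton $Y_1$: one must confirm that $\widehat\hgame_1(\emptyset, Y)$, $\widehat\hgame_1(\emptyset, \{s\})$, and $\widehat\hgame_1(\emptyset, Y\cup\{s\})$ still have a common underlying graph, namely state set $\win_2(\game, F)$ with the subjectively-rationalizable transition relation, when $Y$ is arbitrary rather than a single state. This follows from Proposition~\ref{prop:effect-on-p2}(b) together with Corollary~\ref{cor:rank-reduction}, which guarantee that placing any set of fake targets inside $\win_2(\game, F)\setminus F$ preserves $\win_2(\game, F)$ as the vertex set; beyond this observation, nothing in the corollary's proof needs verification, so no further computation is required.
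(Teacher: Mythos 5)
Your proposal is correct and takes essentially the same route as the paper: its one-line proof likewise combines the attractor/compositionality properties (Proposition~\ref{prop:attractor-property}, which underlies Proposition~\ref{prop:or-composition}) with the fact that a reachability target is contained in the sure-winning region it generates. Your added care in applying Proposition~\ref{prop:or-composition} with a non-singleton $Y_1$ only makes explicit what the paper leaves implicit.
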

 
\begin{proof}
	Follows immediately by Proposition~\ref{prop:attractor-property} and the property of the sure-winning region that the goal states of a reachability objective are a subset of the sure-winning region.
\end{proof}

Thus, if we consider the size of $\dswin_1(\emptyset, {Y})$ to be a measure of the effectiveness of allocating the states in $\win_2(\game, F)$ as fake targets, then Corollary~\ref{cor:subset} states that the effectiveness of adding a new state to a set of decoys is greater than or equal to the sum of their individual effectiveness. In other words, $\dswin_1$ operator is \emph{superadditive} \cite{rosenbaum1950subadditive,hille1996functional}.

%\todo[inline]{Todo later}
Let $\uplus$ represent the operation of composing two deceptive sure winning regions of P1. That is, given any subset $Y \subseteq \win_2(\game, F) \setminus F$ and a state $s \in \win_2(\game, F) \setminus F$,  
\[
\dswin_1(\emptyset, Y \cup \{s\}) =	\dswin_1(\emptyset, Y) \uplus \dswin_1(\emptyset, \{s\}).
\]

With this notation, the problem of optimally placing the fake targets becomes equivalent to identifying a set $Y^\ast \subseteq \win_2(\game, F) \setminus F$ such that, 
\begin{align}
	\label{eq:composition_opt}
	&Y^\ast = \argmax_{Y \subseteq \win_2(\game, F) \setminus F} \left| \biguplus\limits_{s \in Y} \dswin_1(\emptyset, \{s\}) \right|  \\
	&\quad \mbox{subject to: } \card{Y} \leq N \nonumber.
\end{align}

Let $g(Y) = \left| \biguplus\limits_{s \in Y} \dswin_1(\emptyset, \{s\}) \right|$ be a function that counts the number of P1's deceptive sure winning states when the set $Y \subseteq \win_2(\game, F) \setminus F$ is allocated as fake targets.

\begin{theorem} 
	\label{thm:dswinY-is-superadditive}
	The following statements are true.
	\begin{enumerate}[(a)]
		\item $g$ is a monotone, non-decreasing, and superadditive function. 
		
		\item $g$ is submodular if, for all  $Y \subseteq S \setminus F$ and any $s \in S \setminus F$, we have $\dswin_1(\emptyset, Y) \cup \dswin_1(\emptyset, \{s\}) = \dswin_1(\emptyset, Y \cup \{s\})$. 
		
		\item $g$ is supermodular if, for all  $Y \subseteq S \setminus F$ and any $s_1, s_2 \in S \setminus F$ and $s_1 \neq s_2$, we have $\dswin_1(X, Y \cup \{s_1\}) \cap \dswin_1(X, Y \cup \{s_2\}) = \dswin_1(X, Y)$ 
	\end{enumerate}
\end{theorem}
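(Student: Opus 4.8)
The plan is to establish each of the three parts using the attractor structure of $\dswin_1(\emptyset, \cdot)$ together with the compositional identity of Proposition~\ref{prop:or-composition} and the superadditivity from Corollary~\ref{cor:subset}. For part (a), monotonicity and non-decreasingness follow directly: if $Y' \subseteq Y$, then $\biguplus_{s \in Y'} \dswin_1(\emptyset, \{s\}) \subseteq \biguplus_{s \in Y} \dswin_1(\emptyset, \{s\})$ because $\uplus$ (i.e., taking the sure-winning region of a reachability game) is monotone in its target set by Proposition~\ref{prop:attractor-property}(2), and since $\dswin_1(\emptyset, Y) = \biguplus_{s\in Y}\dswin_1(\emptyset,\{s\})$ by repeated application of Proposition~\ref{prop:or-composition}, we get $g(Y') \le g(Y)$. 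Superadditivity of $g$ is the cardinality-level restatement of Corollary~\ref{cor:subset}: for disjoint $Y_1, Y_2$, $\dswin_1(\emptyset, Y_1) \cup \dswin_1(\emptyset, Y_2) \subseteq \dswin_1(\emptyset, Y_1 \cup Y_2)$, hence $g(Y_1) + g(Y_2) - |\dswin_1(\emptyset,Y_1)\cap\dswin_1(\emptyset,Y_2)| = |\dswin_1(\emptyset,Y_1)\cup\dswin_1(\emptyset,Y_2)| \le g(Y_1\cup Y_2)$, and dropping the intersection term gives $g(Y_1)+g(Y_2)\ge g(Y_1\cup Y_2)$ only in the subadditive direction — so I need to be careful: superadditivity here means $g(Y_1\cup Y_2)\ge g(Y_1)+g(Y_2)$, which is exactly what the union bound gives once we note the two regions need not be disjoint, so $g(Y_1 \cup Y_2) \ge |\dswin_1(\emptyset,Y_1)\cup\dswin_1(\emptyset,Y_2)| = g(Y_1)+g(Y_2)-|\dswin_1(\emptyset,Y_1)\cap\dswin_1(\emptyset,Y_2)|$ — this is $\ge g(Y_1)+g(Y_2)$ only when the intersection is empty, so strictly the correct statement uses the inclusion $\dswin_1(\emptyset,Y_1)\cup\dswin_1(\emptyset,Y_2)\subseteq\dswin_1(\emptyset,Y_1\cup Y_2)$ directly, yielding $g(Y_1\cup Y_2)\ge g(Y_1)$; the ``superadditive'' terminology should be read in the sense of the set operator as stated after Corollary~\ref{cor:subset}, and I will phrase the proof to match that reading.

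For part (b), I would show submodularity, i.e.\ $g(Y\cup\{s\}) - g(Y) \le g(Y'\cup\{s\}) - g(Y')$ for $Y' \subseteq Y$, equivalently that the marginal gain $|\dswin_1(\emptyset, Y\cup\{s\})| - |\dswin_1(\emptyset, Y)|$ is non-increasing in $Y$. Under the hypothesis $\dswin_1(\emptyset, Y) \cup \dswin_1(\emptyset, \{s\}) = \dswin_1(\emptyset, Y \cup \{s\})$, the marginal gain equals $|\dswin_1(\emptyset,\{s\}) \setminus \dswin_1(\emptyset, Y)|$. Since $Y' \subseteq Y$ implies $\dswin_1(\emptyset, Y') \subseteq \dswin_1(\emptyset, Y)$ (monotonicity again), we get $|\dswin_1(\emptyset,\{s\}) \setminus \dswin_1(\emptyset, Y)| \le |\dswin_1(\emptyset,\{s\}) \setminus \dswin_1(\emptyset, Y')|$, which is precisely the diminishing-returns inequality. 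For part (c), I would show supermodularity, i.e.\ the reverse inequality, by using the hypothesis $\dswin_1(X, Y\cup\{s_1\}) \cap \dswin_1(X, Y\cup\{s_2\}) = \dswin_1(X, Y)$ (here with $X = \emptyset$) together with Corollary~\ref{cor:subset}'s inclusion $\dswin_1(\emptyset, Y\cup\{s_1\}) \cup \dswin_1(\emptyset, \{s_2\}) \subseteq \dswin_1(\emptyset, Y\cup\{s_1,s_2\})$; applying inclusion–exclusion to the two sets $\dswin_1(\emptyset, Y\cup\{s_1\})$ and $\dswin_1(\emptyset, Y\cup\{s_2\})$ whose intersection is $\dswin_1(\emptyset, Y)$, I can bound $g(Y\cup\{s_1,s_2\})$ from below by $g(Y\cup\{s_1\}) + g(Y\cup\{s_2\}) - g(Y)$, which unrolls to the supermodular condition $g(A\cup B) + g(A\cap B) \ge g(A) + g(B)$ on the relevant lattice.

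The main obstacle I anticipate is part (c): the supermodularity hypothesis is stated for singleton increments $\{s_1\},\{s_2\}$ over an arbitrary base $Y$, but the supermodular inequality for a set function must be derived for all pairs $A, B$, not just those differing by single elements. I expect the argument to proceed by an induction on $|A \triangle B|$ that reduces the general case to the pairwise-exchange case covered by the hypothesis, and the delicate point will be verifying that the hypothesis, which is about the regions $\dswin_1$, continues to hold along the induction (i.e.\ that the intersection-collapse property is preserved when we enlarge $Y$); I would need to argue this from the attractor/monotonicity structure rather than assume it. A secondary subtlety is keeping the bookkeeping consistent between the set-operator statements (unions, intersections of winning regions) and the cardinality function $g$; since $g$ is just $|\cdot|$ composed with the region operator, inclusion–exclusion transfers cleanly, but I would state one lemma making this transfer explicit to avoid repeated ad hoc counting.
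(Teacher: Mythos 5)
Your proposal is correct and follows essentially the same route as the paper: monotonicity/superadditivity from Proposition~\ref{prop:or-composition} and Corollary~\ref{cor:subset}, part (b) by observing that the hypothesis makes $\dswin_1(\emptyset,Y)$ a union $\bigcup_{s\in Y}\dswin_1(\emptyset,\{s\})$ so that $g$ is a coverage-type function (you spell out the diminishing-returns computation where the paper just cites the standard fact), and part (c) by inclusion–exclusion on the pairwise condition $g(Y\cup\{s_1\})+g(Y\cup\{s_2\})-g(Y)\le g(Y\cup\{s_1,s_2\})$, exactly as the paper does. Two remarks: the induction you anticipate for (c) requires no check that the intersection-collapse property is ``preserved,'' because the hypothesis is universally quantified over $Y$, so the pairwise inequality holds at every base set and the standard local-to-global characterization of supermodularity applies directly (the paper simply invokes this equivalence as an ``if and only if''); and your hesitation in (a) about inferring the cardinality inequality $g(Y)+g(\{s\})\le g(Y\cup\{s\})$ from the set inclusion is well founded—the paper's own proof makes precisely that leap, so your operator-level reading of superadditivity is at least as careful as the published argument.
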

\begin{proof}
	(\textbf{a}). Since for any set $Y \subseteq \win_2(\game, F) \setminus F$ and any state $s \in \win_2(\game, F) \setminus (F \cup Y)$, we have $\dswin_1(\emptyset, {Y}) \cup \dswin_1(\emptyset, \{s\}) \subseteq \dswin_1(\emptyset, Y \cup \{s\})$, $\dswin_1$ is a non-decreasing, monotone function. Consequently, $g$ is also a non-decreasing monotone. The function $g$ is superadditive because, by Corollary~\ref{cor:subset}, $\dswin_1(\emptyset, {Y}) \cup \dswin_1(\emptyset, \{s\}) \subseteq \dswin_1(\emptyset, Y \cup \{s\})$. Therefore, $g(Y) + g(\{s\}) \leq g(Y \cup \{s\})$.
	
	(\textbf{b}). When $\dswin_1(\emptyset, {Y \cup \{s\}}) = \dswin_1(\emptyset, {Y}) \cup \dswin_1(\emptyset, \{s\})$, we have $g(Y) = \left| \biguplus \limits_{s \in D} \dswin_{\{s\}} \right| = \left| \bigcup \limits_{s \in D} \dswin_{\{s\}} \right|$, which is submodular \cite{vazirani2001approximation}. 
	
	(\textbf{c}). The function $g$ is supermodular if and only if
	\begin{align*}
		g(Y \cup \{s_1\}) + g(Y \cup \{s_2\}) - g(Y) \leq g(Y \cup \{s_1, s_2\}).
	\end{align*}
	Given that $\dswin_1(\emptyset, Y \cup \{s_1\}) \cap \dswin_1(\emptyset, Y \cup \{s_2\}) = \dswin_1(\emptyset, Y)$ holds for any holds for any $Y \subseteq \win_2(\game, F)$ and any $s_1, s_2 \in \win_2(\game, F)$, the LHS counts every state in $\dswin_1(\emptyset, Y \cup \{s_1\}) \cup \dswin_1(\emptyset, Y \cup \{s_2\})$ exactly once. On the other hand, RHS counts the number of states in $\dswin_1(\emptyset, Y \cup \{s_1, s_2\})$. By Proposition~\ref{prop:or-composition}, we know that RHS may contain states that are neither in $\dswin_1(\emptyset, Y \cup \{s_1\})$ nor $\dswin_1(\emptyset, Y \cup \{s_2\})$. 
\end{proof}

Given the properties of $g(Y)$, we now consider the incremental placement of traps. The following proposition, which follows from Proposition~\ref{prop:attractor-property}, provides insight into the construction of the stealthy deceptive sure winning region when traps are placed given a fixed placement of fake targets. 

\begin{proposition}
	\label{prop:or-composition-x}
	Let $\dswin_1(\{s_1\}, Y)$ and $\dswin_1(\{s_2\}, Y)$ be P1's deceptive sure-winning regions in the hypergames $\widehat\hgame_1(\{s_1\}, Y)$ and $\widehat\hgame_1(\{s_2\}, Y)$, respectively. Then, P1's deceptive sure-winning region $\dswin_1(\{s_1, s_2\}, Y)$ in the reachability game $\widehat\hgame_1(\{s_1, s_2\}, Y)$ is equal to the sure-winning region for P1 in the following game:
	\begin{align*}
		\widehat\hgame_1(\{s_1, s_2\}, Y) = \langle \win_2(\game_{X, Y}^2, F), A, \widehat T, s_0,  \dswin_1(\{s_1\}, Y) \cup \dswin_1(\{s_2\}, Y) \rangle,
	\end{align*} 
	where P1's goal is to reach the target set $\dswin_1(\{s_1\}, Y) \cup \dswin_1(\{s_2\}, Y)$
	and P2's goal is to prevent P1 from reaching the target set.
\end{proposition}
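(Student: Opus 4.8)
The plan is to mirror the argument used for Proposition~\ref{prop:or-composition}, the only additional ingredient being the observation that placing traps leaves the arena of the hypergame on graph unchanged. First I would verify that the three hypergames $\widehat{\hgame}_1(\{s_1\}, Y)$, $\widehat{\hgame}_1(\{s_2\}, Y)$, and $\widehat{\hgame}_1(\{s_1, s_2\}, Y)$ share one and the same underlying graph. By Definition~\ref{def:hgame-labeling-misperception}, the state set of each is $\win_2(\game, F)$ (equivalently $\win_2(\game_{X,Y}^2, F\cup Y)$ by Proposition~\ref{prop:effect-on-p2}(b), since $Y \subseteq \win_2(\game,F)\setminus F$), and the transition function $\widehat T_{X,Y}$ is determined entirely by $\srAct$, which by Eq.~\eqref{eq:sract-det} depends only on the ranks $\mathsf{rank}_{\game_{X,Y}^2}$. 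Since $\game_{X,Y}^2 = \langle S, A, T, s_0, F\cup Y\rangle$ does not mention the trap set $X$ at all (traps do not affect P2's perception, as recorded at the start of Section~\ref{subsec:p2-best-response}), these ranks, hence $\srAct$, hence $\widehat T_{X,Y}$, are identical in the three games. They differ only in P1's reachability objective: $\{s_1\}\cup Y$, $\{s_2\}\cup Y$, and $\{s_1,s_2\}\cup Y$, respectively.

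Second, I would invoke the attractor-composition identity Proposition~\ref{prop:attractor-property}(3) on this common arena with target sets $F_1 = \{s_1\}\cup Y$ and $F_2 = \{s_2\}\cup Y$. Because $\dswin_1(X,Y)$ is by definition P1's sure-winning (attractor) region to $X\cup Y$ in $\widehat{\hgame}_1(X,Y)$, the sets $\dswin_1(\{s_1\},Y)$ and $\dswin_1(\{s_2\},Y)$ are the attractors to $F_1$ and $F_2$, and $\dswin_1(\{s_1,s_2\},Y)$ is the attractor to $F_1\cup F_2 = \{s_1,s_2\}\cup Y$. Proposition~\ref{prop:attractor-property}(3) then yields $\dswin_1(\{s_1,s_2\},Y) = \mathsf{Attr}\bigl(\dswin_1(\{s_1\},Y)\cup\dswin_1(\{s_2\},Y)\bigr)$, i.e., P1's sure-winning region in the reachability game on the same arena whose target set is $\dswin_1(\{s_1\},Y)\cup\dswin_1(\{s_2\},Y)$ --- which is exactly the game in the statement. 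A minor point worth recording is that Algorithm~\ref{alg:zielonka} only inspects transitions through the $\mathsf{Pre}$ operators and never uses transitions out of target states, so the fact that the decoy states are sinks in the true game is immaterial, and Proposition~\ref{prop:attractor-property} is legitimately applied to one fixed game with varying target sets.

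I do not expect a genuine obstacle here. The step most prone to a slip --- and thus the one I would state most carefully --- is the first: making explicit that the state set and the transition function of the hypergame on graph are independent of the trap set $X$, so that "the same game" across the three hypergames is literally the same arena. Once that is pinned down, the remainder is the single-line application of Proposition~\ref{prop:attractor-property}(3), exactly as in the proof of Proposition~\ref{prop:or-composition}.
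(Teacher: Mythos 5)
Your proposal is correct and follows essentially the same route the paper takes: the paper gives no separate proof, stating only that the result follows from Proposition~\ref{prop:attractor-property}, exactly as in the proof of Proposition~\ref{prop:or-composition}. Your added care in checking that the hypergame arena (state set and $\widehat T$, via $\srAct$ and the ranks in $\game^2_{X,Y}$) is independent of the trap set $X$ is the right point to make explicit, and the rest is the intended one-line application of Proposition~\ref{prop:attractor-property}(3).
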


The following theorem regarding the exclusive placement of traps is from \cite{kulkarni2020decoy}. 

\begin{theorem}\cite{kulkarni2020decoy}
	\label{thm:dswinX-is-superadditive}
	For any $X \subseteq \win_2(\game, F)$, let $f(X) \mapsto \mathbb{N}$ be a function that counts the size of $\dswin_1(X, \emptyset)$. The following statements are true. 
	\begin{enumerate}[(a)]
		\item $f$ is a monotone, non-decreasing, and superadditive function. 
		
		\item $f$ is submodular if, for all  $X \subseteq S \setminus F$ and any $s \in S \setminus F$, we have $\dswin_1(X, \emptyset) \cup \dswin_1(\{s\}, \emptyset) = \dswin_1(X \cup \{s\}, \emptyset)$. 
		
		\item $f$ is supermodular if, for all  $X \subseteq S \setminus F$ and any $s_1, s_2 \in S \setminus F$ and $s_1 \neq s_2$, we have $\dswin_1(X \cup \{s_1\}, \emptyset) \cap \dswin_1(X \cup \{s_2\}, \emptyset) = \dswin_1(X, \emptyset)$ 
	\end{enumerate}
\end{theorem}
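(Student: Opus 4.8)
The plan is to replay the proof of Theorem~\ref{thm:dswinY-is-superadditive} with the roles of traps and fake targets exchanged and with $Y$ fixed to $\emptyset$ throughout. The one structural fact that makes this transfer legitimate is that traps do not alter P2's perception: by Remark~\ref{rmk:p2-game-is-base-game}, $\game_{X,\emptyset}^2 = \game$ for every $X \subseteq \win_2(\game, F)$, so the ranks $\mathsf{rank}_{\game_{X,\emptyset}^2}$ and hence the subjectively rationalizable action sets of Eq.~\eqref{eq:sract-det} are the \emph{same} for every choice of $X$. Unwinding Definition~\ref{def:hgame-labeling-misperception}, the transition function $\widehat T_{X,\emptyset}$ is built from the base transition $T$ and from $\srAct$ (not from the sink-ified $T_{X,\emptyset}$), so the arena $\langle \win_2(\game, F), A, \widehat T_{X,\emptyset}\rangle$ is independent of $X$; different trap sets $X$ only change P1's reachability target, which in $\widehat{\hgame}_1(X,\emptyset)$ is $X$ itself. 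Thus $\widehat{\hgame}_1(\{s_1\},\emptyset)$, $\widehat{\hgame}_1(\{s_2\},\emptyset)$ and $\widehat{\hgame}_1(\{s_1,s_2\},\emptyset)$ share a common graph, which is exactly the hypothesis under which Proposition~\ref{prop:or-composition-x} (with $Y=\emptyset$) and the attractor identities of Proposition~\ref{prop:attractor-property} can be invoked.

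First I would prove the trap analog of Corollary~\ref{cor:subset}: for every $X \subseteq \win_2(\game, F)\setminus F$ and $s \in \win_2(\game, F)\setminus(F\cup X)$,
\[
\dswin_1(X,\emptyset)\cup\dswin_1(\{s\},\emptyset)\ \subseteq\ \dswin_1(X\cup\{s\},\emptyset).
\]
By Proposition~\ref{prop:or-composition-x} the right-hand side is P1's attractor, in the common arena, of the set $\dswin_1(X,\emptyset)\cup\dswin_1(\{s\},\emptyset)$, and by Proposition~\ref{prop:attractor-property}(1) an attractor contains its own target set, which yields the inclusion. Part~(a) then follows exactly as for Theorem~\ref{thm:dswinY-is-superadditive}(a): the displayed inclusion shows that adding a trap never shrinks $\dswin_1(\cdot,\emptyset)$, so $f$ is monotone and non-decreasing, and superadditivity $f(X)+f(\{s\})\le f(X\cup\{s\})$ follows from the same inclusion by the identical counting argument used there.

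For part~(b), under the stated hypothesis a straightforward induction on $|X|$ gives $\dswin_1(X,\emptyset)=\bigcup_{s\in X}\dswin_1(\{s\},\emptyset)$, so $f$ is a coverage (set-cover) function and hence submodular \cite{vazirani2001approximation}. For part~(c), the hypothesis $\dswin_1(X\cup\{s_1\},\emptyset)\cap\dswin_1(X\cup\{s_2\},\emptyset)=\dswin_1(X,\emptyset)$ makes inclusion--exclusion collapse $f(X\cup\{s_1\})+f(X\cup\{s_2\})-f(X)$ to $\bigl|\dswin_1(X\cup\{s_1\},\emptyset)\cup\dswin_1(X\cup\{s_2\},\emptyset)\bigr|$; by Proposition~\ref{prop:or-composition-x} this union is contained in $\dswin_1(X\cup\{s_1,s_2\},\emptyset)$ (possibly strictly, since the joint attractor can absorb additional states), which is the supermodularity inequality $f(X\cup\{s_1\})+f(X\cup\{s_2\})-f(X)\le f(X\cup\{s_1,s_2\})$.

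The main obstacle is entirely the bookkeeping of the first paragraph: verifying, against Definitions~\ref{def:hgame-labeling-misperception} and~\ref{def:sracts}, that the hypergame-on-graph arena really is invariant under adding traps, and that reaching $X$ remains the correct surrogate for ``prevent a visit to $F$'' (which is Lemma~\ref{lma:effect-on-p1-decoys-in-win2} specialized to $Y=\emptyset$). Once that identification is fixed, parts (a)--(c) are the same finite-set and attractor-monotonicity arguments already carried out for fake targets; indeed the statement is quoted from \cite{kulkarni2020decoy}, so the proof's only new content here is re-deriving it inside the present hypergame-on-graph formalism.
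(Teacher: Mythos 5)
Your proof is correct, and the paper itself offers no proof of this theorem (it cites \cite{kulkarni2020decoy}); your argument is essentially the paper's own proof of the fake-target analog, Theorem~\ref{thm:dswinY-is-superadditive}, transplanted to traps. The one piece of genuinely new bookkeeping you supply --- that with $Y=\emptyset$ we have $\game^2_{X,\emptyset}=\game$, so $\srAct$ and hence the hypergame arena $\widehat T_{X,\emptyset}$ are independent of $X$ and only P1's reachability target $X$ varies --- is exactly what justifies reusing Proposition~\ref{prop:or-composition-x}, the attractor monotonicity of Proposition~\ref{prop:attractor-property}, and the counting arguments for parts (a)--(c), so there are no gaps.
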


Given Theorems~\ref{thm:traps-fakes-subset-relation}, \ref{thm:dswinY-is-superadditive} and \ref{thm:dswinX-is-superadditive}, the optimal placement of decoys reduces to that of sequentially solving two superadditive function maximization problems, first maximize $g(Y)$ and then maximize $f(Y)$. However, to the best of our knowledge, there are no approximation algorithms available for maximizing superadditive functions that are applicable to to our setting. Therefore, we present Algorithm~\ref{alg:greedymax} that returns an $(1 - 1/e)$-approximate solution to Problem~\ref{prob:decoy-placement} when either condition (b) or (c) in Theorems~\ref{thm:dswinY-is-superadditive} and \ref{thm:dswinX-is-superadditive} are satisfied. This greedy algorithm is based on the GreedyMax algorithm for maximizing monotone submodular-supermodular functions in \cite{bai2018greed} and extends the algorithm discussed in \cite[Algorithm~1]{kulkarni2020decoy}.

Algorithm~\ref{alg:greedymax} starts with empty sets of states $X$ and $Y$. It first  constructs the set $Y$ by adding a new fake target in each iteration. In every step, a new fake target $s$ is selected from the set of potential decoys $D$ such that its inclusion, along with the previously chosen fake targets, maximizes the coverage of P1's deceptive sure-winning region over the states in $\win_2(\game, F)$. The process continues until either a total of $N$ fake targets have been selected, or the set of potential decoys is empty. Subsequently, the algorithm proceeds to construct $X$ using a similar procedure, where the set of fake targets $Y$ remains fixed, and a new trap is added to $X$ in each iteration.

\begin{algorithm}[tb]
	\begin{algorithmic}[1]
		\item[\textbf{Inputs:}] $\langle S, A, T, s_0, F \rangle$: Base game, $M$: Number of traps to placed, $N$: Number of fake targets to be placed.
		\item[\textbf{Outputs:}] $X, Y$:  Greedy placement of traps and fake targets. 
		\State $X\gets \emptyset$, $Y\gets \emptyset$
		\While {$N - |Y| > 0$}
		\State $D \gets \{s \in \win_2(\game, F) \mid s \notin (F \cup Y)\}$
		\If{$D$ is empty}
		\State Exit While
		\EndIf 
	% 	\State Build a map $Z$ from state $s \in D$ to size of $\dswin_1(\emptyset, Y \cup \{s\})$.
	\State $d \gets \arg\max_{s}\abs{\dswin_1(\emptyset, Y\cup \{s\})}$
%		\State \replaced{$d \gets \arg\max_{s}\abs{\dswin_1(\emptyset, Y\cup \{s\})}}{$d \gets \arg\max_{s \in D}{Z(s)}$}
		% \State Add $d$ to $Y$
		\State $Y \gets Y\cup \{d\}$
		\EndWhile
		\While {$M - |X| > 0$}
		\State $D \gets \{s \in \win_2(\game, F) \mid s \notin (F \cup X \cup Y)\}$
		\If{$D$ is empty}
		\State Exit While
		\EndIf 
			\State $d \gets \arg\max_{s}\abs{\dswin_1(X\cup \{s\},Y)}$
			\State $X\gets X\cup \{d\}$
	%	\State Build a map $Z$ from state $s \in D$ to size of $\dswin_1(X \cup \{s\}, Y)$.
	%	\State $d \gets \arg\max_{s \in D}{Z(s)}$ 
	%	\State Add $d$ to $X$
		\EndWhile
		\State \Return $X, Y$
	\end{algorithmic}
	\caption{Greedy Algorithm for Decoy Placement}
	\label{alg:greedymax}
\end{algorithm}

\textbf{Complexity.} Let $V, E$ denote the number of states and transitions in the underlying graph of the hypergame $\hgame_1(X, Y)$. Then, the time complexity of Algorithm~\ref{alg:greedymax} is $\calO((V + E)\cdot(M+N)^2)$. This is because the $\dswin_1$ computation, which uses Algorithm~\ref{alg:zielonka}, has a complexity of $\calO(V + E)$ \cite{zielonka1998infinite}, and Algorithm~\ref{alg:greedymax} must solve $|\win_2(\game, F)| - |F| - j$ hypergames to determine the $j$-th decoy.

	\subsection{Synthesis of P1's Deceptive Almost-sure Winning Strategy}
		\label{subsec:daswin}
		In this section, we examine Problem~\ref{prob:decoy-placement} under the almost-sure winning criterion when players employ randomized strategies. Unlike the result from Section~\ref{subsec:placement}, we find that there is no clear advantage of either fakes or traps over the other. This difference stems from the fact that, when using randomized strategies, the players are not required to use rank-reducing strategies. The set of actions subjectively rationalizable for P2 in this case is given by 
\begin{align}
	\label{eq:sract-random}
	\widehat\srAct(s) = \{a \in A_2 \mid T(s, a) \in \win_2(\game, F)\},
\end{align}
for any state $s \in S_2 \cap \win_2(\game, F) \setminus F$. By definition, all available actions are subjectively rationalizable for P2 at every other state.

Intuitively, starting from a P2's almost-sure winning state in $\game_{X, Y}^2$, every P2 action that ensures that the game remains within the same region is subjectively rationalizable for P2. This is because (a) from every state in this region, P2 can enforce a visit to $F \cup Y$ with a positive probability, and (b) P1 has no strategy to exit this region. Therefore, a randomized strategy that selects every subjectively rationalizable action at a state with a positive probability is guaranteed to enforce a visit to $F \cup Y$ with probability one \cite{kulkarni2021deceptive}.  Such a randomized is an almost-sure winning strategy for P2 in game $\game$ \cite[Chapter~10]{baier2008principles}. 

\begin{lemma}
	\label{lma:sract-aswin}
	Let $s \in \win_2(\game, F) \setminus (F \cup Y)$ be a state in P2's perceptual game $\game_{X, Y}^2$ with the decoys $X, Y \subseteq \win_2(\game, F) \setminus F$. Then, the set of subjectively rationalizable actions at $s$ in game $\game$ is equal to that in game $\game_{X, Y}^2$.
\end{lemma}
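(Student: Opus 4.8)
The plan is to reduce the claim to two facts already in hand: the characterization of P2's memoryless almost-sure winning strategies recalled in Section~\ref{sec:prelim}, and Proposition~\ref{prop:effect-on-p2}(b). Concretely, I would show that at $s$ the set of subjectively rationalizable actions equals $\{a \in A_2 \mid T(s,a) \in \win_2(\game, F)\}$ regardless of whether the underlying perceptual game is $\game$ or $\game_{X,Y}^2$; since this set is written purely in terms of the base game, it is then automatically the same in both cases. If $s$ is a P1 state the claim is trivial: every enabled P1 action at a state of the winning region is subjectively rationalizable (cf.\ the convention used in the proof of Theorem~\ref{thm:traps-fakes-subset-relation}), the action alphabet and transition function of $\game_{X,Y}^2$ coincide with those of $\game$, and $s$ lies in the winning region of both games. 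So the substantive case is $s \in S_2$.

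For the P2 case, I would first establish the following characterization inside an arbitrary reachability game: a P2 action $a \in A_2$ at a P2 state $t$ in $\win_2 \setminus F$ is subjectively rationalizable under the almost-sure criterion if and only if $T(t, a) \in \win_2$. The forward direction uses the canonical full-support strategy: the memoryless strategy $\pi$ with $\supp(\pi(t')) = \{b \in A_2 : T(t',b) \in \win_2\}$ at every winning P2 state $t'$ is almost-sure winning (Section~\ref{sec:prelim}) and has $a$ in its support at $t$, so $a$ lies in the support of some almost-sure winning strategy. The converse uses determinacy: if $T(t,a) \notin \win_2$ then, by Fact~\ref{fact:determinacy}, $T(t,a)$ lies in P1's winning region, from which P1 has a strategy to keep the play out of the goal forever; hence no P2 strategy assigning positive probability to $a$ at $t$ can reach the goal with probability one, so $a$ is in the support of no almost-sure winning strategy of P2.

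It then remains to instantiate this characterization in the two games. In $\game$ it yields $\{a \in A_2 \mid T(s,a) \in \win_2(\game, F)\}$ directly. In $\game_{X,Y}^2 = \langle S, A, T, s_0, F \cup Y \rangle$, since $Y \subseteq \win_2(\game, F) \setminus F$, Proposition~\ref{prop:effect-on-p2}(b) gives $\win_2(\game_{X,Y}^2, F \cup Y) = \win_2(\game, F)$; in particular $s \in \win_2(\game_{X,Y}^2, F\cup Y) \setminus (F\cup Y)$, so the characterization applies and yields $\{a \in A_2 \mid T(s,a) \in \win_2(\game_{X,Y}^2, F\cup Y)\} = \{a \in A_2 \mid T(s,a) \in \win_2(\game, F)\}$. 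The two sets are the same expression, which is the lemma (and which also justifies writing formula~\eqref{eq:sract-random} in terms of $\win_2(\game,F)$).

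The only delicate point is the forward direction of the characterization: an action $a$ with $T(t,a) \in \win_2$ need not by itself drive the play toward the goal — it may merely cycle within $\win_2$ — so one must resist arguing that every strategy using $a$ is almost-sure winning; what Definition~\ref{def:sracts} asks for is only that \emph{some} almost-sure winning strategy (here, the full-support one) uses $a$, and that is what the argument supplies. Everything else is bookkeeping plus the direct appeal to Proposition~\ref{prop:effect-on-p2}(b).
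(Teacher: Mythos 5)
Your proof is correct and follows essentially the same route as the paper's: both rest on Proposition~\ref{prop:effect-on-p2}(b) (equality of the winning regions) together with the fact that transitions from states outside $Y$ coincide in $\game$ and $\game_{X,Y}^2$, so that the set $\{a \in A_2 \mid T(s,a) \in \win_2(\game,F)\}$ describes the subjectively rationalizable actions in both games. The only difference is that you explicitly verify, via the full-support almost-sure winning strategy and determinacy, the characterization that the paper takes as given in Eq.~\eqref{eq:sract-random}; that is a welcome extra detail but not a different argument.
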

\begin{proof}
The lemma follows from two observations. First, by Proposition~\ref{prop:effect-on-p2}, since $Y \subseteq \win_2(\game, F)$, we have $\win_2(\game, F) = \win_2(\game_{X, Y}^2, F \cup Y)$. That is, P2's winning regions in $\game$ and $\game_{X, Y}^2$ are equal. Second, by Definition~\ref{def:p2-perceptual-game}, the transitions from any state $s \notin Y$ are identical in the two games, $\game$ and $\game_{X, Y}^2$. The statement follows by the definition of $\widehat\srAct$. % It follows by definition of $\widehat\srAct$ that $\widehat\srAct(\game_{X, Y}^2, s) = \widehat\srAct(\game, s)$.
\end{proof}

Based on Lemma~\ref{lma:sract-aswin} and the knowledge that P2 follows a randomized almost-sure winning strategy in $\game_{X, Y}^2$, P1 can construct a \ac{mdp} to represent the L1-hypergame $\hgame_1$ by marginalizing the true game $\game_{X, Y}^1$ with P2's randomized almost-sure winning strategy. Since P1 does not know P2's choice of strategy, P1 would assume, in the worst case, that P2's randomized strategy may choose any subjectively rationalizable action at a given state with positive probability. This results in the following hypergame \ac{mdp} (adapted from \cite{kulkarni2021deceptive}) constructed with the following assumption.

\begin{assumption}
\label{assume:p2sra}
    At any state $s \in S_2 \cap \win_2(\game, F)$, P2 selects every subjectively rationalizable action in $\widehat{\srAct(s)}$ with a positive probability.
\end{assumption}

%\jf{the more reasonable hypothesis for P2 is that P2 will select the progressing action with a positive probability and a subset of non-progressing action with positive probabilities.}
%\ak{I agree. However, I'm including a stronger assumption because I'm not sure if the suggested assumption will affect any of the results. Moreover, the stronger assumption is in line with worst-case analysis.}
%
%\todo[inline]{Add one assumption on P2's randomized strategy}
\begin{definition}[Hypergame MDP]
	\label{def:hgame-mdp}
	Given the true game $\game_{X, Y}^1$ and the function $\widehat \srAct$ that maps every state $s \in \win_2(\game, F)$ to the set of subjectively rationalizable actions for P2 at $s$, the hypergame MDP that represents L1-hypergame $\hgame_1(X, Y)$ is the following tuple,
	\[
		\widetilde \hgame_1(X, Y)  = \langle \widetilde S, A, \widetilde T_{X, Y}, X \cup Y \rangle,
	\]
	where
	\begin{itemize}
		\item $\widetilde S = \win_2(\game, F)$ is P2's sure winning region in $\game$. At P1 states in $\widetilde{S}_1 = \win_2(\game, F) \cap S_1$, P1 chooses the next action strategically. Whereas, the states in $\widetilde{S}_2 =\win_2(\game, F) \cap S_2$ are \emph{nature} states. At a nature state, the next state is chosen at random according to a predefined probability distribution.

		\item $\widehat T_{X, Y}: \widetilde S \times A \rightarrow \dist{\widetilde S}$ is a transition function defined as follows: any state $s \in X \cup Y$ is a sink state. At a state $s \in \widetilde{S}_1$, we have $\widehat T_{X, Y}(s, a, s') = 1$ if and only if $s' = T(s, a)$. At a state $s \in \widetilde{S}_2$, we have $\widehat T_{X, Y}(s, a, s') > 0$ if and only if $a \in \widehat{\srAct}(s)$ and $s' = T(s, a)$. Otherwise, $\widehat T_{X, Y}(s, a, s') = 0$.
%\jf{should it be $a\in \widehat{\srAct}(s, \game^2_{X,Y})$?}
		\item $X \cup Y$ is the set of states representing P1's reachability objective.
	\end{itemize}
\end{definition}

It follows by construction that an almost-sure winning strategy of P1 in the hypergame MDP to visit $X \cup Y$ is a stealthy deceptive almost-sure winning strategy.
 
%\jf{the following proposition does not need to refer to theorem2 of the other paper, it is self explanatory and by construction.}
\begin{theorem}%[Adopted from {\cite[Theorem~2]{kulkarni2021deceptive}}]
	\label{thm:daswin-reduction}
	P1 can guarantee a visit to $X \cup Y$ from a state $s \in \win_2(\game, F)$ in the true game $\game_{X, Y}^1$ if and only if P1 has an almost-sure winning strategy to visit $X \cup Y$ from the state $s$ in $\widetilde{\hgame}_1(X, Y)$.
\end{theorem}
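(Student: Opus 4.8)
The plan is to read the hypergame MDP $\widetilde{\hgame}_1(X,Y)$ for exactly what it is: the true game $\game_{X,Y}^1$ in which P2's choices at the states of $\win_2(\game,F)$ have been \emph{marginalized out} by a subjectively rationalizable randomized strategy that, per Assumption~\ref{assume:p2sra}, places positive probability on every action of $\widehat{\srAct}(s)$ at each P2 state $s$. Under this reading both sides of the equivalence become statements about almost-sure reachability of $X\cup Y$, and the one piece of real content is the classical fact that almost-sure reachability in a finite Markov chain depends only on the underlying directed graph \cite[Ch.~10]{baier2008principles}: from a state $s$, an absorbing set $R$ is reached with probability one iff every bottom strongly connected component reachable from $s$ is contained in $R$ (equivalently, iff $R$ is reachable from every state reachable from $s$). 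I would isolate this as a lemma and reduce both directions to it, recalling that throughout this part of the paper the strategies of both players are memoryless, so every induced Markov chain is finite-state.

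Two bookkeeping facts come first. First, a play starting in $\win_2(\game,F)$ stays there when P1 plays arbitrarily and P2 plays subjectively rationalizably: from a P1 state of $\win_2(\game,F)$ every action stays inside since $\win_2(\game,F)\cap S_1=\mathsf{Pre}_1(\win_2(\game,F))$; from a P2 state of $\win_2(\game,F)\setminus F$ every action of $\widehat{\srAct}(s)$ stays inside by Eq.~\eqref{eq:sract-random}; and the states of $F$ and of $X\cup Y$ are sinks (the latter in the true game by Definition~\ref{def:true-game}). So restricting to the state set $\widetilde S=\win_2(\game,F)$, as $\widetilde{\hgame}_1(X,Y)$ does, loses nothing. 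Second, by Lemma~\ref{lma:sract-aswin} the subjectively rationalizable actions at such states are the same in $\game$ and in $\game_{X,Y}^2$; hence any subjectively rationalizable P2 strategy respecting Assumption~\ref{assume:p2sra} has, at each P2 state of $\widetilde S$, support exactly equal to the support of nature in $\widetilde{\hgame}_1(X,Y)$, while at P1 states and at sinks there is no P2 choice to compare.

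The ``if'' direction is the one already flagged as ``following by construction'' just before the statement: take an almost-sure winning strategy $\sigma$ of P1 in $\widetilde{\hgame}_1(X,Y)$ for the objective of reaching $X\cup Y$ (it exists and may be taken memoryless by standard MDP theory); since every P1 action at a state of $\win_2(\game,F)$ is subjectively rationalizable, $\sigma$ is a subjectively rationalizable P1 strategy in $\hgame_2(X,Y)$; and for any subjectively rationalizable $\pi_2$ respecting Assumption~\ref{assume:p2sra}, the Markov chain induced by $(\sigma,\pi_2)$ on $\game_{X,Y}^1$ lives in $\win_2(\game,F)$ and has exactly the same positive-probability edges as the chain induced by $\sigma$ on $\widetilde{\hgame}_1(X,Y)$, so the lemma carries probability-one reachability of $X\cup Y$ from the latter to the former, showing $\sigma$ is stealthy deceptively almost-sure winning from $s$. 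For the ``only if'' direction I would argue in reverse: given a subjectively rationalizable P1 strategy $\pi_1$ witnessing $s\in\daswin_1(X,Y)$, pick one concrete full-support subjectively rationalizable P2 strategy $\pi_2^\ast$ (say uniform over $\widehat{\srAct}(s)$ at each P2 state), which respects Assumption~\ref{assume:p2sra}; the Markov chain induced by $(\pi_1,\pi_2^\ast)$ on $\game_{X,Y}^1$, restricted to $\win_2(\game,F)$, has precisely the same positive-probability edge set as the chain induced by $\pi_1$ on $\widetilde{\hgame}_1(X,Y)$, so since $(\pi_1,\pi_2^\ast)$ reaches $X\cup Y$ with probability one from $s$, so does $\pi_1$ against nature, i.e., $\pi_1$ is almost-sure winning in $\widetilde{\hgame}_1(X,Y)$.

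The main obstacle is the support-invariance step and its clean application. I need to be careful that replacing nature's fixed distribution by an arbitrary full-support subjectively rationalizable $\pi_2$ (and conversely) neither creates nor destroys a bottom strongly connected component disjoint from $X\cup Y$ --- in particular, that it does not let the play get trapped inside $F$, which lies outside $X\cup Y$ and is therefore already excluded by ``almost-sure reachability of $X\cup Y$'' on the MDP side --- and that the universal quantifier ``for every subjectively rationalizable $\pi_2$ (respecting Assumption~\ref{assume:p2sra})'' in the definition of $\daswin_1$ is legitimately collapsed, on the MDP side, to the single fixed but support-identical nature distribution. This is exactly what the graph-theoretic characterization of qualitative reachability guarantees; everything else --- equality of state spaces, $X\cup Y$ being literally the same objective on both sides, and closure of $\win_2(\game,F)$ under the relevant moves --- is definition-chasing.
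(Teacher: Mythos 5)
Your proof is correct and takes essentially the route the paper intends: the paper gives no explicit proof of Theorem~\ref{thm:daswin-reduction} beyond the remark that it ``follows by construction,'' and your argument spells out exactly that construction --- under Assumption~\ref{assume:p2sra} every admissible P2 strategy has support exactly $\widehat{\srAct}(s)$ at P2 states of $\win_2(\game,F)$ (via Lemma~\ref{lma:sract-aswin}), so the Markov chains induced in $\game_{X,Y}^1$ and in $\widetilde{\hgame}_1(X,Y)$ by a fixed memoryless P1 strategy have identical supports, and almost-sure reachability of $X\cup Y$ in a finite chain depends only on that support. Your attention to the quantifier collapse is also well placed: without Assumption~\ref{assume:p2sra} the equivalence would fail against non-full-support subjectively rationalizable P2 strategies, and building the assumption into both directions is precisely the reading the paper adopts.
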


With this, we can prove the key result of this section: \emph{When players use randomized strategies and the games are analyzed under almost-sure winning condition, fake targets are equally valuable as traps}.

\begin{theorem}
	\label{thm:daswin-equal-for-fakes-traps}
	For any $Z \subseteq \win_2(\game, F) \setminus F$, we have $\daswin_1(Z, \emptyset) = \daswin_1(\emptyset, Z)$.
\end{theorem}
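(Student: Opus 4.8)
The plan is to show that the two hypergame MDPs $\widetilde{\hgame}_1(Z, \emptyset)$ and $\widetilde{\hgame}_1(\emptyset, Z)$ from Definition~\ref{def:hgame-mdp} are in fact the \emph{same} object, and then read off the claim. Indeed, by Theorem~\ref{thm:daswin-reduction} together with the observation preceding it, $\daswin_1(X, Y)$ is exactly P1's almost-sure winning region to reach $X \cup Y$ in $\widetilde{\hgame}_1(X, Y)$; so if $\widetilde{\hgame}_1(Z, \emptyset)$ and $\widetilde{\hgame}_1(\emptyset, Z)$ coincide as MDPs with the common reachability target $Z$, their almost-sure winning regions coincide, which is precisely $\daswin_1(Z, \emptyset) = \daswin_1(\emptyset, Z)$.

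First I would check that the two MDPs share the same state space and the same P1/nature partition. For the trap placement, $\widetilde S = \win_2(\game, F)$ by definition. For the fake-target placement, $\widetilde S = \win_2(\game_{\emptyset, Z}^2, F \cup Z)$; but since $Z \subseteq \win_2(\game, F) \setminus F$, Proposition~\ref{prop:effect-on-p2}(b) gives $\win_2(\game_{\emptyset, Z}^2, F \cup Z) = \win_2(\game, F)$, so the state spaces and hence $\widetilde S_1, \widetilde S_2$ agree. The action set $A$ and the reachability objective $Z$ are literally identical in both tuples.

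Next I would compare the transition functions. In both $\widetilde{\hgame}_1(Z, \emptyset)$ and $\widetilde{\hgame}_1(\emptyset, Z)$ every state of $Z$ is declared a sink, so outgoing transitions from $Z$ agree trivially; likewise states of $F$ remain sinks in both. At a P1 state $s \in \widetilde S_1$, both MDPs set $\widehat T(s, a, s') = 1$ iff $s' = T(s,a)$, which is the same map because neither traps nor fake targets alter transitions out of states not in $Z$. The only place the definitions could differ is at a nature state $s \in \widetilde S_2 \setminus (F \cup Z)$, where the support of the outgoing distribution is $\{T(s,a) \mid a \in \widehat\srAct(s)\}$, with $\widehat\srAct$ computed in the relevant perceptual game. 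For the trap case, $\game_{Z, \emptyset}^2 = \game$ by Remark~\ref{rmk:p2-game-is-base-game}, so $\widehat\srAct(s) = \{a \in A_2 \mid T(s,a) \in \win_2(\game, F)\}$. For the fake case, $\widehat\srAct(s)$ is evaluated in $\game_{\emptyset, Z}^2$, but Lemma~\ref{lma:sract-aswin} states that the set of subjectively rationalizable actions at $s \in \win_2(\game, F) \setminus (F \cup Z)$ is the same in $\game$ and in $\game_{\emptyset, Z}^2$. Hence the two transition functions agree at every state, so $\widetilde{\hgame}_1(Z, \emptyset) = \widetilde{\hgame}_1(\emptyset, Z)$, and the theorem follows.

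The main point to be careful about — and essentially the only non-mechanical step — is ensuring the $\widehat\srAct$ sets genuinely coincide at every nature state: this rests on the fact that, in the randomized/almost-sure regime, subjective rationalizability depends only on P2's perceived winning region (not on ranks), so inserting fake targets \emph{inside} $\win_2(\game, F)$ changes neither that region (Proposition~\ref{prop:effect-on-p2}(b)) nor the surviving actions (Lemma~\ref{lma:sract-aswin}). Everything else is a direct term-by-term comparison of the tuples in Definition~\ref{def:hgame-mdp}, and the decoy states being sinks in both MDPs means their type is immaterial to the dynamics.
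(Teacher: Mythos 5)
Your proposal is correct and follows essentially the same route as the paper: the paper's proof is a one-line invocation of Lemma~\ref{lma:sract-aswin} to conclude that the two hypergame MDPs are identical, hence their almost-sure winning regions coincide. You simply spell out the term-by-term comparison (state space via Proposition~\ref{prop:effect-on-p2}(b), transitions via Lemma~\ref{lma:sract-aswin} and the sink structure, then Theorem~\ref{thm:daswin-reduction}), which is a more explicit rendering of the same argument.
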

\begin{proof}
 By Lemma~\ref{lma:sract-aswin}, the hypergame \ac{mdp}s $\hgame_1(Z, \emptyset)$ and $\hgame_1(\emptyset, Z)$ are identical. Therefore, P1's almost-sure winning regions in the two hypergames are equal.
\end{proof}

Since the fake targets and traps are equally valuable, Algorithm~\ref{alg:greedymax} can be used to place the decoys in this setting by replacing $\dswin_1(X, Y)$ with $\daswin_1(X, Y)$ on line 6 and 12 of Algorithm~\ref{alg:greedymax}. However, in this case, the complexity of the algorithm is $\calO((V + E)^2 (M+N)^2)$ since the algorithm for computing the almost-sure winning region in the hypergame \ac{mdp} has a time complexity of $\calO((V + E)^2)$ \cite{baier2008principles}.

We conclude this section by establishing that P1 may benefit more from deception when playing against P2 using an almost-sure winning strategy than when playing against P2 using a sure-winning strategy in P2's perceptual game.

\begin{theorem}
	\label{thm:daswin-dswin-comparison}
	For any $X, Y \subseteq \win_2(\game, F) \setminus F$, we have $\daswin_1(X, Y) \subseteq \dswin_1(X, Y)$. 
\end{theorem}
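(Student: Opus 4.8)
The plan is to chain together two reductions already in hand and then argue strategy-to-strategy. By Theorem~\ref{thm:hgame-swin-is-dswin}, $\dswin_1(X,Y)$ is exactly P1's sure winning region in the hypergame on graph $\widehat{\hgame}_1(X,Y)$, and by Theorem~\ref{thm:daswin-reduction}, $\daswin_1(X,Y)$ is exactly P1's almost-sure winning region in the hypergame MDP $\widetilde{\hgame}_1(X,Y)$. So it suffices to show that any memoryless deterministic almost-sure winning strategy $\pi$ of P1 in $\widetilde{\hgame}_1(X,Y)$ (such strategies exist for almost-sure reachability in finite MDPs) is also a sure winning strategy for P1 in $\widehat{\hgame}_1(X,Y)$ from every state it wins in the MDP.

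Before the main argument I would record two structural facts. First, the rank-reducing actions of P2 are contained in its ``stay-in-region'' actions: for every P2 state $s \in \win_2(\game,F)\setminus(F\cup Y)$ we have $\srAct(s)\subseteq\widehat\srAct(s)$. Indeed, by Proposition~\ref{prop:effect-on-p2}(b), $\win_2(\game_{X,Y}^2,F\cup Y)=\win_2(\game,F)$, so a state has finite $\mathsf{rank}_{\game_{X,Y}^2}$ precisely when it lies in $\win_2(\game,F)$; hence if an action strictly reduces $\mathsf{rank}_{\game_{X,Y}^2}$ then its successor has finite rank and therefore lies in $\win_2(\game,F)$, i.e. the action is in $\widehat\srAct(s)$. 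Second, in $\widehat{\hgame}_1(X,Y)$ the quantity $\mathsf{rank}_{\game_{X,Y}^2}$ strictly decreases along every play until $X\cup Y$ is reached: at P2 states only rank-reducing actions are enabled by construction, and at a P1 state $v\in\win_2(\game,F)$ \emph{every} successor has rank strictly below $\mathsf{rank}_{\game_{X,Y}^2}(v)$, because $v$ is added to the attractor of $F\cup Y$ in Algorithm~\ref{alg:zielonka} only when all of its successors already lie at a strictly lower level (the $\mathsf{Pre}_1$ clause). Consequently any play avoiding $X\cup Y$ must reach a rank-$0$ state within $\mathsf{rank}_{\game_{X,Y}^2}(s_0)$ steps; since the rank-$0$ states are exactly $F\cup Y$ and $Y$ is avoided, such a play enters some state of $F$, where it remains forever because states of $F$ are sinks.

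The core argument is then by contradiction. Fix $s\in\daswin_1(X,Y)$; note $s\notin F$ (else $s$ is absorbing in $\widetilde{\hgame}_1(X,Y)$ and never reaches $X\cup Y$, so $s\notin\daswin_1$), and if $s\in X\cup Y$ then trivially $s\in\dswin_1(X,Y)$. Otherwise suppose $\pi$ is not sure winning in $\widehat{\hgame}_1(X,Y)$ from $s$; then there is a play $s=s_0s_1\ldots$ consistent with P1 playing $\pi$ and with some choice of (rank-reducing) P2 actions that never visits $X\cup Y$. By the second fact this play reaches some $s_n\in F$ after finitely many steps, with $s_0,\ldots,s_n\notin X\cup Y$. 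By the first fact, each P2 move $s_i\to s_{i+1}$ along this prefix uses an action in $\widehat\srAct(s_i)$, hence has positive probability under the Markov chain that $\pi$ induces in $\widetilde{\hgame}_1(X,Y)$; the P1 moves follow $\pi$. Therefore the prefix $s_0\ldots s_n$ occurs with positive probability in that Markov chain. But $s_n\in F$ is absorbing in $\widetilde{\hgame}_1(X,Y)$ and $F\cap(X\cup Y)=\emptyset$, so with positive probability $\pi$ never reaches $X\cup Y$, contradicting that $\pi$ is almost-sure winning. Hence $\pi$ is sure winning in $\widehat{\hgame}_1(X,Y)$ from $s$, so $s\in\dswin_1(X,Y)$, which proves $\daswin_1(X,Y)\subseteq\dswin_1(X,Y)$.

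The step I expect to be the main obstacle is the second structural fact, namely that the rank descent is forced at P1 states as well as P2 states: it is precisely the $\mathsf{Pre}_1$ clause of the attractor computation that prevents P1 from ``stalling'' in the hypergame on graph, and one must also carefully dispatch the degenerate cases ($s$ already in $X\cup Y$ or in $F$, and the fact that $Y$-states have $\mathsf{rank}_{\game_{X,Y}^2}=0$) so that ``avoids $X\cup Y$'' genuinely drives the play into $F$. Once that rank-descent bound is in place, lifting the adversarial play of $\widehat{\hgame}_1(X,Y)$ to a positive-probability path in $\widetilde{\hgame}_1(X,Y)$ through $\srAct\subseteq\widehat\srAct$ is routine.
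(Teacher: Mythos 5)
Your proof is correct, and it reaches the inclusion by a genuinely different packaging than the paper's. The paper works directly at the level of Definition~\ref{def:deceptive-strategy}: it takes an arbitrary stealthy deceptive almost-sure winning (randomized) strategy $\pi_1^r$, derandomizes it by picking one action from each support, and verifies the three defining properties of a stealthy deceptive sure winning strategy---subjective rationalizability (via $\supp(\pi_1^r(s))\subseteq\widehat\srAct(s)=\srAct(s)$ at P1 states), greediness (every P1 action at a state of $\win_2(\game,F)$ is rank-reducing, by the $\mathsf{Pre}_1$ clause of Algorithm~\ref{alg:zielonka}), and the forced visit to $X\cup Y$, obtained from the path inclusion $\outcomes(s,\pi_1^d,\pi_2^d)\subseteq\outcomes(s,\pi_1^r,\pi_2^r)$ together with Lemma~\ref{lma:effect-on-p1-decoys-in-win2}. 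You instead pass through the two reduced models (Theorems~\ref{thm:hgame-swin-is-dswin} and \ref{thm:daswin-reduction}), take a memoryless deterministic almost-sure winning strategy of the hypergame MDP, and argue by contradiction: a play of $\widehat\hgame_1(X,Y)$ consistent with that strategy and avoiding $X\cup Y$ is driven by strict rank descent into $F$ in finitely many steps, and since $\srAct(s)\subseteq\widehat\srAct(s)$ this finite prefix has positive probability in the Markov chain induced in $\widetilde\hgame_1(X,Y)$, contradicting almost-sure reachability of $X\cup Y$ because $F$ is absorbing and disjoint from $X\cup Y$. The core ingredients overlap (the comparison of $\srAct$ with $\widehat\srAct$ and a derandomization step appear in both), but your version makes two points more explicit than the paper: the rank-descent bound at both P1 and P2 states replaces the paper's looser appeal to Lemma~\ref{lma:effect-on-p1-decoys-in-win2} and to ``every path'' of an almost-surely winning profile visiting $X\cup Y$, and the positive-probability lifting spells out exactly why one bad play already contradicts almost-sure winning. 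The mild costs are that you lean on Theorem~\ref{thm:daswin-reduction} in the direction $\daswin_1(X,Y)\subseteq{}$(a.s.\ winning region of $\widetilde\hgame_1$)---legitimate since it is stated as an equivalence, though not proved in the paper---and you slightly overstate Theorem~\ref{thm:hgame-swin-is-dswin} as an exact characterization of $\dswin_1(X,Y)$, whereas only the inclusion you actually use (sure winning in $\widehat\hgame_1$ implies membership in $\dswin_1$) is what that theorem provides.
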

\begin{proof}
	We will establish that, for any state $s \in \daswin_1(X, Y)$, it also belongs to $\dswin_1(X, Y)$. To achieve this, we construct a stealthy deceptive sure-winning strategy $\pi_1^d$ for P1, given any stealthy deceptive almost-sure winning strategy $\pi_1^r$.
	
	Let $\pi_1^d$ be a deterministic strategy such that $\pi_1^d(s) = a$, for some $a \in \supp(\pi_1^r(s))$. 
	
	We will show that $\pi_1^d$ is a stealthy deceptive sure winning strategy for P1. Recall that every stealthy deceptive sure winning strategy is a greedy, deterministic strategy subjectively rationalizable for P2 that ensures a visit to $X \cup Y$ in finitely many steps, regardless of the greedy, deterministic strategy followed by P2. 
	
	\textbf{($\pi_1^d$ is subjectively rationalizable for P2).} $\pi_1^d$ is subjectively rationalizable for P2 whenever $\pi_1^d(s) \in \srAct(s)$. This is indeed the case because the following three conditions hold for all P1 state $s \in \daswin_1(X, Y)$ by definition: (i) $\pi_1^d(s) \in \supp(\pi_1^r(s))$, (ii) $\supp(\pi_1^r(s)) \subseteq \widehat{\srAct(s)}$, and (iii) $\widehat{\srAct(s)} = \srAct(s)$. 
	
	\textbf{($\pi_1^d$ is greedy).} The strategy $\pi_1^d$ is greedy because every action enabled at a P1 state $s \in \daswin_1(X, Y)$ is rank-reducing. This is because every state $s \in \daswin_1(X, Y)$ is also a member of $\win_2(\game, F)$ and Algorithm~\ref{alg:zielonka} includes a P1 state $s$ in $\win_2(\game, F)$, if and only if all actions from $s$ are rank-reducing. 
	
	\textbf{($\pi_1^d$ induces a visit to $X \cup Y$).} We establish that, given any greedy, deterministic P2 strategy $\pi_2^d$, every path $\rho \in \outcomes_{\widehat{\hgame_1}(X, Y)}(s, \pi_1^d, \pi_2^d)$ visits $X \cup Y$ within a finite number of steps. First, we note that  $\outcomes_{\widehat{\hgame}_1(X, Y)}(s, \pi_1^d, \pi_2^d) \subseteq \outcomes_{\widehat{\hgame}_1(X, Y)}(s, \pi_1^r, \pi_2^r)$ holds for any randomized strategy $\pi_2^r$ of P2. This is true because of two facts: (i) $\pi_1^d(s) \in \supp(\pi_1^r(s))$, by definition, and (ii) $\pi_2^d(s) \in \supp(\pi_2^r(s))$, which is true because $\srAct(s) \subseteq \widehat{\srAct(s)}$ holds for all P2 states. Second, we note that, since $\pi_1^r$ is a stealthy deceptive almost-sure winning strategy, every path in $\outcomes_{\widehat{\hgame}_1(X, Y)}(s, \pi_1^r, \pi_2^r)$ eventually visits $X \cup Y$. Clearly, it cannot visit $F$ because all states in $F$ are sink states. Therefore, no path in $\outcomes_{\widehat{\hgame}_1(X, Y)}(s, \pi_1^d, \pi_2^d)$ visits $F$. Since both the strategies $\pi_1^d$ and $\pi_2^d$ are greedy, it follows by Lemma~\ref{lma:effect-on-p1-decoys-in-win2} that $\rho$ must visit $X \cup Y$ within finitely many steps.
\end{proof}

 Figure~\ref{fig:dswin-daswin-comparison} illustrates a toy example where the subset relation is strict, \ie, $\daswin_1(X, Y)$ $\subsetneq \dswin_1(X, Y)$. In this example, $F = \{s_0\}$ is a singleton final state that P2 aims to reach, $X = \{s_1\}$ is the set of traps, and $Y = \{s_2\}$ is the set of fake targets. This results in $\dswin_1(\{s_1\}, \{s_2\}) = \{s_1, s_2, s_4\}$ and $\daswin_1(\{s_1\}, \{s_2\}) = \{s_1, s_2\}$. Notice that $s_4$ is stealthy deceptively sure winning for P1, but not stealthy deceptively almost-sure winning. This is because, when players use greedy deterministic strategies, $b$ is the only action at $s_4$ which is subjectively rationalizable for P2. Since $T(s_4, b) = s_2$ and $s_2$ is a fake target, the game is guaranteed to visit $X \cup Y$. However, when players used randomized strategies, both the actions $b$ and $c$ are subjectively rationalizable for P2 at $s_4$. Thus, the game may reach $s_5$ with a positive probability, from where P1 has no strategy to prevent the game from reaching $F$.

%\deleted{Figure~\ref{fig:dswin-daswin-comparison} illustrates a toy example where the subset relation is strict, \ie, $\daswin_1(X, Y)$ $\subsetneq \dswin_1(X, Y)$. When players use greedy deterministic strategies, the only subjectively rationalizable \replaced{strategy at state $s_4$ selects the action $b$ to reach $s_2$}{strategy available to P2 at $s_4$ is to transition to $s_2$}. Therefore, $s_4$ is included in $\dswin_1(\{s_1\}, \{s_2\})$. However, when players used randomized strategy, \replaced{P2 has two subjectively rationalizable actions at $s_4$, namely, $b$ and $c$, which transition to states $s_2$ and $s_5$, respectively. }{transitioning from $s_4$ to $s_5$ is also subjectively rationalizable for P2}. Since P1 does not have a strategy from $s_5$ to prevent a visit to $F$, $s_5$ is not included in $\daswin_1(\{s_1\}, \{s_2\})$. Therefore, $s_4$ is also not included in $\daswin_1(\{s_1\}, \{s_2\})$. }
%\jf{does P1 has to use a randomized strategy? For MDP with a simple reachability objective, the deterministic policy is as good as a randomized policy.}
%\jf{also given precisely what are the $\dswin_1$ and $\daswin_1$.}

\begin{figure}
	\centering
	\begin{tikzpicture}[->,>=stealth',shorten >=1pt,auto,node distance=2.5cm, scale = 0.65,transform shape]
	
	\node[state,rectangle,accepting] (0) at (0, 0)                         {\huge $s_0$};
	\node[state] (1) at (-3, 2)                          {\huge $s_1$};
	\node[state] (2) at (0, 2)                         {\huge $s_2$};
	\node[state] (3) at (3, 2)                         {\huge $s_3$};
	\node[state,rectangle] (4) at (-3, 4)                         {\huge $s_4$};
	\node[state] (5) at (0, 4)                      {\huge $s_5$};
	\node[state,rectangle] (6) at (3, 4)                         {\huge $s_6$};

		\path 
		(1) edge  node {\huge $a$} (0)
		(2) edge  node {\huge $a$} (0)
		(3) edge  node[above left] {\huge $a$} (0)
		(4) edge  node {\huge $a$} (1)
		(4) edge  node {\huge $b$} (2)
		(4) edge  node {\huge $c$} (5)
		(5) edge  node {\huge $a$} (6)
		(6) edge  node {\huge $a$} (3)
	%	(b) edge[bend left]   node {} (a)
	%	(c) edge              node {} (b)
	%	(c) edge              node {} (f)
	%	(c) edge              node {} (g)
	%	(c) edge              node {} (h)
	%	(d) edge              node {} (g)
	%	(d) edge              node {} (h)
	%	(e) edge              node {} (i)
	%	(f) edge              node {} (j)
	%	(f) edge              node {} (k)
	%	(g) edge              node {} (k)
	%	(h) edge              node {} (l)
	%	(i) edge              node {} (m)
	%	(j) edge              node {} (n)
	%	(k) edge              node {} (n)
	%	(k) edge              node {} (o)
	%	(l) edge              node {} (n)
	%	(m) edge              node {} (o)
	%	(n) edge [loop below]             node {} (n)
	%	(o) edge [loop below]              node {} (o)
		;
	
\end{tikzpicture}
	\caption{A scenario where $\daswin_1(X, Y) \subsetneq \daswin_1(X, Y)$.} 
	\label{fig:dswin-daswin-comparison}
\end{figure}
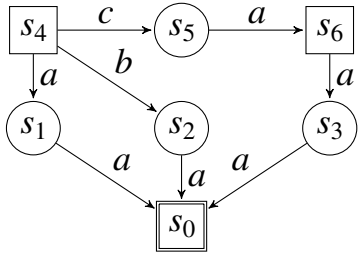

% \todo[inline]{completed the proofreading for the above. need to check the introduction and the experiment section+conclusion later.}

	\section{Experiments}
		\label{sec:experiment}
		We use two experiments to illustrate the key results from our paper. The first experiment employs a gridworld example to demonstrate the proposed Algorithm~\ref{alg:greedymax} and the effectiveness of the decoy placement. The second experiment highlights several key  properties of the decoy placement determined by Algorithm~\ref{alg:greedymax}.

\subsection{Decoy Placement in a Gridworld}

%\aknote{Awaiting for Librarian response for possible copyright infringement.}

In this experiment, we consider a gridworld example featuring a cat and a mouse as shown in Figure~\ref{fig:tom-jerry-gridworld}. The $7 \times 7$ gridworld has $2$ cheese blocks. The cat is equipped with $M$ mouse traps and $N$ fake cheese blocks to protect the real cheese from the mouse. the mouse's objective is to steal the cheese without getting caught by the cat (the cat captures the mouse when they are simultaneously in the same cell). On the other hand, cat's objective is to place the decoys to safeguard the real cheese strategically. To achieve this, the cat intends to behave in a way that would either lead to the cat capturing the mouse or induce the mouse to visit a decoy. The mouse is assumed to be unaware of the presence of decoys. Both the cat and the mouse can occupy any cell in the gridworld that does not contain an obstacle (black cells). To avoid trivial cases, we assume that the game does not start with the mouse in a cell containing real cheese or a decoy.

A state in the base game between the cat and the mouse is represented as \texttt{(cat.row, cat.col, mouse.row, mouse.col, turn)} that captures the positions (a position is expressed in the row-column format) of the cat and the mouse and the player who selects the next action at that state. At any state, the player whose turn it is to play chooses an action from the set $\{N, E, S, W\}$ and moves to the cell in the intended direction. If the result of the action leads the player to a cell outside the bounds of gridworld or an obstacle, the player returns to the same cell where it started from.

We observe the effect of decoys on the cat's stealthy deceptive sure and almost-sure winning region in the gridworld configuration shown in Figure~\ref{fig:tom-jerry-gridworld} with two blocks of real cheese placed at cells $(1, 6)$ and $(4, 6)$. We consider three scenarios: (A) where $M=2$ and $N=0$, (B) where $M=1$ and $N=1$, and (C) where $M=0$ and $N=2$. This results in the base game's underlying graph having $4050$ states and $16200$ transitions. We use Algorithm~\ref{alg:greedymax} for each scenario to determine the decoy placement under the sure winning criteria. The algorithm solves a total of $85$ hypergames during the two iterations of the \texttt{While} loop (specifically, on lines \texttt{6} and \texttt{13}). The first iteration explores $43$ candidate cells without obstacles or real cheese to determine the placement of the first decoy, while the second iteration explores $42$. The algorithms are implemented in Python 3.10\footnote{The source code is available at \url{https://github.com/abhibp1993/decoy-allocation-problem}.}, and executed on a Windows 10 machine with a core i$7$ CPU running at 3.30GHz and equipped with 32GB of memory.

% Gridworld image
\begin{figure}[tb]
	\centering
	\includegraphics[scale=0.30]{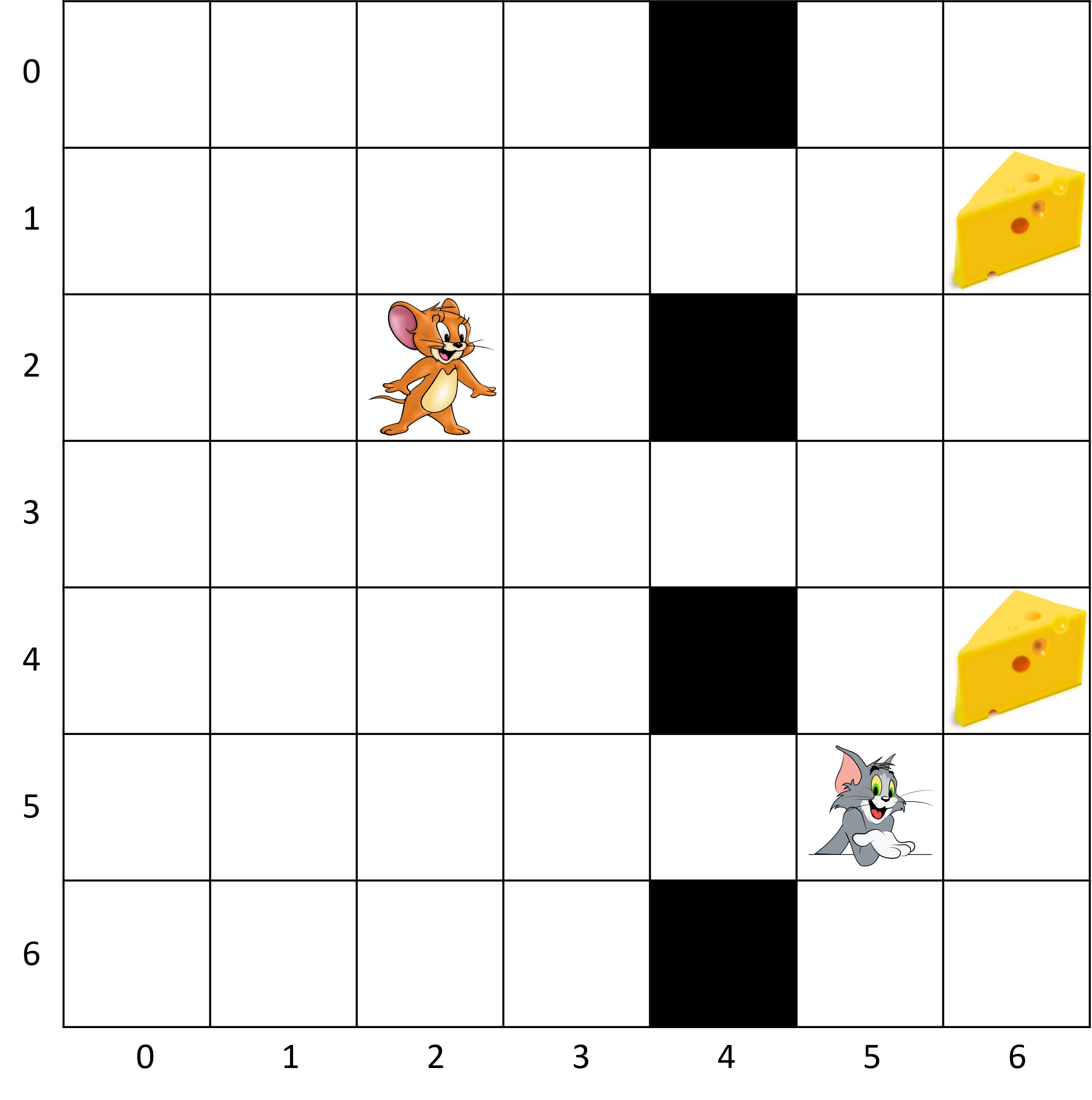}
	\caption{Gridworld example with a cat and a mouse with $2$ cheese blocks.}
	\label{fig:tom-jerry-gridworld}
\end{figure}

To measure and compare the effectiveness of a given placement of traps and fake targets during the iterations of Algorithm~\ref{alg:greedymax}, we introduce a real-valued metric called \emph{value of deception}. Intuitively, the value of deception measures the proportion of P2's winning states in the base game $\game$ that become winning for P1 in the hypergame $\widehat{\hgame}(X, Y)$ or $\widetilde{\hgame}(X, Y)$. Under the stealthy deceptive sure winning condition, when $\win_2(\game, F) \neq F$, the value of deception is defined as follows:
\begin{align*}
	\vod(X, Y) = \frac{|\dswin_1(X, Y)|}{|\win_2(\game, F)| - |F|}
\end{align*}
If $\win_2(\game, F) = F$, \ie, when no states apart from the final states are winning for P2 in $\game$, we set $\vod(X, Y) = 0$. The value of deception is defined analogously when the interaction is analyzed under an almost-sure winning criterion.

We analyze the key insights obtained by solving $85$ hypergames and examining the resulting value of deception. Figure~\ref{fig:results} depicts a heatmap, where the value displayed in each cell denotes the value of deception achieved by allocating the next decoy in that cell. The value in each cell is computed based on the map $Z$ constructed during each of the two iterations of Algorithm~\ref{alg:greedymax}. The figure includes two heatmaps each for the three scenarios (A), (B), and (C). Specifically, Figures~\ref{fig:sub-a} and \ref{fig:sub-b} depict the heatmaps corresponding to the first and second iteration of the algorithm for scenario (A). Similarly, Figures~\ref{fig:sub-c} and \ref{fig:sub-d} show the two heatmaps for scenario (B), and Figures~\ref{fig:sub-e} and \ref{fig:sub-f} for scenario (C).
% \jf{missing references.} \ak{Uncommented the figure. Being a full page figure, it is being pushed to last page. Let me know if you know how to position it here.}

%Figure~\ref{fig:results} displays \deleted{the map $Z$, depicted as} a heatmap showing the value of deception computed using the map $Z$ from Algorithm~\ref{alg:greedymax}, which is constructed during each of the two iterations of Algorithm~\ref{alg:greedymax} for the three scenarios (A), (B), and (C), when the game is analyzed under sure winning condition. Specifically, Figures~\ref{fig:sub-a} and \ref{fig:sub-b} show the map $Z$ for scenario (A), while Figures~\ref{fig:sub-c} and \ref{fig:sub-d} show it for scenario (B), and Figures~\ref{fig:sub-e} and \ref{fig:sub-f} show it for scenario (C).

In Figure~\ref{fig:sub-a}, the cell values indicate the value of deception achieved by placing the first trap at each respective cell. For instance, the value $0.28$ in cell $(1, 5)$ indicates the value of deception obtained by placing the first trap at that location. The first trap is positioned at $(1, 5)$ as it is the highest value. In Figure~\ref{fig:sub-b}, the cell values indicate the combined value of deception achieved by placing the second trap at a given cell in addition to the trap selected in the first iteration. For instance, the value $0.5$ in cell $(5, 5)$ represents the value of deception obtained by placing two traps: the first trap at location $(1, 5)$ (as determined in the first iteration) and the second trap at $(5, 5)$. The second trap is placed there since the maximum deception value is observed at $(5, 5)$. The heatmaps in Figures~\ref{fig:sub-c}-\ref{fig:sub-f} are understood in a similar manner.

We now discuss key observations and insights from Figure~\ref{fig:results}. First, observe that when only traps are placed (Figures~\ref{fig:sub-a}, \ref{fig:sub-b}, and \ref{fig:sub-d}), the value of deception increases as we move closer to the real cheese. This is because traps cut the mouse's winning paths to real cheese. For instance, in Figure~\ref{fig:sub-a}, suppose that the mouse starts from a cell in row $1$ and the cat starts from a cell $(4, 1)$. Then, the mouse has a sure winning strategy to steal the cheese at $(1, 6)$. Now, consider two placements of the first trap: $(1, 1)$ and $(1, 5)$. The trap at $(1, 1)$ will be effective only if the mouse starts at $(1, 0)$ since if the mouse begins from a cell to the right of $(1, 1)$, she is guaranteed to visit $(1, 6)$ without being trapped or caught. On the other hand, the trap at $(1, 5)$ will be effective whenever the mouse starts between $(1, 0)$ and $(1, 4)$ because every path induced by any of her sure winning strategies to visit $(1, 6)$ from these initial positions passes through $(1, 5)$. Hence, placing a trap at $(1, 5)$ yields a higher value of deception than placing it at $(1, 1)$.

In contrast, fake cheese attracts the mouse by providing an alternative to visiting the real cheese. Therefore, when placing the fake cheese, the value of deception increases as we move closer to the fake cheese. For instance, in Figure~\ref{fig:sub-c}, we notice that the values in cells $(2, 3)$ and $(3, 3)$ are higher than their neighboring cells. This is because, when fake cheese is present at either of these cells, the mouse believes there are three cheese blocks in the game instead of two. Consequently, when the cat starts at $(5, 1)$ and the mouse starts at any cell with row coordinates of $0, 1, 2$ and column coordinates of $0, 1, 2$, the mouse's subjectively rationalizable sure winning strategy would lead him to visit either the fake cheese at $(2, 3)$ or $(3, 3)$ instead of the real cheese at $(1, 6)$ or $(4, 6)$. Since the highest value of deception is observed at cell $(3, 3)$, the cat places the first fake cheese in that cell.

The results also confirm our conclusion that \emph{fake targets have a higher value than traps when the game is analyzed under sure winning condition}. To see this, compare the value of deception for any cell in Figure~\ref{fig:sub-d} and Figure~\ref{fig:sub-f}, and Figure~\ref{fig:sub-d} and Figure~\ref{fig:sub-f}. We observe that the value in the second heatmap (where a fake cheese is placed in the cell) is greater than or equal to that in the first heatmap (where a trap is placed in the cell).

\subsection{Effectiveness of Decoy Placement under Sure and Almost-sure Winning Conditions}

In this second experiment, we compare the effectiveness of placing traps versus fake targets under stealthy, deceptive sure and almost-sure winning conditions. We employ randomly generated graphs to explore interesting case studies. Each game consists of $150$ states, of which $75$ are P1 states, and the remaining are P2 states. At every state in each game, we randomly select an integer between $1$ and $5$ to determine the number of actions enabled at that state. Subsequently, the next state on performing each enabled action at a given state is determined at random. 
%\added{For space limit, we show one of the randomly generated game in Fig.~\ref{}.}
%\todo[inline]{suggest to add a graph.}

With these exploratory experiments, we focus our analysis on four games on graphs as these present interesting results.
For each of the four games, we use Algorithm~\ref{alg:greedymax} to determine decoy placement and compute the corresponding value of deception under four conditions: (i) placing $5$ traps under stealthy deceptive sure winning condition, (ii) placing $5$ fake targets under stealthy deceptive sure winning condition, (iii) placing $5$ traps under stealthy deceptive almost-sure winning condition, and (iv) placing $5$ fake targets under stealthy deceptive almost-sure winning condition. Figure~\ref{fig:exp2-results} depicts the variation in the value of deception for cases (i)-(iv) as we progressively introduce the traps or fake targets in four selected games. %Within each plot, the labels accompanying data points indicate the specific state chosen for decoy placement in each iteration. For example, in Figure~\ref{fig:exp3-sub-a}, the annotation \texttt{s69} on the solid blue line denotes that state \texttt{s69} was chosen as the third fake target when the game is analyzed under stealthy deceptive sure winning condition. 

\begin{figure*}[!th]
	\begin{multicols}{2}
		\centering
		\begin{subfigure}{\linewidth}
			\includegraphics[width=\linewidth]{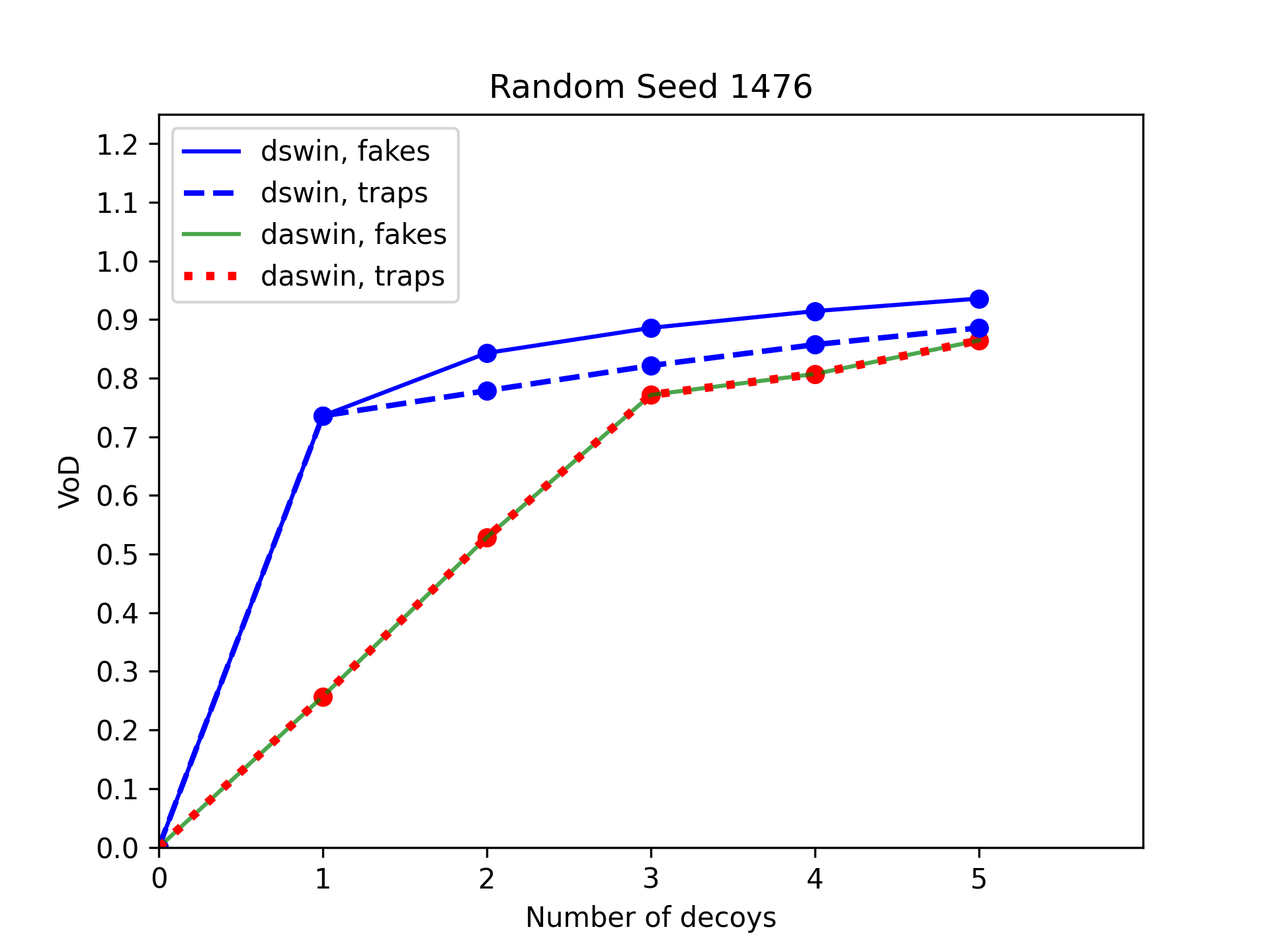}
			\caption{} % {The value of deception obeys the results from \ak{list the theorems.}}
			\label{fig:exp3-sub-a}
		\end{subfigure}
		\par
		\begin{subfigure}{\linewidth}
			\includegraphics[width=\linewidth]{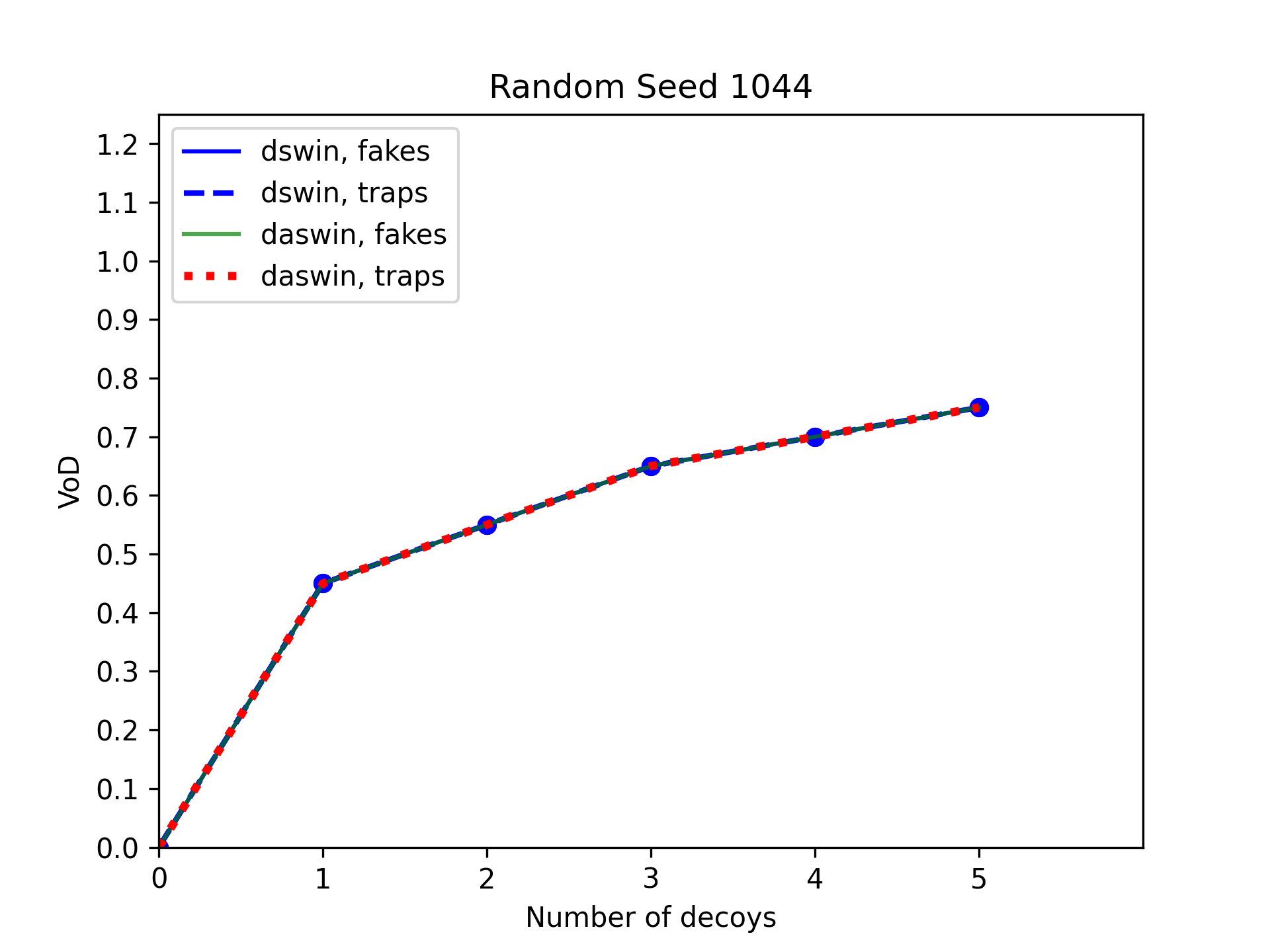}
			\caption{} %The value of deception are the same.}
			\label{fig:exp3-sub-b}
		\end{subfigure}
		\par
		\begin{subfigure}{\linewidth}
			\includegraphics[width=\linewidth]{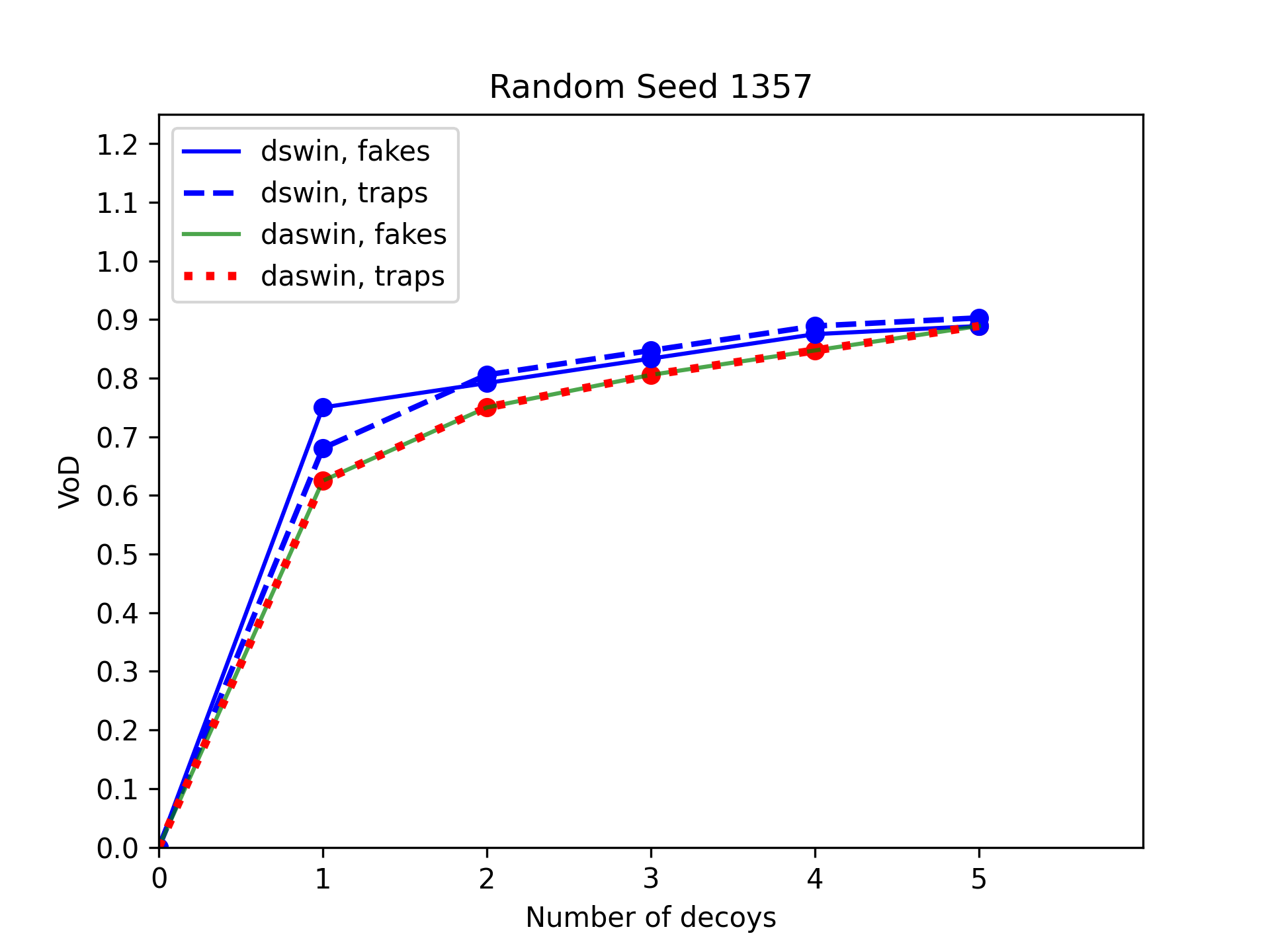}
			\caption{} %The value of deception violates the results from \ak{Theorem~\ref{}.}}
			\label{fig:exp3-sub-c}
		\end{subfigure}
		\par
		\begin{subfigure}{\linewidth}
			\includegraphics[width=\linewidth]{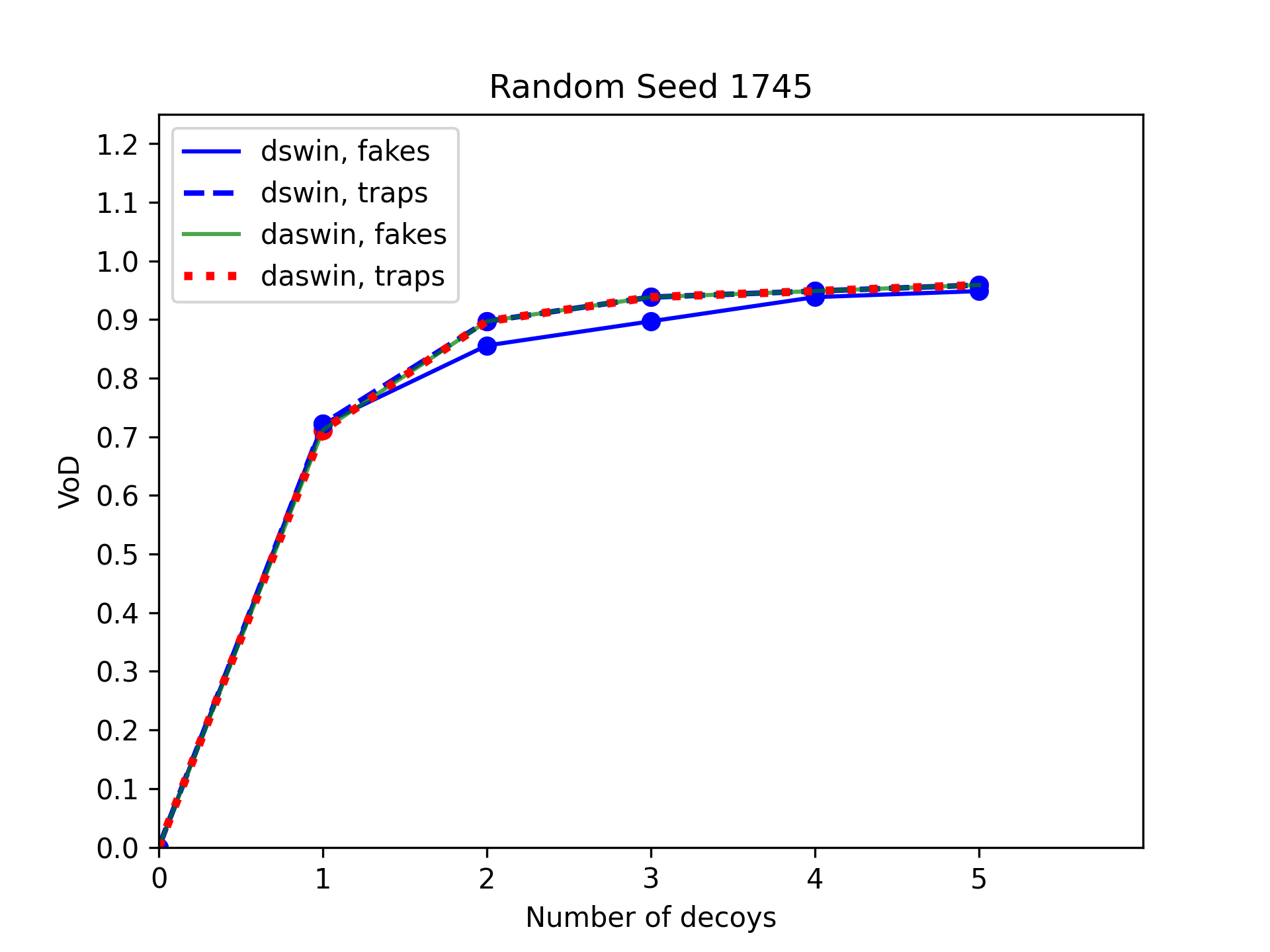}
			\caption{} %The value of deception violates the results from \ak{Theorem~\ref{}.}}
			\label{fig:exp3-sub-d}
		\end{subfigure}
	\end{multicols}
	\caption{The value of deception obtained by placing traps and fake targets under stealthy deceptive sure and almost-sure winning conditions in four selected games.}
	\label{fig:exp2-results}
\end{figure*}

Figures~\ref{fig:exp3-sub-a} and \ref{fig:exp3-sub-b} present instances that align with our theoretical findings. Since the dashed blue line remains at par or below the solid blue line, we observe that the value of deception obtained by placing fake targets is greater than or equal to that obtained by placing traps, both under stealthy deceptive sure winning condition. This confirms the findings in Theorem~\ref{thm:traps-fakes-subset-relation}. Furthermore, the overlapping of the red dotted line and green lines indicates that placing traps and fake targets under stealthy deceptive almost-sure winning condition yield the same value of deception, which is aligned with the findings of Theorem~\ref{thm:daswin-equal-for-fakes-traps}. Lastly, the outcomes also align with the implications outlined in Theorem~\ref{thm:daswin-dswin-comparison}, as both the red-dotted and green lines consistently remain positioned below the blue lines. Consequently, the value of deception achieved under the sure winning condition is consistently greater than or equal to that attained under the almost-sure winning condition. Figure~\ref{fig:exp3-sub-b} presents a special case wherein the intrinsic topology of the game graph leads to a convergence of deception values across all four cases (i)-(iv).

Figures~\ref{fig:exp3-sub-c} and \ref{fig:exp3-sub-d} present instances where the results appear to diverge from our theoretical predictions. In Figure~\ref{fig:exp3-sub-c}, we encounter a situation where the value of deception achieved under the sure winning condition by strategically placing traps is greater than the value obtained by placing fake targets. This outcome seemingly contradicts the assertions made in Theorem~\ref{thm:traps-fakes-subset-relation}. In Figure~\ref{fig:exp3-sub-d}, we encounter another scenario where the value of deception obtained by deploying either traps or fake targets under the almost-sure winning condition exceeds the value attained by placing fake targets under the sure winning condition, thereby deviating from the anticipated results stipulated in Theorem~\ref{thm:daswin-dswin-comparison}. However, these disparities can be attributed to the greedy approach employed by Algorithm~\ref{alg:greedymax}. For instance, in Figure~\ref{fig:exp3-sub-c}, Algorithm~\ref{alg:greedymax} determined the states \texttt{s22, s80} as the first two fake targets and \texttt{s101, s74} as the first two traps. To understand these choices, let us examine the values of deception for the following placements:
\begin{alignat*}{2}
		&\vod(\emptyset, \{\mathtt{s22}\}) = 0.7500, \qquad &&\vod(\emptyset, \{\mathtt{s101}\}) = 0.6805 \\ 
		&\vod(\{\mathtt{s22}\}, \emptyset) = 0.4166, \qquad &&\vod(\{\mathtt{s101}\}, \emptyset) = 0.6805 \\ 
		&\vod(\emptyset, \{\mathtt{s101}, \mathtt{s74}\}) = 0.8055, \qquad &&\vod(\emptyset, \{\mathtt{s22}, \mathtt{s80}\}) = 0.7916
\end{alignat*}
We observe that the value of deception attained by placing fake targets at \texttt{s101, s74} is higher than that obtained by placing them at \texttt{s22, s80}. Thus, we would expect the algorithm to select the latter states to be the fake targets. However, the Algorithm~\ref{alg:greedymax} follows a greedy approach.  Since the value of deception when the first fake target is placed at \texttt{s22} is greater than when it is placed at all other states, including \texttt{s101}, \texttt{s22} is selected as the first fake target. Given the first fake target, the choice of the second fake target that yields that maximum value of deception is \texttt{s80}. In other words, the deviation from theoretical expectations is due to the sub-optimal placement suggested by the greedy algorithm.

We conclude by noting that the value of deception increases monotonically until the value of $1.0$ is attained. In any game, the value of $1.0$ is guaranteed to be achieved if there is no bound on the number of decoys. In the worst case (for example, consider star topology), a decoy must be placed at every state for the value of deception to be one. 

%We conclude with some general observations by observing the results of $200$ games. First, we observe that the topology of game graph plays a key role in determining how advantageous the fake targets could be over traps. In many cases, we observe that the benefit is the same. Second, when comparing the value of deception under sure and almost-sure winning conditions, stronger connectivity among states within $\win_2(\game, F)$ often indicates a greater advantage of placing fake targets over traps is likely. Ultimately, if the number of decoys is not limited, then value of deception will eventually become $1$. In the worst case, for example, in star topology, a decoy must be placed at every state to obtain value of deception to be one. Finally, the value of deception is a monotonically increasing function until it reaches $1.0$. After that it remains flat. 
%\ak{Revise language.}

\begin{figure*}[!ht]
	\centering
	\begin{multicols}{2}
		\centering
		\begin{subfigure}{\linewidth}
			\includegraphics[width=0.9\linewidth]{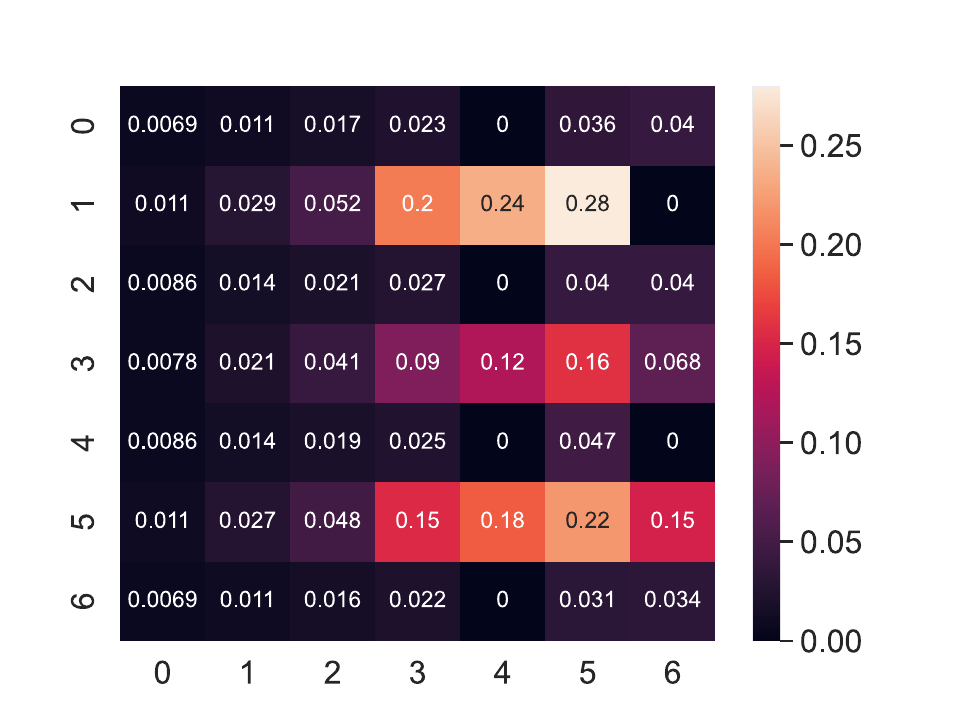}
			\caption{Scenario (A): Value of deception when the first trap is placed within the given cell.}
			\label{fig:sub-a}
		\end{subfigure}
		\par
		\begin{subfigure}{\linewidth}
			\includegraphics[width=0.9\linewidth]{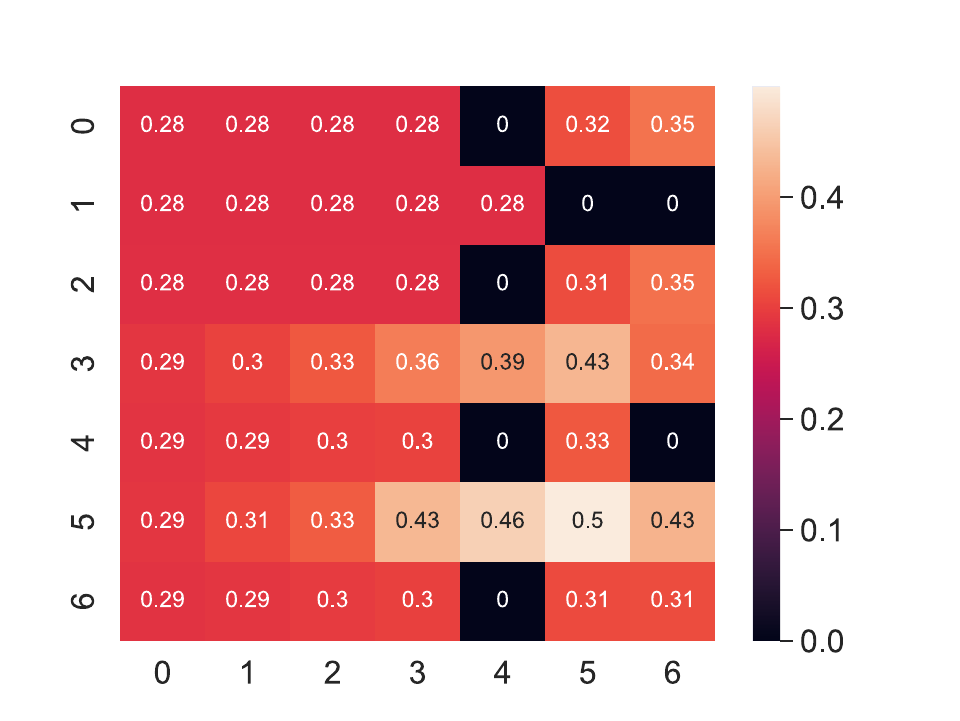}
			\caption{Scenario (A): Value of deception when first trap is placed at $(1, 5)$ and second trap is placed within the given cell.}
			\label{fig:sub-b}
		\end{subfigure}
		\par
		\begin{subfigure}{\linewidth}
			\includegraphics[width=0.9\linewidth]{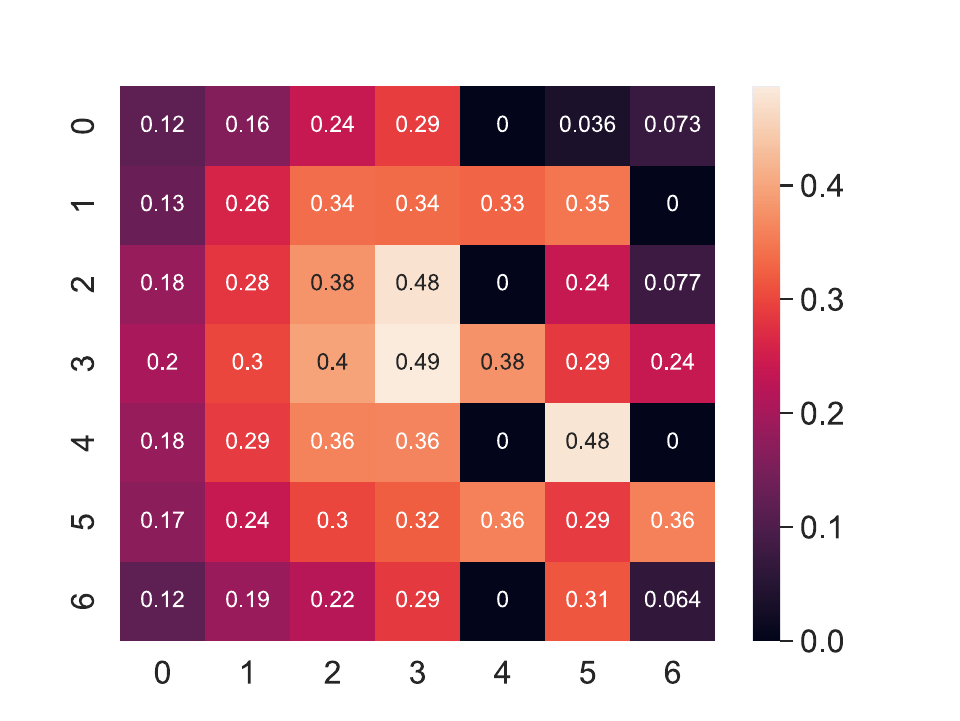}
			\caption{Scenario (B): Value of deception when first fake cheese is placed within the given cell.}
			\label{fig:sub-c}
		\end{subfigure}
		\par
		\begin{subfigure}{\linewidth}
			\includegraphics[width=0.9\linewidth]{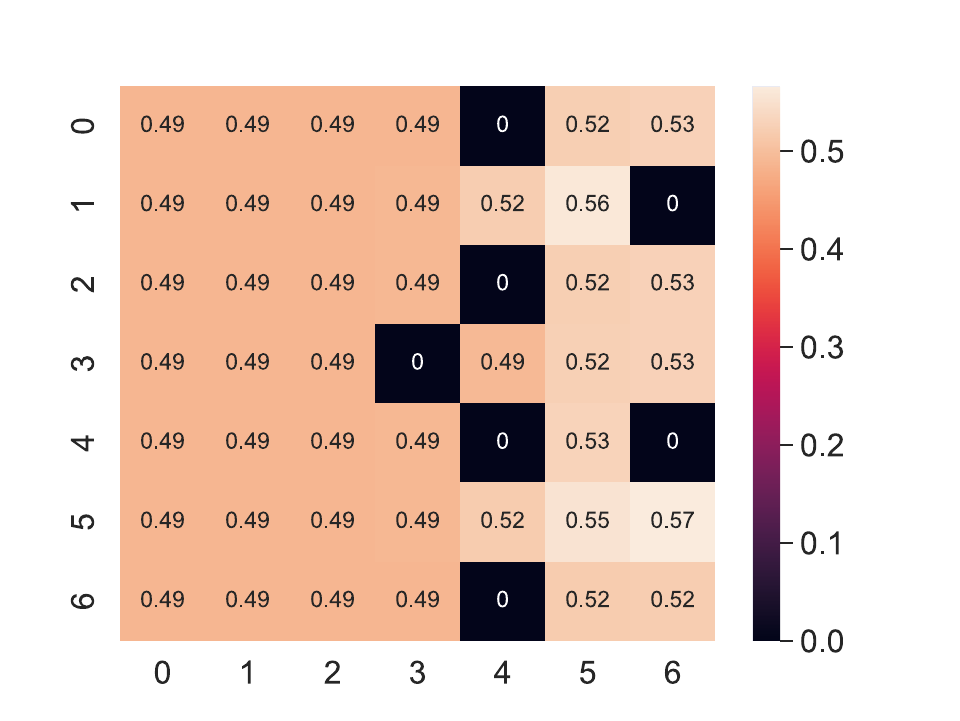}
			\caption{Scenario (B): Value of deception when first fake cheese is placed at \replaced{$(3, 3)$}{$(4, 5)$} and a trap is placed within the given cell.}
			\label{fig:sub-d}
		\end{subfigure}
		\par
		\begin{subfigure}{\linewidth}
			\includegraphics[width=\linewidth]{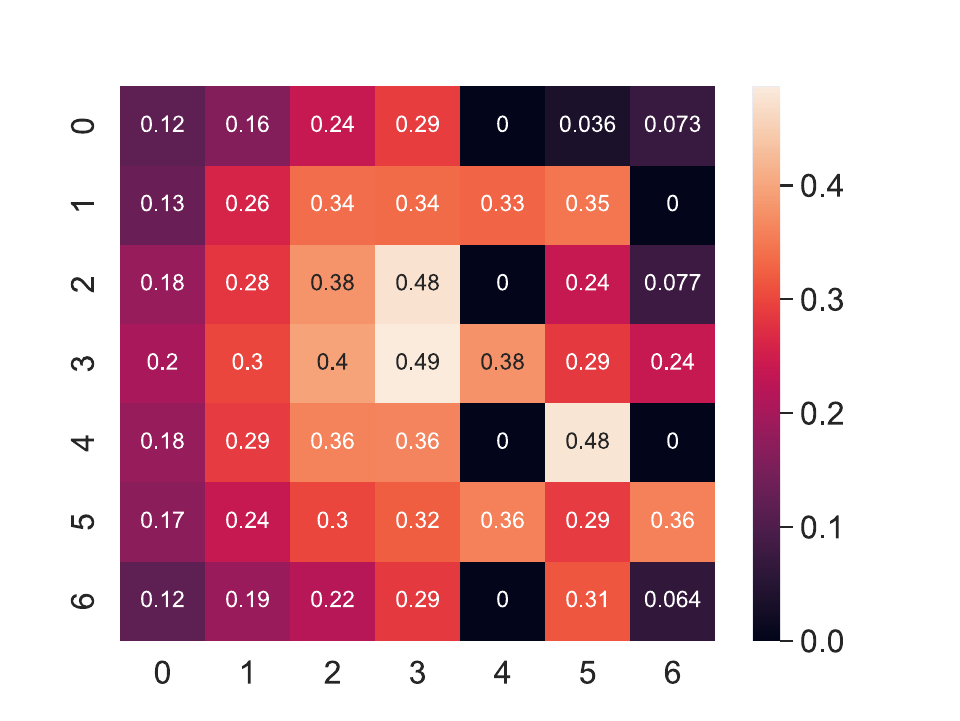}
			\caption{Scenario (C): Value of deception when first fake cheese is placed within the given cell.}
			\label{fig:sub-e}
		\end{subfigure}
		\par
		\begin{subfigure}{\linewidth}
			\includegraphics[width=0.9\linewidth]{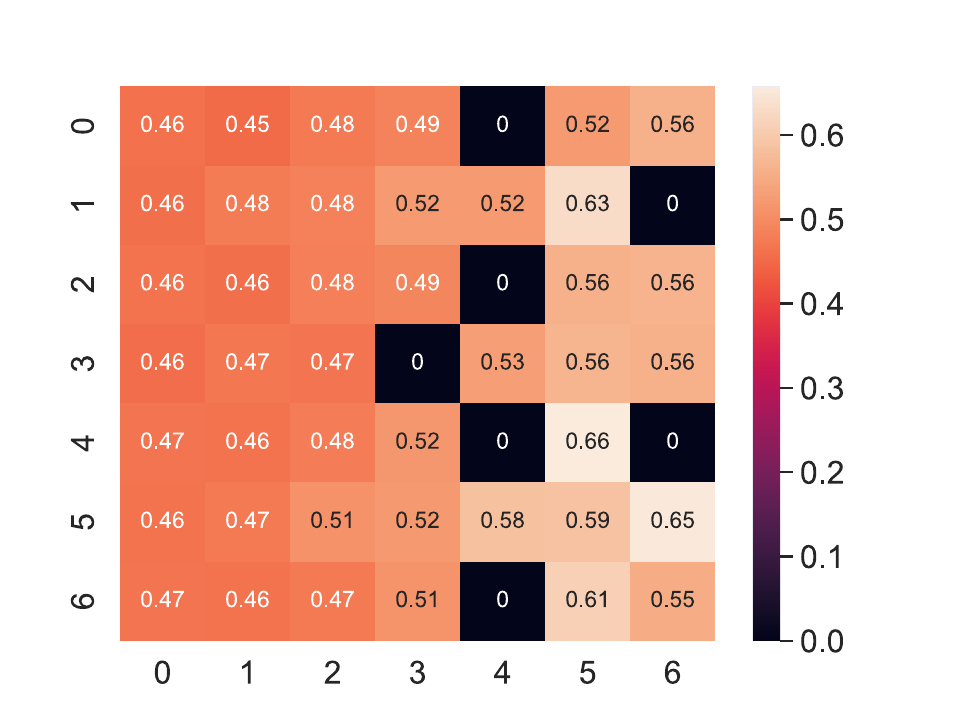}
			\caption{Scenario (C): Value of deception when first fake cheese is placed at $(4, 5)$ and the second fake cheese is placed within the given cell.}
			\label{fig:sub-f}
		\end{subfigure}
	\end{multicols}
	%	\caption{The value in each cell denotes the value of deception obtained by placing first or the second decoy in that cell. In case of second decoy, the value denotes the value of deception obtained by placing the second decoy in that cell, given the first decoy is already placed.}
	\caption{The values of deception compared by Algorithm~\ref{alg:greedymax} in each of the two iterations to determine the two decoys for scenarios (A)-(C).}
	\label{fig:results}
\end{figure*}

	\section{Conclusion}
	\label{sec:conclusion}
	
	We have studied the problem of optimally allocating two types of deception resources (decoys) and jointly synthesizing a deceptive strategy for P1 in a game on graph with incomplete information. The resulting decoy placement disinforms P2 about the game dynamics, and the deceptive strategy leverages P2's lack of information to enforce a visit to a decoy state. Our contribution addresses two significant challenges: First, we introduce a hypergame on graph model to capture P2's incomplete information and P1's awareness of P2's misperception. We define two solution concepts: stealthy deceptive sure-winning and stealthy deceptive almost-sure-winning strategies that help identify the P1 strategies subjectively rationalizable for P2 and allow P1 to enforce a visit to a decoy state. These solution concepts establish a connection between the solution concept of subjective rationalizability, which is primarily studied for normal-form games with incomplete information, and sure/almost-sure winning, which is defined for the qualitative analysis of games on graphs, also referred to as $\omega$-regular games. Notably, these strategies are designed to be stealthy, as they only employ actions that are subjectively rationalizable for P2 in its perceptual game. This ensures that P2 remains unaware of the deception being employed. Second, in order to efficiently search for an optimal decoy placement in a combinatorially large space, we used compositional synthesis from formal methods and proved that the objective function of the optimal decoy allocation problem is a monotone, non-decreasing, and under certain conditions, sub- or super-modular. Using this fact, we proposed a greedy algorithm, which runs in polynomial time, to compute a decoy placement, which is $(1- 1/e)$-approximate when the objective function is either sub- or super-modular.
	
	We identify two directions for future directions. One direction is to drop the \emph{stealthiness} requirement for P1's strategy. In practice, there are several applications, \eg, conflict \cite{fraser1979solving}, where stealthiness is essential. However, there also exist applications, \eg, human-robot interaction \cite{shim2013taxonomy}, where stealthiness is not necessary. In such cases, the model should capture P2's ability to learn about the decoys during the interaction. Another direction is to consider a similar placement problem for the class of concurrent stochastic games on graphs \cite{deAlfaro2007concurrent}. 

\section*{Declaration of Generative AI and AI-assisted technologies in the writing process}
	During the preparation of this work, the author(s) used ChatGPT in order to improve the readability and language. After using this tool/service, the author(s) reviewed and edited the content as needed and take(s) full responsibility for the content of the publication.
	
\section*{Acknowledgements}
Research was sponsored by the Army Research Office and was accomplished under Grant Number W911NF-22-1-0166 and W911NF-22-1-0034 and in part by NSF under grant No. 2144113. The views and conclusions contained in this document are those of the authors and should not be interpreted as representing the official policies, either expressed or implied, of the Army Research Office or the U.S. Government. The U.S. Government is authorized to reproduce and distribute reprints for
Government purposes notwithstanding any copyright notation herein.

%	\vskip 0.2in
	\bibliographystyle{elsarticle-num-names} 
	\bibliography{main_aij}

\end{document}